\newtheorem{thm}{Theorem}[section]
\newtheorem{prop}[thm]{Proposition}
\newtheorem{cor}[thm]{Corollary}
\theoremstyle{definition}
\theoremstyle{remark}
\newtheorem{rem}{Remark}[section]
\def\eqref#1{(\ref{#1})}
\begin{document}
\begin{titlepage}
\title{An effect of large permanent charge: Decreasing flux to zero with increasing transmembrane potential to infinity}
\author{Liwei Zhang\footnote{School of Mathematical Sciences, Shanghai Jiao Tong University, 800 Dongchuan Road, Minhang District, Shanghai 200240, P. R. China ({\tt zhangliwei01@sjtu.edu.cn}).},\,
Bob Eisenberg\footnote{Department of Molecular Biophysics and Physiology, Rush Medical Center, 1759 Harrison St., Chicago, Illinois 60612 ({\tt beisenbe@rush.edu}).}, and 
Weishi Liu\footnote{Department of Mathematics, University of Kansas, 1460 Jayhawk Blvd.,   Lawrence, Kansas 66045 
({\tt wsliu@ku.edu}).}}
\date{\empty}
\end{titlepage}

\maketitle

\begin{abstract} In this work, we examine effects of large permanent charges on ionic flow through ion channels  based on a 
quasi-one dimensional Poisson-Nernst-Planck model. 
It turns out large {\em positive} permanent charges inhibit the flux of cation as expected, but strikingly,  as the transmembrane electrochemical potential for anion increases in a particular way, the flux of anion decreases. The latter phenomenon was observed experimentally but the cause seemed to be unclear. The mechanisms  for these phenomena are examined with the help of the profiles of the ionic concentrations, electric fields and electrochemical potentials. 
The underlying reasons for the near zero flux of cation and for the decreasing flux of anion are shown to be different over different regions of the permanent charge.
Our model is   oversimplified. More structural detail and more correlations between ions can and should be included. But the basic finding   seems striking and important and deserving of further investigation.
\end{abstract}
  
\section{Introduction}
\setcounter{equation}{0}

Membranes define biological cells. They provide the barrier that separates, and defines the inside of a cell from the rest of the world.  Membranes are much more than just a barrier. They provide pathways for selected molecules to enter and leave cells. The barriers must  not be perfect or cells would  soon die from lack of energy or drown in their waste. Biological cells need energy to survive and that is provided (in almost all cases) by substances that must cross the membrane.

Substances cross membranes through proteins specialized for the task. For a very long time (\cite{Hod51,Uss49})
 these proteins have been separated into two classes, channels and transporters (\cite{Tos89}),
and studied in two traditions, one of electrophysiology (\cite{Hille01, ZT15}),
 the other of enzymology (\cite{Hille89,SL14}),
although the distinction between the two approaches was less than clearcut (\cite{Eis90}).
 
 Channels have been viewed fundamentally as charged `holes in the wall' (created by the membrane) that could open and close
 (\cite{SN95}) but, once open, the channel followed simple laws of electrodiffusion (\cite{Eis11}).

Transporters were viewed as more biological devices, involving conformation changes, coupling to energy sources (either ATP hydrolysis or the movement of other ions), with quantitative description much more difficult, particularly if the description was to be transferrable with parameters that were independent of conditions.

The enormous literature can be sampled in \cite{ASAG98, BBHS69, BL99,SL14,Tos89}.
 Structural biology has shown that transporters and channels have very similar structures (\cite{CMWSFRCBZ, LSRKGNK, SKSKKMN, WM17}).
 Biophysics has shown that the processes that open and close channels (`activation' and `inactivation') can be coupled to give properties rather like transporters. Physics has shown that the electric fields assumed to be constant in classical electrophysiology must depend on the distribution of charge in the channel and surrounding solutions (\cite{Eis96})
and so must change with experimental conditions and with location.

The detailed properties of open channels have not been compared with transporters in
the modern literature, as far as we know, certainly not using models that satisfy the physical requirement that potential profiles (i.e., electric fields) be computed from (and thus be consistent with) all the charges in the system.

Here we consider a simple model of a permanently open ion channel. (We leave gating for later consideration.) Most biologists imagine that if the driving force for electrodiffusion is increased--that is to say, if the gradient of electrochemical potential across the channel is increased in magnitude--the flux through the channel should increase. We show here that is not always the case. Consider a channel with large permanent charge and the flux of ions with the  opposite sign as the permanent charge (called  counter ions in the ion exchange literature (\cite{Hel62})
 or majority charge carriers in the semiconductor literature (\cite{Pie96, Sze81, VGK10}).
 The flux of ions with the opposite sign as the permanent charge in a channel can decrease dramatically as the driving force increases -- we term this phenomenon as {\em the declining phenomenon}. 
More precisely, if the concentration of the ion is held fixed on one side of the channel, and the concentration decreased on the other (`trans') side of the channel, the flux of the counter ions can decrease if the permanent charge density is large, as we show here. A depletion zone can form that prevents flow even though the driving force increases to large values.   It is worthwhile to emphasize  that if one increases the transmembrane electrochemical potential in a different manner, such as, by increasing the transmembrane electric potential or the concentration of the ion at one side of the channel, then one does not have the declining phenomenon (see Remark \ref{largemu}).

The decline of flux with {\em trans} concentration has been considered a particular, even defining proerties of transporters, involving conformation changes of state and other properties of proteins less well defined (physically) than electrodiffusion (\cite{KS59, Uss47}).
 Declining flux has been called exchange diffusion (in the early transport literature) and self-exchange more recently and is an example of obligatory exchange. Obligatory exchange is a wide spread, nearly universal property of the nearly eight hundred transporters known twenty years ago (\cite{ASAG98,GS97}) with many more known today (\cite{TK15}).
 Obligatory exchange is ascribed widely to changes in the structure of transport proteins, to conformation changes in the spatial distribution of the mass of the protein (\cite{FG86}).
 Obligatory exchange is often thought to be a special property of transporters not found in channels.
 
 The structure of many transporters is now known thanks to the remarkable advances of cryo-electron microscopy, recognized in the 2018 Nobel Prize. A transporter (of one amino acid sequence and thus of a perfectly homogeneous molecular type) exists in different states. Each state is said to have a different conformation meaning, in physical language, that the spatial distribution of mass is different in different states, and the distributions of the different states form disjoint sets, with no overlap. The movement of ions is not directly controlled or driven by the conformation of mass, however. Rather the distribution of mass produces a distribution of steric repulsion forces, and a spatial distribution of electrical forces (because the mass is associated with charge, mostly permanent charge of acid and base groups of the protein, but also significant polarization charge, as well). It is the conformation of these forces that determines the movement of ions. The spatial distribution of forces contributes to the potential of mean force reported in simulations of molecular dynamics.
 
This paper shows that channels with one spatial distribution of mass  can have properties of self-exchange (for  majority charge carrier counter ions) if the density of permanent charge is large. The spatial distribution of electrical forces can change and create a depletion zone that controls ion movement, while the spatial distribution of the mass of the protein does not change. The conformation governing current flow is the conformation of the electric field -- the depletion zone -- more than the conformation of mass, in the model considered here.
 The current flow of counter ions is much greater than the current flow of co-ions because there are many more counter ions than co-ions near the permanent charge. Transporters almost always allow much less current flow than channels.

It should be emphasized that the depletion zone considered here arises from the self-consistent solution of the Poisson-Nernst-Planck (PNP) equations of a specific model (large permanent charge,   counter ion transport) and that parameters are not adjusted in any way to create or modify the phenomena. This is in stark contrast to calculations of chemical kinetic models that involve many adjustable parameters, without clear physical meaning, and equations that do not conserve current (\cite{Eis14}).

Depletion zones play crucial roles in the behavior of nonlinear semiconductor devices although there they usually arise at locations in PN diodes where permanent charge (called doping in that literature) changes sign. Depletion zones of the type studied here are likely to occur in semiconductors but have received little attention because they have less dramatic effects than those in diodes (\cite{Pie96, Sze81, VGK10})
that follow drift diffusion and PNP equations rather like those of open ionic channels (\cite{Eis12,Eis96}).
 The possible role of depletion zones in channel function has been the source of speculation and experimental verification (\cite{Eis96-1,MVWMRE}). It is striking that depletion zones can change the conformation of the electric field and mimic the obligatory exchange traditionally thought to occur only in transporters. Depletion zones can create plastic electric fields whose change in shape dominate the flux through a channel of fixed structure.

Our model is of course oversimplified as are any models, or even simulations in apparent atomic detail, of condensed phases. More structural detail and more correlations between ions can and should be included. But the basic finding that large permanent charge can produce depletion zones and those regions can produce a decline of  counter ion flux when driving forces increase seems striking and important and deserving of further investigation.
 
 The rest of the paper is organized as follows. In Section \ref{setup}, we describe the quasi-one-dimensional Poisson-Nernst-Planck type model  and its dimensionless form for ionic flow. Assumptions are specified with two key assumptions: a dimensionless parameter $\varepsilon$ defined in (\ref{rescale}) is small and a dimensionless parameter $Q_0>0$ defined in (\ref{Q}) from the permanent charge is large.  In Section \ref{formulas},   approximation formulas  (in small $\varepsilon$ and $\nu=1/{Q_0}$) for fluxes are provided, which have a number of  non-trivial consequences. One of the apparent non-trivial consequences is that the leading order term $J_{10}$ of cation flux is zero, independent of the values of transmembrane electrochemical potential of the cation. The mechanism of this distinguished effect of large (positive) permanent charge is examined in details in Section \ref{Dyn4J10} with the help of the internal dynamics (profiles of the electric field, cation concentrations and electrochemical potential).  It turns out the mechanism for $J_{10}=0$ is different over different regions of permanent charge.
 In Section \ref{Dyn4declin}, a rather counter-intuitive {\em declining  phenomenon} -- increasing  of anion transmembrane electrochemical potential leads to  decreasing of anion flux -- is shown to be consistent with 
 our analytical result. Thus,   for the first time (to the best of our knowledge),  a mechanism of {\em large} permanent charge for such a phenomenon is revealed.  The mechanism is then examined   in details again with the help of the internal dynamics. We conclude this paper with a general remark in Section \ref{conclude}.

\section{Classical PNP with  large (positive) permanent charge}\label{setup}
\setcounter{equation}{0}
 
 \subsection{A quasi-one-dimensional PNP model}
Our study  is  based on a   quasi-one-dimensional PNP model for open channels, first proposed in \cite{NE} and, for a special case, rigorously justified   in \cite{LW}.
For a mixture of  $n$ ion species,
a quasi-one-dimensional    PNP  model   is
\begin{align}\begin{split}\label{ssPNP}
&\frac{1}{A(X)}\frac{d}{dX}\Big(\varepsilon_r(X)\varepsilon_0A(X)\frac{d\Phi}{dX}\Big)=-e_0\Big(\sum_{s=1}^nz_sC_s+{\cal Q}(X)\Big),\\
 & \frac{d{\cal J}_k}{dX}=0, \quad -{\cal J}_k=\frac{1}{k_BT}{\cal D}_k(X)A(X)C_k\frac{d\mu_k}{d X}, \quad
 k=1,2,\cdots, n
\end{split}
\end{align}
where $X\in [a_0,b_0]$ is the coordinate along the axis of the channel and baths, $A(X)$ is the
  cross-sectional area  of the channel at the location $X$, $e_0$ is the elementary charge, $\varepsilon_0$ is the vacuum permittivity, $\varepsilon_r(X)$ is the relative dielectric coefficient, ${\cal Q}(X)$ is the permanent charge density, $k_B$ is the Boltzmann constant, $T$ is the absolute temperature, $\Phi$ is the electrical potential,  and, for the $k$th ion species, $C_k$ is the concentration, $z_k$ is the valence, ${\cal D}_k(X)$ is the diffusion coefficient, $\mu_k$ is the electrochemical potential, and ${\cal J}_k$ is the flux density.

Equipped with system (\ref{ssPNP}),  a meaningful    boundary condition  for ionic flow through ion channels (see,   \cite{EL07} for a reasoning) is, for $k=1,2,\cdots, n$,
\begin{equation}
\Phi(a_0)={\cal V}, \ \ C_k(a_0)={\cal L}_k>0; \quad \Phi(b_0)=0,  \ \
C_k(b_0)={\cal R}_k>0. \label{ssBV}
\end{equation}

Mathematically, we will be interested in solutions of the boundary value problem (BVP) (\ref{ssPNP}) and (\ref{ssBV}). 
An important measurement for properties of ion channels is the {\em I-V (current-voltage) relation} where the current ${\cal I}$ depends on the transmembrane potential (voltage) ${\cal V}$ and is given by 
\[{\cal I}=\sum_{s=1}^nz_s{\cal J}_s({\cal V})\]
where ${\cal J}_k({\cal V})$'s are determined by the BVP (\ref{ssPNP}) and (\ref{ssBV}) for fixed ${\cal L}_k$'s and ${\cal R}_k$'s.  
Of course, the relations of individual fluxes ${\cal J}_k$'s to  ${\cal V}$ contain more information (\cite{JEL17})
but it is much harder to experimentally measure the individual fluxes ${\cal J}_k$'s.

\subsubsection{Electroneutrality boundary conditions}
In relation to typical experimental designs, the positions $X=a_0$ and $X=b_0$ are located in the baths separated by the channel. They are  the locations of two electrodes that are applied to control or drive the ionic flow through the ion channel.
Ideally, the experimental designs should not affect the intrinsic ionic flow properties so one would like to design the boundary conditions to meet the so-called electroneutrality 
\begin{align}\label{electroneutral}
\sum_{s=1}^nz_s{\cal L}_s=0=\sum_{s=1}^nz_s{\cal R}_s.
\end{align}
The reason for this is that, otherwise, there will be sharp boundary layers which cause significant changes  (large gradients) of the electric potential and concentrations near the boundaries so that a measurement of these values has non-trivial uncertainties. 
One clever design to remedy this potential problem is the ``four-electrode-design": two ``outer-electrodes'' in the baths far away from the ends of the ion channel to provide the driving force and two ``inner-electrodes'' in the baths near the ends of the ion channel to measure the electrical potential and the concentrations as the ``real'' boundary conditions for the ionic flow. At the inner electrodes locations, the electroneutrality conditions are reasonably satisfied, and hence, the electric potential and concentrations vary slowly and a measurement of these values would be robust. We point out that the geometric singular perturbation framework for PNP type  models developed in \cite{EL07,ELX15,JL12, LLYZ13,Liu05,  Liu09} can treat the case without electroneutrality assumption; in fact, the boundary layers can be determined by the boundary conditions directly. 

\subsubsection{Electrochemical potentials} The electrochemical potentials $\mu_k(X)$ consists of the ideal component $\mu_k^{id}$ and the excess component $\mu_k^{ex}$ where
the ideal component 
\begin{align}\label{idealComp}
\mu_k^{id}(X)=z_ke_0\Phi(X)+k_BT\ln \frac{C_k(X)}{C_0}
\end{align}
is the point-charge contribution where $C_0$ is a characteristic concentration, and the excess component $\mu_k^{ex}(x)$ accounts for ion size effects.  As explained above, although not totally physical for ion channel problems in general, we will consider only the ideal component in this work  that can act as    guidance  for further studies of more accurate models with excess component.  

\subsubsection{Permanent charges and channel geometry} The permanent charge ${\cal Q}(X)$ is a  simplified mathematical   model for ion channel (protein) structure.  It is determined by the spatial distribution of amino acids in the channel wall, the acid
 (negative) and base (positive) side chains, more than anything else (\cite{Eis96}).
  We will assume ${\cal Q}(X)$ is known and take an oversimplified description to capture some essence of its effects. For this paper, we take it to be as in (\cite{EL07}), for some $a_0<A<B<b_0$,
\begin{align}\label{calQ}
{\cal Q}(X)=\left\{\begin{array}{cc}
0, & X\in [a_0,A)\cup (B,b_0]\\
2{\cal Q}_0, &X\in (A,B).
\end{array}\right.
\end{align}
We will be interested in the case where $|{\cal Q}_0|$ is large relative to ${\cal L}_k$'s and ${\cal R}_k$'s.

 The cross-sectional area $A(X)$ typically  has the property that $A(X)$ is much smaller for $X\in (A,B)$ (the neck region) than that for $X\not\in (A,B)$. It is interesting to note that, in \cite{JLZ15}, the authors showed that the neck of the channel should be ``narrow" (small $A(X)$ for $X\in (A,B)$) and ``short" (small $B-A$) to optimize an effect of permanent charge.
 
\subsubsection{Dielectric coefficient and diffusion coefficients} We assume that 
  \begin{align}\label{diffusionCoeff}
 \varepsilon(X)=\varepsilon_r\,\mbox{ is a constant,  and }\; {\cal D}_k(X)= {\cal D}(X){\cal D}_k
  \end{align} 
  for some dimensionless  function ${\cal D}(X)$  (same for all $k$) and dimensional constant ${\cal D}_k$.
 
Note that the assumption ${\cal D}_k(X)= {\cal D}(X){\cal D}_k$  is equivalent to the statement that ${\cal D}_k(X)/{{\cal D}_j(X)}$ is a constant for $k\neq j$.   Roughly speaking, the assumption says that, 
as the environment varies from location to location, its influences on the two  diffusion coefficients ${\cal D}_k(X)$ and ${\cal D}_j(X)$ at the same location $x$ are   the same; that is, the two diffusion coefficients  vary from one common environment to another common environment in a way so that their ratio is independent of locations.     
This is   not a  justification of this assumption but only an explanation of what it reflects.  

\subsubsection{Main assumptions} In the sequel, we   assume {\em the boundary electroneutrality condition in (\ref{electroneutral}), the form of the permanent charge in (\ref{calQ}), constant dielectric coefficient and diffusion coefficient property in (\ref{diffusionCoeff}),
 and the electrochemical potential is ideal $\mu_k=\mu_k^{id}$ in (\ref{idealComp}).} There are two key assumptions for this work to be discussed in terms of dimensionless variables below. 
 
\subsection{Dimensionless form of the quasi-one-dimensional PNP model}
The following rescaling (see \cite{Gil99}) or its variations  have been widely used for convenience of mathematical analysis.   \medskip
 
  Let $C_0$ be a characteristic concentration of the ion solution. 
 We now make a dimensionless  re-scaling of the  variables in  system (\ref{ssPNP}) as follows.
\begin{align}\label{rescale}\begin{split}
&\varepsilon^2=\frac{\varepsilon_r\varepsilon_0k_BT}{e_0^2(b_0-a_0)^2C_0},\; x=\frac{X-a_0}{b_0-a_0},\;  h(x)=\frac{A(X)}{(b_0-a_0)^2},
\;  Q(x)=\frac{{\cal Q}(X)}{C_0},  \\
&D(x)={\cal D}(X),\; \phi(x)=\frac{e_0}{k_BT}\Phi(X), \; c_k(x)=\frac{C_k(X)}{C_0}, \;  
 J_k=\frac{{\cal J}_k}{(b_0-a_0)C_0   {\cal D}_k}. 
\end{split}
\end{align}
 
 The dimensionless quantity $Q(x)$ from the permanent charge ${\cal Q}$ in (\ref{calQ}) becomes
\begin{align}\label{Q}
Q(x)=\left\{\begin{array}{cc}
0, & x\in [0,a)\cup (b,1]\\
2Q_0, &x\in (a,b),
\end{array}\right.
\end{align}
where  
\[0<a=\frac{A-a_0}{b_0-a_0}<b=\frac{B-a_0}{b_0-a_0}<1,\]
and, in terms of the dimensionless quantities, the subinterval $(a,b)\subset (0,1)$ corresponds to the neck region $[A,B]$.

 \noindent
{\bf Key assumptions.} {\em The   parameter $\varepsilon$ is small and the parameter $Q_0>0$ is large.} 
 
 The case where $Q_0<0$ with $|Q_0|$ large can be treated in the same way.
 The   assumption of smallness of the dimensionless parameter $\varepsilon$  is reasonable particularly because we are interested in the limit of large $|Q_0|$. Thus, we can   take large but fixed characteristic concentration $C_0$ in the definition of $\varepsilon$ (\ref{rescale}). 
  For example, if $b_0-a_0=25nm$ and $C_0=10 M$, then the dimensionless parameter $\varepsilon\approx 10^{-3}$ (\cite{EL17}). The mathematical consequence of   smallness of $\varepsilon$  is that the BVP  (\ref{PNP}) and  (\ref{BVO}) can be treated as  a {\em singularly perturbed problem}.  
A general geometric framework for analyzing the singularly perturbed BVP of PNP type systems has been developed in \cite{EL07, Liu05, Liu09, LX15} for classical PNP systems and in \cite{JL12, LLYZ13, LTZ12} for PNP systems with finite ion sizes.   

 \medskip
  
  In terms of the new variables in (\ref{rescale}),   the  BVP (\ref{ssPNP}) and (\ref{ssBV}) becomes 
     \begin{align}\label{PNP}\begin{split}
&\frac{\varepsilon^2}{h(x)}\frac{d}{dx}\left(h(x)\frac{d\phi}{dx}\right)=-\sum_{s=1}^nz_s
c_s -Q(x),\\
&\frac{d J_k}{dx}=0, \quad -  J_k=\frac{1}{k_BT}D(x)h(x)c_k\frac{d \mu_k}{d x},
\end{split}
\end{align}
with  boundary conditions at $x=0$ and $x=1$
\begin{align}\label{BVO}\begin{split}
\phi(0)=&V,\; c_k(0)=L_k; \;
 \phi(1)=0,\; c_k(1)=R_k,
\end{split}
\end{align}
where 
\[V:=\frac{e_0}{k_BT}{\cal V},\quad L_k:=\frac{{\cal L}_k}{C_0},\quad R_k:=\frac{{\cal R}_k}{C_0}.\]
 
In this work, we  will consider the  BVP  (\ref{PNP}) and  (\ref{BVO}) for ionic mixtures with one cation of valence $z_1=1$ and an anion of valence $z_{2}=-1$. We will be interested in properties of ionic flow for  large $|Q_0|$.
 It turns out  
 \[\alpha=\frac{H(a)}{H(1)}\;\mbox{ and }\; \beta=\frac{H(b)}{H(1)}\;\;
\mbox{ where }\; H(x)=\int_0^x\frac{1}{D(s)h(s)}ds\]
are the key parameters that characterize the effect of channel geometry and location of permanent charge.
 The physical meanings of these parameters could be seen from the special case when $h(x)=h_0$ and $D(x)=D_0$ are constants. In this case, 
\[H(x)=\frac{x}{D_0h_0},\]
which is {\em proportional} to the (scaled) length $x$ of the region over $[0,x]$ of the channel (conductive material), {\em inversely proportional} to the (scaled) cross-sectional area $h_0$ of the channel and to the (scaled) diffusion coefficient or electric conductivity $D_0$; that is, $H(x)$ is the {\em resistance} of the portion of the channel over $[0,x]$.

\section{Approximations of fluxes \cite{ZL17}.}\label{formulas}
\setcounter{equation}{0}

We now recall the results on approximations of $(\phi, c_1,c_2,J_1,J_2)$ and $\mu_k$'s from \cite{ZL17}
for the case where $z_1=1$ and $z_2=-1$ with $L_1=L_2=L$ and $R_1=R_2=R$. 
 
Based on the   assumptions   that   $\varepsilon>0$ is SMALL and $Q_0>0$ is LARGE, one has expansions of fluxes in $\varepsilon$ and in $\nu=1/{Q_0}$ near $\varepsilon=0$ and $\nu=0$.   
  
  First, one expands the fluxes in $\varepsilon$ as
  \[J_1(\varepsilon;\nu)=J_1(\nu)+O(\varepsilon)\;\mbox{ and }\; J_2(\varepsilon)=J_2(\nu)+O(\varepsilon),\]
 where $J_k(\nu)$, depending also on the parameters $(V, L, R, H(1), \alpha,\beta)$,
 are the zeroth order in $\varepsilon$ terms of the fluxes. 
 Then, one  expands  $J_k(\nu)$ about   $\nu$ as 
\begin{align*} 
J_1(\nu)=J_{10}+J_{11}\nu+O(\nu^2)\;\mbox{ and }\; J_2(\nu)=J_{20}+J_{21}\nu+O(\nu^2).
\end{align*}
Thus, 
\begin{align}\label{expinnu}
J_1(\varepsilon;\nu)=J_{10}+J_{11}\nu+O(\nu^2,\varepsilon)\;\mbox{ and }\; J_2(\varepsilon;\nu)=J_{20}+J_{21}\nu+O(\nu^2,\varepsilon).
\end{align}
Note that $J_{10}$ and $J_{20}$, depending on $(V, L, R, H(1),\alpha,\beta)$, contain the leading order effect of the {\em small} $\nu$ (or equivalently, {\em large} $Q_0$). 

The following result is established in \cite{ZL17}.
  \begin{prop}\label{expJs} One has 
\begin{align}\label{ejJ}\begin{split}
J_{10}&=0,\\
J_{11}&=\frac{1}{2H(1)(\beta-\alpha)}\left(\frac{(1-\beta)L+\alpha R}{(1-\beta)\sqrt{e^{V}L}+\alpha\sqrt{R}}\right)^2(e^{V}L-R); \\
J_{20}&=\frac{2\sqrt{LR}}{H(1)}\frac{1}{(1-\beta)\sqrt{L}+\alpha\sqrt{e^{-V}R}}(\sqrt{e^{-V}L}-\sqrt{R}),\\
J_{21}&=\frac{(1-\beta)L+\alpha R}{2H(1)(\beta-\alpha)(\sqrt{e^{-V}L}-\sqrt{R})((1-\beta)\sqrt{L}+\alpha\sqrt{e^{-V}R})^3} \\
&\quad\left\{-2(\beta-\alpha)^2(\sqrt{e^{-V}L}-\sqrt{R})^2\sqrt{e^{-V}}LR\right.\\
&\quad\left.+((1-\beta)L+\alpha R)(L-Re^{-V})\left[\frac{(1-\beta)L+\alpha R}{2}\sqrt{e^{-V}}\ln{\frac{L}{e^VR}}\right.\right.\\
&\qquad\qquad\left.\left.-((1-\beta)\sqrt{L}+\alpha\sqrt{e^{-V}R})(\sqrt{e^{-V}L}-\sqrt{R})\right]\right\}.
\end{split}
\end{align}
\end{prop}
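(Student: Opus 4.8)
The result is attributed to \cite{ZL17}; the route I would take to re-derive it is the geometric singular perturbation (GSP) treatment of the limit $\varepsilon\to 0$ developed in \cite{EL07,Liu05,Liu09}, followed by an asymptotic expansion of the reduced problem in $\nu=1/Q_0$. In the GSP picture the singular orbit of the BVP \eqref{PNP}--\eqref{BVO} is assembled from three \emph{regular} (slow) layers over $[0,a]$, $(a,b)$, $(b,1]$ on which electroneutrality $z_1c_1+z_2c_2+Q=0$ holds, joined by \emph{fast} (internal/boundary) layers at $x=0,a,b,1$. The whole construction collapses to a finite system of algebraic matching equations for the two fluxes $J_1,J_2$ together with the interfacial values of $(\phi,c_1,c_2)$; solving this system to zeroth order in $\varepsilon$ produces $J_k(\nu)$, and expanding it in $\nu$ yields the coefficients $J_{k0},J_{k1}$.

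The first thing I would record is the exact structure of the $\varepsilon=0$ problem. In each fast layer $x$ and the $J_k$ are frozen and the limiting fast flow obeys the Boltzmann relation $c_k=A_k e^{-z_k\phi}$; hence $\mu_k=z_k\phi+\ln c_k=\ln A_k$ is \emph{constant across every fast layer}, even across the jumps of $Q$ at $a$ and $b$. Because the boundary data are electroneutral ($c_1(0)=c_2(0)=L$, $c_1(1)=c_2(1)=R$), the boundary points already lie on the slow manifold $Q=0$, so the layers at $x=0,1$ are trivial and $\mu_k(0),\mu_k(1)$ equal the boundary values. Writing the flux law as $d(e^{\mu_k})/dH=-J_k e^{z_k\phi}$ and integrating across the whole channel (legitimate since $\mu_k$ is continuous and the fast layers carry no $H$-length) gives the clean representations
\[
J_1=\frac{e^{V}L-R}{\int_0^1 e^{\phi}\,dH},\qquad
J_2=\frac{e^{-V}L-R}{\int_0^1 e^{-\phi}\,dH},
\]
in which $\phi=\phi(x;\nu)$ is the (still unknown) slow profile. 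Everything then reduces to the asymptotic evaluation of the two integrals.

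Next I would insert the depletion scaling forced by a large positive $Q_0$. On the neck $(a,b)$ the counter-ion satisfies $c_2\approx 2Q_0$ while the co-ion is depleted, $c_1=O(\nu)$, and $\phi\approx\ln(2Q_0)-\mu_{2,\mathrm{neck}}$ is logarithmically large; consequently $e^{\phi}=O(\nu^{-1})$ and $e^{-\phi}=O(\nu)$ there. Thus $\int_0^1 e^{-\phi}\,dH$ is dominated by the two baths and is $O(1)$, giving $J_2=O(1)$, whereas $\int_0^1 e^{\phi}\,dH\approx\int_a^b e^{\phi}\,dH\approx 2Q_0\,e^{-\mu_{2,\mathrm{neck}}}(\beta-\alpha)H(1)$ is $O(\nu^{-1})$, forcing $J_1=O(\nu)$ and hence $J_{10}=0$. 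For the leading anion flux I would use that the neck is nearly transparent to $\mu_2$, so $\mu_2(a^-)=\mu_2(b^+)$ to leading order, while $J_1\to0$ makes $\mu_1$ constant on each bath; this fixes the bath profiles $c(a^-)=L-\tfrac12J_{20}\alpha H(1)$ and $c(b^+)=R+\tfrac12J_{20}(1-\beta)H(1)$ and reduces the matching to the single \emph{linear} relation $\sqrt{e^{-V}R}\,c(a^-)=\sqrt{L}\,c(b^+)$, whose solution is exactly the stated $J_{20}$. Feeding the resulting neck value $e^{\mu_{2,\mathrm{neck}}}=\left(\frac{(1-\beta)L+\alpha R}{(1-\beta)\sqrt{e^{V}L}+\alpha\sqrt{R}}\right)^{2}$ back into the neck estimate of $\int_a^b e^{\phi}\,dH$ then yields $J_{11}$.

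The genuinely laborious step is $J_{21}$. Obtaining it requires carrying the expansion of $\int_0^1 e^{-\phi}\,dH$ and of the matching system to first order in $\nu$: one must include the $O(\nu)$ correction to $\phi$ on the neck (the neck is not perfectly equipotential), the $O(\nu)$ contribution of the neck itself to $\int_0^1 e^{-\phi}\,dH$, and the feedback of the now nonzero $J_1=O(\nu)$ onto the bath concentration profiles, and then eliminate the first-order interfacial unknowns. This is where the algebra becomes heavy and where the $\ln\frac{L}{e^{V}R}$ term in the stated $J_{21}$ arises, from integrating the slowly varying $\mu_2$ through the neck. A secondary obstacle is rigor: justifying that the $\varepsilon=0$ singular orbit perturbs to a genuine solution and that the $\nu$-series is truly asymptotic requires the exchange-lemma machinery of \cite{EL07,Liu05,Liu09}, which I would invoke rather than reprove.
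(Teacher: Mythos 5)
First, a caveat about the comparison: the paper itself contains no proof of Proposition \ref{expJs} --- it is imported verbatim from the preprint \cite{ZL17} (``The following result is established in \cite{ZL17}''), and what the paper does reproduce from that reference are the layer-by-layer profiles of Propositions \ref{expProfile11}--\ref{expProfile31} and \ref{expProfile12}--\ref{expProfile32}. Your route --- geometric singular perturbation in $\varepsilon$ (slow electroneutral pieces on $(0,a)$, $(a,b)$, $(b,1)$ joined by fast layers across which $\mu_k$ is invariant), followed by matching and an expansion in $\nu$ --- is exactly the machinery those propositions come from, and your leading-order computations check out against the stated formulas. In particular, the integral representation $J_k=(e^{\mu_k(0)}-e^{\mu_k(1)})/\int_0^1 e^{z_k\phi}\,dH$ is a correct consequence of $d(e^{\mu_k})/dH=-J_ke^{z_k\phi}$; your bath values $c(a^-)=L-\tfrac12 J_{20}\alpha H(1)$ and $c(b^+)=R+\tfrac12 J_{20}(1-\beta)H(1)$ coincide with the paper's $A_0$ and $B_0$ once your $J_{20}$ is substituted; the matching relation $\sqrt{e^{-V}R}\,c(a^-)=\sqrt{L}\,c(b^+)$ is precisely $\mu_2(a^-)=\mu_2(b^+)$ and yields the stated $J_{20}$; and $e^{\mu_{2,\mathrm{neck}}}=e^{-V}A_0^2/L$ inserted into $\int_a^b e^{\phi}\,dH\approx 2Q_0e^{-\mu_{2,\mathrm{neck}}}(\beta-\alpha)H(1)$ reproduces the stated $J_{11}$ exactly, including the factor $\tfrac{1}{2H(1)(\beta-\alpha)}$. (Your $c_2\approx 2Q_0$ on the neck, rather than the $1/\nu$ appearing in Proposition \ref{expProfile22}, is in fact what is needed for this factor to come out right and is consistent with the $\ln 2$ in the paper's $\phi_0$ there.)

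The genuine gap is $J_{21}$: you do not derive it, you only describe the bookkeeping that would be required. That is not a small omission --- $J_{21}$ is the one coefficient in \eqref{ejJ} that cannot be read off from the leading-order matching, since it requires the $O(\nu)$ corrections $A_1$, $B_1$, $\phi_1^a$, $\phi_1^b$ to the interfacial data (the quantities listed in Propositions \ref{expProfile21} and \ref{expProfile22}), the $O(\nu)$ non-constancy of $\mu_2$ across the neck, and the back-reaction of the now nonzero $J_1=J_{11}\nu$ on the bath profiles, all assembled and simplified to the stated closed form. You correctly locate the origin of the $\ln\frac{L}{e^VR}$ term, but as written the proposal establishes three of the four formulas and only a plan for the fourth; to count as a proof of the Proposition it would need to carry out that first-order matching (or else explicitly reduce $J_{21}$ to the quantities $A_1,B_1$ already recorded in the paper and verify the algebra). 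The appeal to the exchange-lemma framework of \cite{EL07,Liu05,Liu09} for the validity of the $\varepsilon\to 0$ reduction is appropriate and is how the paper itself treats that issue.
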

  
A distinct implication of $J_{10}=0$ in   Proposition \ref{expJs}  is that large (positive) permanent charge $Q_0$ (or small  $\nu=1/{Q_0}$) inhibits the cation flux.    We will provide a detailed discussion in Section \ref{Dyn4J10} on what happens to the  internal dynamics that is consistent with this conclusion.

To this end, we recall another immediate consequence of (\ref{ejJ})  (see \cite{ZL17} for more).

\begin{cor}\label{Saturate} [Current  Saturation] For large permanent charge  $Q_0$ (small ${\nu}=1/{Q_0}$) and to the leading order terms in $\nu$, each individual fluxes $J_k$'s, and hence, the total  current $I$ saturate as $|V|\to \infty$; more precisely, one has
\begin{align}\label{saturation}\begin{split}
\lim_{V\to +\infty}  J_{20}=&-\frac{1}{1-\beta}\frac{ 2R}{H(1)}, \quad \lim_{V\to -\infty}  J_{20}=\frac{1}{\alpha}\frac{2L}{H(1)},\\
\lim_{V\to +\infty} J_{11}=&-\lim_{V\to +\infty}J_{21}=\frac{1}{(1-\beta)^2}\frac{((1-\beta)L+\alpha R)^2}{2H(1)(\beta-\alpha) },\\
\lim_{V\to -\infty} J_{11}=&- \lim_{V\to -\infty}J_{21}=-\frac{1}{\alpha^2}\frac{((1-\beta)L+\alpha R)^2}{2H(1)(\beta-\alpha)}.
\end{split}
\end{align}
\end{cor}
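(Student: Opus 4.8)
\emph{Proof strategy.} Since $J_{20}$, $J_{11}$ and $J_{21}$ are all given explicitly in \eqref{ejJ} and depend on $V$ only through $e^{V}$ and $e^{-V}$, the whole corollary is an exercise in evaluating elementary limits. The plan is to introduce a single variable $s=e^{-V/2}$, so that $\sqrt{e^{-V}L}=s\sqrt{L}$, $\sqrt{e^{-V}R}=s\sqrt{R}$, $e^{-V}=s^{2}$ and $\ln\!\big(L/(e^{V}R)\big)=\ln(L/R)+2\ln s$; then $V\to+\infty$ corresponds to $s\to0^{+}$ and $V\to-\infty$ to $s\to+\infty$, and each required limit becomes the limit of a rational expression in $s$ carrying a single logarithmic factor.

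First I would dispatch $J_{20}$ and $J_{11}$, which are routine. Writing $J_{20}=\frac{2\sqrt{LR}}{H(1)}\cdot\frac{s\sqrt{L}-\sqrt{R}}{(1-\beta)\sqrt{L}+\alpha s\sqrt{R}}$, the limit $s\to0^{+}$ follows by direct substitution (numerator $\to-\sqrt{R}$, denominator $\to(1-\beta)\sqrt{L}$), yielding $-\frac{2R}{(1-\beta)H(1)}$; the limit $s\to+\infty$ is an $\infty/\infty$ form settled by factoring the leading power $s$ from numerator and denominator, yielding $\frac{2L}{\alpha H(1)}$. For $J_{11}$ it is cleaner to use $w=e^{V/2}=s^{-1}$, so that $J_{11}=\frac{1}{2H(1)(\beta-\alpha)}\big(\frac{(1-\beta)L+\alpha R}{(1-\beta)w\sqrt{L}+\alpha\sqrt{R}}\big)^{2}(w^{2}L-R)$; as $V\to+\infty$ ($w\to\infty$) the dominant balance between $w^{2}L-R\sim w^{2}L$ and the squared denominator $\sim(1-\beta)^{2}w^{2}L$ gives the stated value, while as $V\to-\infty$ ($w\to0$) every $w$ vanishes and the limit is read off by substitution.

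The main obstacle is $J_{21}$, both because of the length of the bracketed expression in \eqref{ejJ} and because it carries the logarithmic factor $\ln\!\big(L/(e^{V}R)\big)$, the only non-polynomial term in $s$. The observation that makes the computation tractable is that this logarithm always appears multiplied by $\sqrt{e^{-V}}=s$, and $s\ln s$ is subdominant in both regimes: $s\ln s\to0$ as $s\to0^{+}$, and $s\ln s=o(s^{2})$ as $s\to+\infty$. Hence the logarithmic contribution drops out of both limits and only powers of $s$ survive. I would then track the leading power of $s$: as $s\to0^{+}$ the prefactor tends to a constant while the bracket tends to $(1-\beta)L^{3/2}\sqrt{R}\big((1-\beta)L+\alpha R\big)$; as $s\to+\infty$ the prefactor scales like $s^{-4}$ and the bracket like $s^{4}$ — within the inner bracket the $O(s^{2})$ term $-\big((1-\beta)\sqrt{L}+\alpha s\sqrt{R}\big)\big(s\sqrt{L}-\sqrt{R}\big)$ dominates the $O(s\ln s)$ logarithmic piece, and the resulting $O(s^{4})$ summand of the outer bracket dominates its $O(s^{3})$ summand. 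In each case the powers of $s$, $L$ and $R$ cancel and the expression collapses to exactly $-\lim J_{11}$.

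Finally I would record that the equalities $\lim J_{21}=-\lim J_{11}$ in both regimes are the algebraic signature of saturation: to leading order in $\nu$ the $V$-dependent corrections to the two fluxes become equal and opposite as $|V|\to\infty$, so each flux, and hence $I$, approaches a finite limit. No compactness or qualitative argument is needed; the entire statement follows from \eqref{ejJ} by the elementary limits above, the only genuine care being the bookkeeping of leading-order terms in $J_{21}$ and the verification that the logarithmic factor is asymptotically negligible.
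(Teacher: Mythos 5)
Your proposal is correct and follows the same route as the paper, which simply asserts that all the limits "can be derived from (\ref{ejJ}) easily" by direct computation; you have filled in the elementary-limit details (including the key observation that the $\sqrt{e^{-V}}\ln(L/(e^VR))$ term is subdominant in both regimes), and the resulting limits all match (\ref{saturation}).
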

\begin{proof} All the above (finite) limits can be derived from   (\ref{ejJ})  easily.    
\end{proof}
 
Note this is NOT the case if the permanent charge is small (\cite{JLZ15}).

\section{Internal dynamics for $J_{10}=0$.}\label{Dyn4J10}
\setcounter{equation}{0}

It follows from   the Nernst-Planck equation in (\ref{PNP}) that
\[J_1\int_0^1\frac{k_BT}{ D_1(x)h(x) c_1(x)}dx=\mu_1(0)-\mu_1(1).\]
Thus, $J_1$ has the same sign as that of $\mu_1(0)-\mu_1(1)$; in particular, if $\mu_1(0)-\mu_1(1)\neq 0$, then $J_1\neq 0$.

 For the setup of this paper,  there are three regions of permanent charge $Q(x)$: $Q(x)=0$ for $x\in [0,a)$ and $x\in (b,1]$ and $Q(x)=2Q_0$ for $x\in [a,b]$ with {\em large} $Q_0$ or {\em small} $\nu$ with ${\nu}=1/{Q_0}$. A major consequence of large $Q_0$ is, from Proposition \ref{expJs}, up to the leading order in $\nu=1/{Q_0}$ and in $\varepsilon$, the flux of cation is $J_{10}=0$, {\em even if  the transmembrane potential $\mu_1(0)-\mu_1(1)\neq 0$}.
We will  reveal the internal dynamics   that lead to this conclusion. To do so, we
 will discuss what happens over each subinterval     based on the approximated (of zeroth order in $\varepsilon$) functions of profiles. Let   
\begin{align}\label{leadv}
(\phi(x;\varepsilon,\nu),c_k(x;\varepsilon,\nu), J_k(\varepsilon,\nu))=(\phi(x;\nu),c_k(x;\nu) , J_k(\nu))+O(\varepsilon)
\end{align}
be the   solution of the BVP (\ref{PNP}) and (\ref{BVO}). For   $\nu>0$ small, one has the following expansions
\begin{align*}
J_1(\nu)=&J_{10}+J_{11}\nu+O(\nu^2), 
  \end{align*}
 where $J_{10}=0$ and $J_{11}$  are given in (\ref{ejJ}). 
We will also provide figures for the profiles of  rescaled electrical potential $\phi(x;\nu)$, cation concentration $c_1(x;\nu)$ and electrochemical potential    $\mu_1(x;\nu)$ of cation, respectively. The parameter values used are   
\begin{align}\label{parameters}\begin{split}
e_0=&1.6022 \times 10^{-19}\ (C),\ k_B=1.3806\times 10^{-23}\ (JK^{-1}),\ T=273.16\ (K),\\
\mathcal{V}=& 0.01\ (V), \ {\cal L}=0.5\ (M), \;{\cal R}=10^{-5}\ (M),  \; {\cal Q}_0=10^8\ (M),\; C_0=1\ (M),\\
  a_0=&0, \; b_0=10\ (nm),\; A=1/3\ (nm), \; B=1/2\ (nm).
 \end{split}
 \end{align}
In terms of the dimensionless parameters,  
 $$V \approx 0.425,\; L=0.5,\; R=10^{-5}\approx 0,\; \;a=\frac{1}{3},\;b=\frac{1}{2},\; \ \nu=10^{-8}\approx 0,$$
 The scaled cross-sectional area $h(x)$ is taken to be
\begin{align*}
h(x)=\left\{\begin{array}{ccc} \pi(-x+r_0+a)^2, & x\in(0,a)\\ 
\pi r_0^2, & x\in(a,b)\\ 
\pi(x+r_0-b)^2, & x\in(b,1).
\end{array}\right.
\end{align*}
The same setup will be used  for figures in Section \ref{Dyn4declin}.

Notice that the whole interval is $(0,1)$, and it is divided into three subintervals $(0,a), (a,b)$ and $(b,1)$. Permanent charge is located over $(a,b)$. The radius of the neck of ion channel is $c$.  In the figures, we let $r_0=0.5$. We make plots of each function in interval $(0,1)$, also in every subintervals.  

     \subsection{Internal dynamics over the interval $(0,a)$}
   The leading order terms of $(\phi, c_1)$ in (\ref{leadv}) are derived in \cite{ZL17}. One has
      \begin{prop}\label{expProfile11} For $x\in (0,a)$, 
  \begin{align*}
  \phi(x;\nu)=&\phi_0(x)+\phi_1(x)\nu+O(\nu^2) \mbox{ with }\\
  \phi_0(x)=& V- \ln\Big(1- \frac{J_{20}}{2L}H(x)\Big),\quad
 \phi_1(x)=-\frac{c_{11}(x)}{c_{10}(x)}+\frac{2J_{11}}{J_{20}}\ln{\frac{c_{10}(x)}{L}};\\
c_{10}(x)=& L- \frac{J_{20}}{2}H(x),\quad
 c_{11}(x)= -\frac{J_{11}+J_{21}}{2}H(x);\\
   \mu_{10}(x)=&\mu_{10}(0)=V+\ln{L}, \quad \mu_{11}(x)=2\frac{J_{11}}{J_{20}}\ln{\frac{c_{10}(x)}{L}}.
 \end{align*}
 \end{prop}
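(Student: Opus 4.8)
The plan is to obtain the profiles on $(0,a)$ by integrating the reduced (zeroth order in $\varepsilon$) system with the left boundary data, and then to Taylor expand the resulting closed forms in $\nu$. Since the boundary data are electroneutral, $z_1L_1+z_2L_2=L-L=0$, no boundary layer forms at $x=0$ (this is exactly the point of the electroneutrality discussion preceding (\ref{electroneutral})), so the reduced solution attains $c_1(0)=c_2(0)=L$ and $\phi(0)=V$ directly. This is what lets me integrate from the left with explicit constants of integration, with no matching argument needed for this subinterval; the global fluxes $J_1(\nu),J_2(\nu)$ are taken as given by Proposition \ref{expJs}.

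First I would pass to the $\varepsilon\to 0$ limit in the Poisson equation of (\ref{PNP}). On $(0,a)$ we have $Q(x)=0$, so the reduced Poisson equation forces electroneutrality $c_1(x)=c_2(x)=:c(x)$. Writing the ideal potentials $\mu_1=\phi+\ln c_1$, $\mu_2=-\phi+\ln c_2$ and substituting into $-J_k=D(x)h(x)c_k\,d\mu_k/dx$, I would add and subtract the $k=1,2$ equations to decouple them. Using $H'(x)=1/(D(x)h(x))$ and the fact that $J_1,J_2$ are $x$-independent (by $dJ_k/dx=0$), this gives
\begin{align*}
\frac{dc}{dx}=-\frac{J_1+J_2}{2}H'(x),\qquad \frac{d\phi}{dx}=-\frac{J_1-J_2}{2c}H'(x).
\end{align*}
Integrating the first with $c(0)=L$ yields $c(x)=L-\tfrac12(J_1+J_2)H(x)$. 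For the potential I would eliminate $H'(x)$ between the two relations to get $d\phi/dx=\frac{J_1-J_2}{J_1+J_2}\,\frac{1}{c}\frac{dc}{dx}$ and integrate with $\phi(0)=V$, obtaining $\phi(x)=V+\frac{J_1-J_2}{J_1+J_2}\ln\frac{c(x)}{L}$. The cation potential then follows algebraically, $\mu_1(x)=\phi(x)+\ln c(x)=V+\ln L+\frac{2J_1}{J_1+J_2}\ln\frac{c(x)}{L}$.

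Next I would insert the flux expansions $J_1=J_{11}\nu+O(\nu^2)$ (using $J_{10}=0$) and $J_2=J_{20}+J_{21}\nu+O(\nu^2)$ and expand each closed form to first order in $\nu$. The two ratios expand as $\frac{J_1-J_2}{J_1+J_2}=-1+\frac{2J_{11}}{J_{20}}\nu+O(\nu^2)$ and $\frac{2J_1}{J_1+J_2}=\frac{2J_{11}}{J_{20}}\nu+O(\nu^2)$. Combining these with $c(x)=c_{10}(x)+c_{11}(x)\nu+O(\nu^2)$, where $c_{10}=L-\tfrac12 J_{20}H$ and $c_{11}=-\tfrac12(J_{11}+J_{21})H$, and with $\ln\frac{c}{L}=\ln\frac{c_{10}}{L}+\frac{c_{11}}{c_{10}}\nu+O(\nu^2)$, and collecting the $O(1)$ and $O(\nu)$ coefficients reproduces exactly the stated $\phi_0,\phi_1,c_{10},c_{11},\mu_{10},\mu_{11}$.

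I expect the only genuine subtlety to be the role of $J_{10}=0$: it is precisely this vanishing that sends the leading order of $\frac{J_1-J_2}{J_1+J_2}$ to $-1$ and the leading order of $\frac{2J_1}{J_1+J_2}$ to $0$, so that $\mu_1$ is constant to leading order, equal to its boundary value $V+\ln L$, and its entire $\nu$-dependence on $(0,a)$ enters at order $\nu$ through the single factor $J_{11}/J_{20}$. The remainder is the routine bookkeeping of the first-order Taylor coefficients of the two ratios, which I would not expand in detail.
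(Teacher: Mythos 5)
Your derivation is correct and is essentially the argument the paper relies on: the paper itself gives no proof of Proposition \ref{expProfile11} but defers to \cite{ZL17}, where the profiles over $(0,a)$ are obtained exactly as you do — by reducing to the electroneutral outer system $c_1=c_2$ on the charge-free subinterval (no boundary layer at $x=0$ by \eqref{electroneutral}), integrating the decoupled equations for $c$ and $\phi$ from the left boundary data with the fluxes treated as given, and then expanding in $\nu$ using $J_{10}=0$. Your identification of $J_{10}=0$ as the reason $\mu_{10}$ is constant and the $\nu$-dependence of $\mu_1$ enters only through $2J_{11}/J_{20}$ is precisely the point the paper emphasizes in Corollary \ref{claim11}.
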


 Figure 1   shows the profiles of $c_1(x;\nu)$, $\phi(x;\nu)$ and $\mu_1(x;\nu)$ over the interval $(0,a)$.

  \begin{figure}[htbp]\label{c1one}
\centering
\begin{minipage}[t]{0.4\linewidth}
 \includegraphics[width=2.5in]{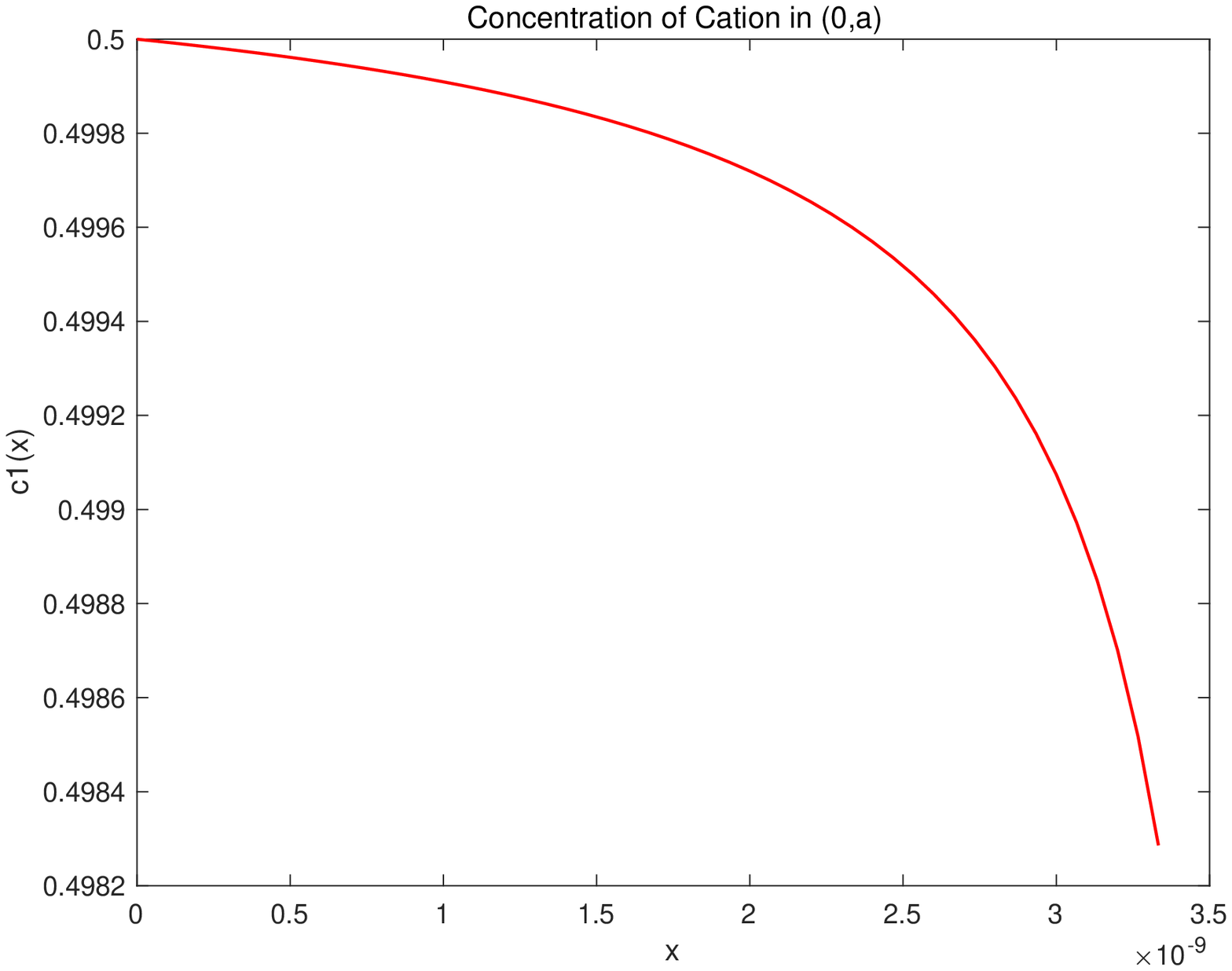}
 \end{minipage}
  \qquad
 \begin{minipage}[t]{0.4\linewidth}
 \includegraphics[width=2.5in]{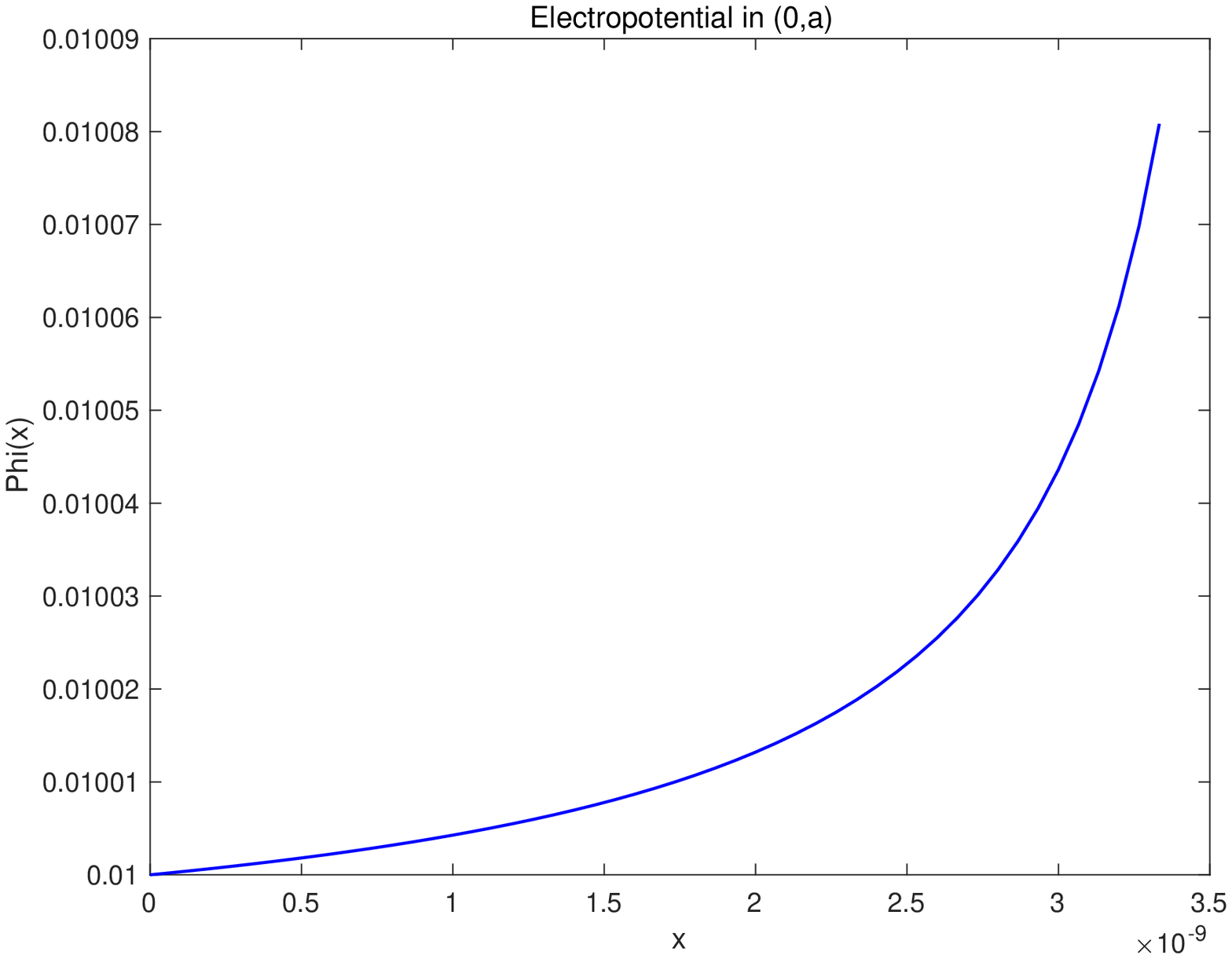}
\end{minipage}
\qquad
\begin{minipage}[t]{0.4\linewidth}
 \includegraphics[width=2.5in]{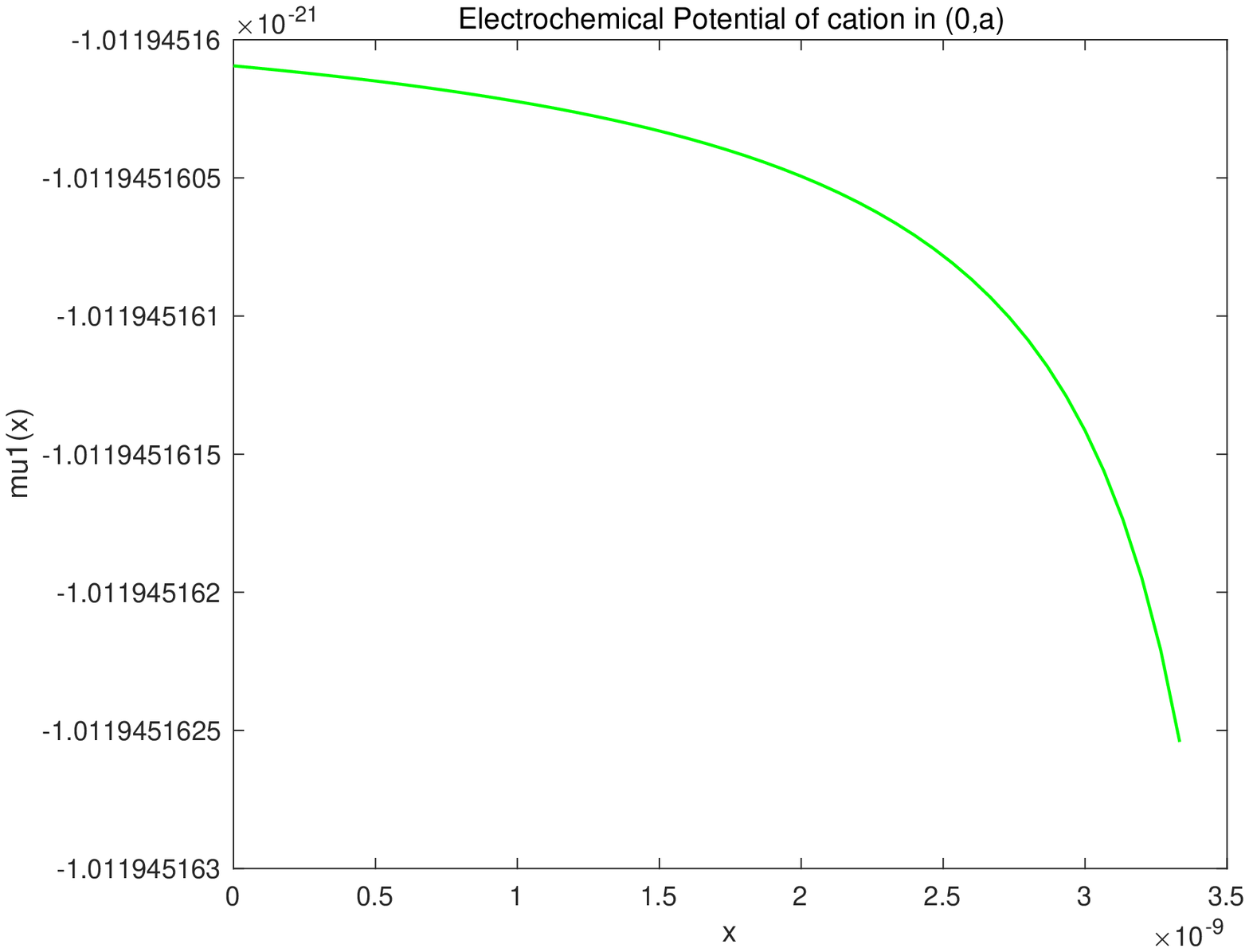}
 \end{minipage}
\caption{Profiles of   $c_1(x;\nu)$,  $\phi(x;\nu)$, and $\mu_1(x;\nu)$ over interval $[0,a]$. Note that $\mu_{10}'(x)=0$ but
$\mu_1'(x;\nu)= \mu_{11}'(x)\nu+O(\nu^2)\neq 0$ as shown in the figure.}
\end{figure}

  The following is then a direct consequence. 
\begin{cor}\label{claim11} Over the interval $(0,a)$,  
  $c_{10}(x)=O(1)$ and  $\mu_{10}'(x)=0$, and hence,  one has $J_{10}=0$.
 \end{cor}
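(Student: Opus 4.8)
The plan is to read off both profile facts directly from the closed forms in Proposition \ref{expProfile11} and then feed them into the leading-order Nernst--Planck balance. For $c_{10}(x)=O(1)$, I would substitute $c_{10}(x)=L-\frac{J_{20}}{2}H(x)$: all three ingredients are $\nu$-independent, with $L=O(1)$ a fixed boundary datum, $J_{20}$ given in \eqref{ejJ} purely in terms of $(V,L,R,H(1),\alpha,\beta)$ and hence $O(1)$, and $0\le H(x)\le H(a)=\alpha H(1)=O(1)$ on $(0,a)$ since $H$ is increasing. Thus $c_{10}$ is a fixed, bounded, \emph{positive} function on $[0,a]$, so $c_{10}(x)=O(1)$ and, in contrast with the neck $(a,b)$, no depletion occurs. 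For $\mu_{10}'(x)=0$, Proposition \ref{expProfile11} already gives $\mu_{10}(x)\equiv\mu_{10}(0)=V+\ln L$, constant in $x$, whence $\mu_{10}'(x)=0$.

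To conclude $J_{10}=0$, I would expand the Nernst--Planck relation $-J_1=\frac{1}{k_BT}D(x)h(x)c_1\mu_1'$ of \eqref{PNP} in powers of $\nu$ using \eqref{leadv} and match the $\nu^0$ coefficients, obtaining
\[
-J_{10}=\frac{1}{k_BT}D(x)h(x)\,c_{10}(x)\,\mu_{10}'(x).
\]
Since $J_{10}$ is constant (from $dJ_1/dx=0$) while $\mu_{10}'(x)=0$ and $c_{10}(x)=O(1)$ is finite, the right-hand side vanishes identically on $(0,a)$, forcing $J_{10}=0$. Recording $c_{10}(x)=O(1)$ is exactly what excludes a degenerate $0\cdot\infty$ situation: on $(0,a)$ the cation concentration neither blows up nor is depleted, so a flat leading-order electrochemical potential genuinely transmits zero leading-order flux.

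There is no serious obstacle; the analytic content is already packaged in Proposition \ref{expProfile11}, and the corollary is the observation that the flat profile $\mu_{10}'\equiv0$ over the \emph{nonsingular} region $(0,a)$ is consistent with, and forces, $J_{10}=0$, in agreement with Proposition \ref{expJs}. The only point needing care is the meaning of $O(1)$ (order one as $\nu\to0$) together with positivity of $c_{10}$, so that the conclusion is not an artifact of a vanishing concentration; conceptually, the whole leading-order drop of $\mu_1$ is pushed into the depletion neck $(a,b)$, leaving $(0,a)$ electrochemically flat.
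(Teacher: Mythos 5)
Your proposal is correct and follows essentially the same route as the paper, which simply presents Corollary \ref{claim11} as a direct consequence of the explicit profiles in Proposition \ref{expProfile11}: $c_{10}(x)=L-\frac{J_{20}}{2}H(x)$ is a fixed $O(1)$ (positive) function and $\mu_{10}(x)\equiv V+\ln L$ is constant, so the $\nu^0$ coefficient of the Nernst--Planck balance vanishes. Your added remarks on positivity of $c_{10}$ and on why $O(1)$ matters (excluding a $0\cdot\infty$ degeneracy, in contrast with the neck $(a,b)$) merely make explicit what the paper leaves implicit.
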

 
  \subsection{Internal dynamics over the interval $(a,b)$}
  It follows again from \cite{ZL17} that one has the following approximations.
   \begin{prop}\label{expProfile21} For $x\in (a,b)$, 
 \begin{align*}
 \phi(x;\nu)=&-\ln \nu+\phi_0(x)+\phi_1(x)\nu +O(\nu^2)\;\mbox{ with }\\
 \phi_0(x)=&\ln{R}-2\ln{B_0}+\ln{2},\\
 \phi_1(x)=&\phi_1^a-A_0+\frac{J_{20}}{2}(H(x)-H(a));\\
  c_{10}(x)=&0,  \quad
 c_{11}(x)=\frac{1}{2}A_0^2-J_{11}(H(x)-H(a)),\\
 \mu_{10}(x)=&\ln{R}-2\ln{B_0}+\ln{2}+\ln{\left(\frac{1}{2}A_0^2-J_{11}(H(x)-H(a))\right)},\\
 \mu_{11}(x)=&\phi_1^a-A_0+\frac{J_{20}}{2}(H(x)-H(a)),
 \end{align*}
 where
 \begin{align*}
 A_0=&\frac{\sqrt{e^{V}L}}{(1-\beta)\sqrt{e^{V}L}+\alpha\sqrt{R}}((1-\beta)L+\alpha R),\\
 B_0=&\frac{\sqrt{R}}{(1-\beta)\sqrt{e^{V}L}+\alpha\sqrt{R}}((1-\beta)L+\alpha R),\\
 A_1=&\frac{2(\beta-\alpha)^2(L-A_0)^2-\alpha^2(A_0^2-B_0^2)\ln{\frac{B_0L}{A_0R}}}{4(\beta-\alpha)((1-\beta)L+\alpha R)(L-A_0)}A_0B_0,\\
 B_1=&-\frac{(1-\beta)\left(2(\beta-\alpha)^2(L-A_0)^2-\alpha^2(A_0^2-B_0^2)\ln{\frac{B_0L}{A_0R}}\right)}{4\alpha(\beta-\alpha)((1-\beta)L+\alpha R)(L-A_0)}A_0B_0,\\
 \phi_1^a=&\frac{\ln{\frac{B_0}{R}}}{\ln{\frac{B_0L}{A_0R}}}\left(2\Big(\frac{B_1}{B_0}-\frac{A_1}{A_0}\Big)+\frac{\beta-\alpha}{\alpha}(L-A_0)\right)-2\frac{B_1}{B_0}-\frac{\beta}{\alpha}(L-A_0)+L.
 \end{align*}
 \end{prop}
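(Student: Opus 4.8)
The statement records the $\varepsilon=0$ (outer) profiles on the neck $(a,b)$ expanded to first order in $\nu=1/Q_0$, so my plan is threefold: reduce the $\varepsilon$-singular problem to its slow flow on the electroneutral manifold over $(a,b)$, expand that flow in $\nu$, and fix the integration constants by matching through the internal layer at $x=a$.

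First I would record the reduced slow system. Setting $\varepsilon=0$ in the Poisson equation of \eqref{PNP} forces $c_1-c_2=-Q(x)$, hence $c_2=c_1+2Q_0$ on $(a,b)$. Writing the two Nernst--Planck equations with the (dimensionless ideal) potentials $\mu_1=\phi+\ln c_1$ and $\mu_2=-\phi+\ln c_2$, using that $J_1,J_2$ are constants, and passing to the resistance variable $t=H(x)$ (so that $H'(x)=1/(D(x)h(x))$), I eliminate $c_2$ to obtain the closed planar slow system
\[\frac{dc_1}{dt}=-J_1-\frac{c_1(J_2-J_1)}{2(c_1+Q_0)},\qquad \frac{d\phi}{dt}=\frac{J_2-J_1}{2(c_1+Q_0)}\]
on $(a,b)$. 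That the limiting dynamics is indeed governed by this slow flow, with the correctly matched constant fluxes $J_k$, is the geometric singular perturbation content already established in \cite{ZL17} (see also \cite{EL07,Liu05}), which I may invoke.

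Next I would expand in $\nu$. Substituting $Q_0=1/\nu$, using $1/(c_1+Q_0)=\nu-\nu^2c_1+O(\nu^3)$ together with the flux expansions \eqref{expinnu} and the key value $J_{10}=0$ from Proposition \ref{expJs}, the $\nu^0$-balance of the first equation reads $dc_{10}/dt=-J_{10}=0$; matching into the depleted neck then forces $c_{10}\equiv 0$, so $c_1=c_{11}\nu+O(\nu^2)$. Because $c_1=O(\nu)$ while $\mu_1$ must stay $O(1)$, the potential has to absorb a logarithm: writing $\phi=-\ln\nu+\phi_0+\phi_1\nu+O(\nu^2)$ turns $\mu_1=\phi+\ln c_1$ into $\phi_0+\ln c_{11}+O(\nu)$, which is exactly the asserted form of $\mu_{10}$ and explains why the $-\ln\nu$ term appears. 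Feeding these expansions back into the slow system, the surviving balances collapse to the elementary linear equations $dc_{11}/dt=-J_{11}$, $d\phi_0/dt=0$ and $d\phi_1/dt=J_{20}/2$; integrating each from $t=H(a)$ produces the affine-in-$H$ formulas for $c_{11}$, the constant $\phi_0$, and $\phi_1$, and hence the stated expressions for $\mu_{10}$ and $\mu_{11}$.

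The remaining and genuinely hard step is to evaluate the three integration constants $\phi_0$, $c_{11}(a)=\tfrac12A_0^2$, and $\phi_1(a)=\phi_1^a-A_0$. These are set by the connection across the internal layer at $x=a$, where the permanent charge switches on and $c_2$ jumps from $O(1)$ on $(0,a)$ (Proposition \ref{expProfile11}) to $O(1/\nu)$. In the stretched variable $\xi=(x-a)/\varepsilon$ the layer system satisfies $\dot c_k=-z_kc_k\dot\phi$ to leading order, whence each $\mu_k=z_k\phi+\ln c_k$ is conserved across the layer even though $\phi$ itself jumps by $\sim-\ln\nu$ (a Donnan-type potential drop). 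Imposing continuity of $\mu_1,\mu_2$ and of the fluxes, and matching to the known $(0,a)$ outer limits, yields the nonlinear algebraic relations that define $A_0,B_0$ at order $\nu^0$ and $A_1,B_1,\phi_1^a$ at order $\nu^1$; closing this simultaneously with the symmetric analysis at the second internal layer $x=b$, so that one set of constants solves the global boundary value problem, is where essentially all the labor lives. Everything downstream is then the routine integration of the linear slow ODEs above.
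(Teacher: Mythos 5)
Your overall strategy is the right one, and it is worth noting that the paper itself offers no proof of this proposition: it is quoted verbatim from the companion work \cite{ZL17}, whose derivation is exactly the geometric singular perturbation scheme you describe. Your reduction to the slow system on the electroneutral manifold $c_2=c_1+2Q_0$ in the resistance variable $t=H(x)$ is correct, as is the $\nu$-expansion of it: $\dot c_{10}=-J_{10}=0$, $\dot c_{11}=-J_{11}$, $\dot\phi_0=0$, $\dot\phi_1=J_{20}/2$, which reproduces the affine-in-$H$ structure of every formula in the statement. Your explanation of the $-\ln\nu$ term is also the correct mechanism, and the leading-order matching you sketch can actually be closed: conservation of $\mu_1+\mu_2=\ln(c_1c_2)$ across the Donnan layer at $x=a$, together with $c_1(a^-)=c_2(a^-)=A_0$ from Proposition \ref{expProfile11} (one checks $L-\tfrac{J_{20}}{2}H(a)=A_0$ using the formula for $J_{20}$ and $H(a)=\alpha H(1)$) and $c_2\approx 2Q_0$ inside, gives $c_{11}(a)=\tfrac12A_0^2$; conservation of $\mu_1$ then gives the jump $\phi(a^+)=V+\ln L-2\ln A_0+\ln 2-\ln\nu$, which equals the stated $\phi_0=\ln R-2\ln B_0+\ln 2-\ln\nu$ because $A_0^2/B_0^2=e^VL/R$.

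The genuine gap is at order $\nu$: the constants $A_1$, $B_1$ and $\phi_1^a$ are explicit formulas in the statement, and your proposal asserts that they come out of "the nonlinear algebraic relations" of the first-order matching without setting up or solving those relations. Since the whole content of the proposition beyond the ODE integration is precisely these constants (they encode the first-order corrections to the Donnan jumps at both $x=a$ and $x=b$, coupled globally through $J_{11}$ and $J_{21}$), the proposal as written establishes the leading-order claims ($c_{10}=0$, $\phi_0$, $\mu_{10}$) but not the first-order ones ($\phi_1$, $c_{11}$ only up to its value at $a$, $\mu_{11}$). A smaller imprecision: writing $\mu_1=\phi+\ln c_1$ with $c_1=c_{11}\nu+c_{12}\nu^2+\cdots$ gives $\mu_{11}=\phi_1+c_{12}/c_{11}$, so the stated identity $\mu_{11}=\phi_1$ requires you to track (and show the cancellation or vanishing of) the $O(\nu^2)$ coefficient of $c_1$, which your expansion truncates too early to see.
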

  Figure 2 includes the profiles of $c_1(x;\nu)$,  $\phi(x;\nu)$, and $\mu_1(x;\nu)$ over interval $[a,b]$.
  
 One has the following   immediate consequence.
 \begin{cor}\label{claim21} Over the interval $(a,b)$, 
   $c_{10}(x)=0$, and hence, $J_{10}=0$. Furthermore,   $\mu_{10}(a)=V+\ln L$ and $\mu_{10}(b)=\ln R$. As $R\to 0$, $\mu_{10}(a)-\mu_{10}(b)\to \infty$.   
\end{cor}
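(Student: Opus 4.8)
The plan is to read the first two assertions off the explicit leading-order profiles in Proposition~\ref{expProfile21} and to obtain the endpoint values of $\mu_{10}$ by substituting $x=a$ and $x=b$, together with one algebraic simplification. The identity $c_{10}(x)=0$ on $(a,b)$ is recorded verbatim in Proposition~\ref{expProfile21}. For the consequence $J_{10}=0$ I would expand the Nernst--Planck equation $-J_1=\frac{1}{k_BT}D(x)h(x)c_1\frac{d\mu_1}{dx}$ in $\nu$; its zeroth-order part reads $-J_{10}=\frac{1}{k_BT}D(x)h(x)c_{10}(x)\mu_{10}'(x)$, and since $c_{10}(x)\equiv 0$ on $(a,b)$ the right-hand side vanishes identically, giving $J_{10}=0$ in agreement with Proposition~\ref{expJs}. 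This is precisely the depletion-zone mechanism: the leading concentration is wiped out inside the permanent-charge region, so the leading flux it can carry is forced to zero. (Compare Corollary~\ref{claim11}, where instead $c_{10}=O(1)$ but $\mu_{10}'=0$.)

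For $\mu_{10}(a)$ I would set $x=a$ in the formula $\mu_{10}(x)=\ln R-2\ln B_0+\ln 2+\ln\!\big(\tfrac12 A_0^2-J_{11}(H(x)-H(a))\big)$. At $x=a$ the shift $H(x)-H(a)$ vanishes, so $\ln\!\big(\tfrac12 A_0^2\big)=-\ln 2+2\ln A_0$ absorbs the $+\ln 2$, leaving $\mu_{10}(a)=\ln R-2\ln B_0+2\ln A_0=\ln R+2\ln(A_0/B_0)$. The ratio $A_0/B_0$ collapses to $\sqrt{e^{V}L/R}$ because the common factor $((1-\beta)L+\alpha R)/((1-\beta)\sqrt{e^{V}L}+\alpha\sqrt{R})$ cancels, whence $2\ln(A_0/B_0)=V+\ln L-\ln R$ and $\mu_{10}(a)=V+\ln L$, as claimed.

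The endpoint $\mu_{10}(b)$ is where the real work sits. Setting $x=b$ requires evaluating $\tfrac12 A_0^2-J_{11}\big(H(b)-H(a)\big)$, and the key is the identity $H(b)-H(a)=(\beta-\alpha)H(1)$ coming directly from $\alpha=H(a)/H(1)$ and $\beta=H(b)/H(1)$. Substituting the expression for $J_{11}$ from~(\ref{ejJ}), the prefactor $\frac{1}{2H(1)(\beta-\alpha)}$ cancels against $(\beta-\alpha)H(1)$, leaving $J_{11}\big(H(b)-H(a)\big)=\tfrac12\Big(\frac{(1-\beta)L+\alpha R}{(1-\beta)\sqrt{e^{V}L}+\alpha\sqrt{R}}\Big)^2(e^{V}L-R)$. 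Comparing with $\tfrac12 A_0^2=\tfrac12 e^{V}L\Big(\frac{(1-\beta)L+\alpha R}{(1-\beta)\sqrt{e^{V}L}+\alpha\sqrt{R}}\Big)^2$, the difference telescopes because $e^{V}L-(e^{V}L-R)=R$, yielding exactly $\tfrac12 B_0^2$. Feeding this back, $\mu_{10}(b)=\ln R-2\ln B_0+\ln 2+\ln\!\big(\tfrac12 B_0^2\big)=\ln R$, since $\ln\!\big(\tfrac12 B_0^2\big)=-\ln 2+2\ln B_0$ cancels both the $+\ln 2$ and the $-2\ln B_0$ terms. Finally $\mu_{10}(a)-\mu_{10}(b)=V+\ln(L/R)$, which tends to $+\infty$ as $R\to 0$, completing the statement.

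I expect the single nontrivial step to be the cancellation producing $\tfrac12 A_0^2-J_{11}(H(b)-H(a))=\tfrac12 B_0^2$; everything else is a matter of substituting $x=a$ or $x=b$ and reading off ratios. The one thing to watch is the bookkeeping of the square roots in $A_0$, $B_0$ and of their shared denominator, so that the factor $(e^{V}L-R)$ carried by $J_{11}$ lines up against the factor $e^{V}L$ in $A_0^2$ to leave precisely the $R$ that reconstitutes $B_0^2$.
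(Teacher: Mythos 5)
Your proposal is correct and follows the same route as the paper, whose proof is simply ``It follows from Proposition~\ref{expProfile21} directly'': you read $c_{10}\equiv 0$ off the proposition, deduce $J_{10}=0$ from the zeroth-order Nernst--Planck relation, and evaluate $\mu_{10}$ at $x=a$ and $x=b$ by substitution. The cancellations you carry out --- $A_0/B_0=\sqrt{e^{V}L/R}$ at $x=a$ and $\tfrac12 A_0^2-J_{11}\,(H(b)-H(a))=\tfrac12 B_0^2$ at $x=b$ via $H(b)-H(a)=(\beta-\alpha)H(1)$ --- are exactly the details the paper leaves implicit, and they check out.
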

\begin{proof}  It follows from Proposition \ref{expProfile21} directly.  
 \end{proof}

  \begin{figure}[htbp]\label{c2one}
\centering
\begin{minipage}[t]{0.4\linewidth}
 \includegraphics[width=2.5in]{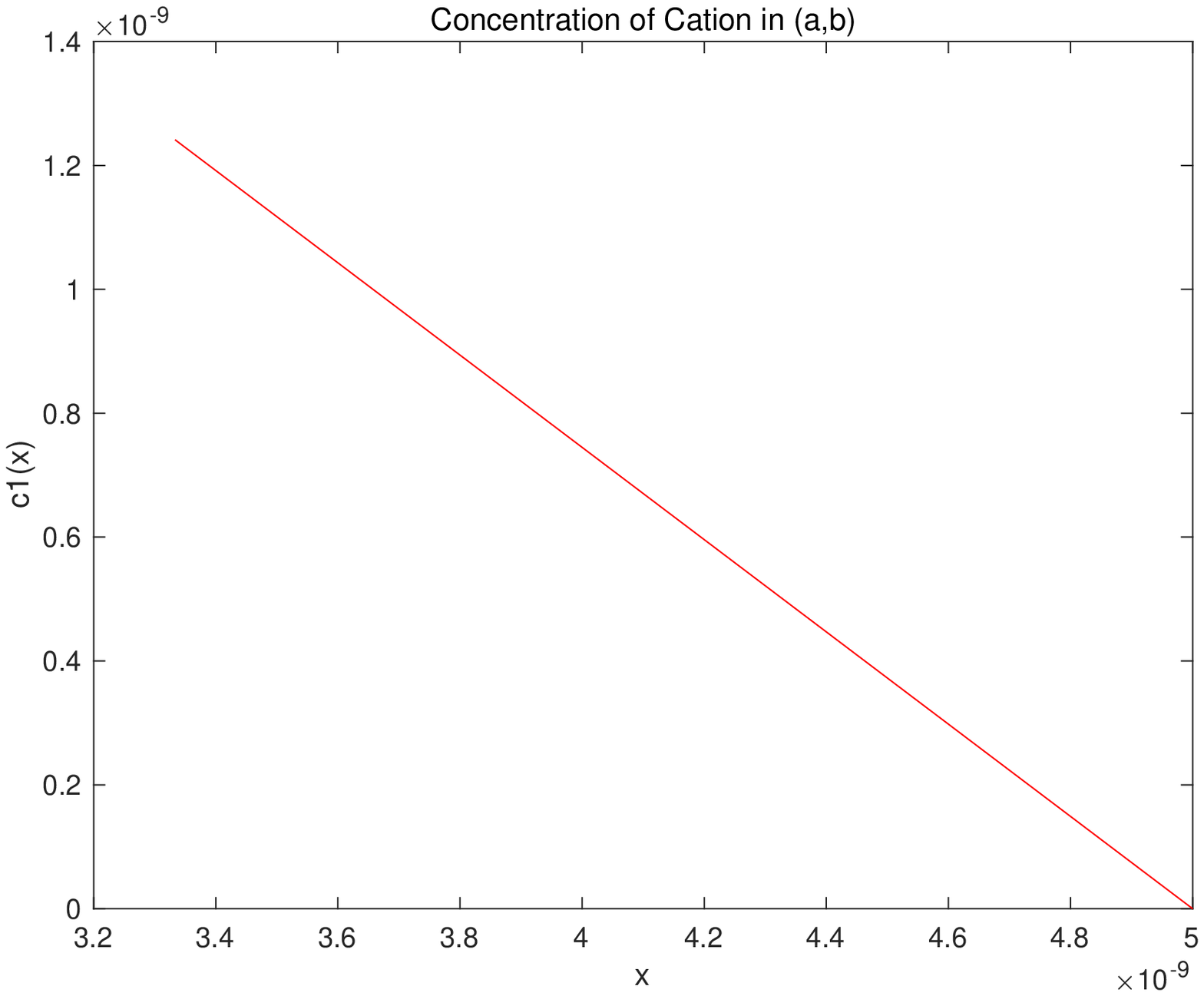}
 \end{minipage}
 \qquad
 \begin{minipage}[t]{0.4\linewidth}
 \includegraphics[width=2.5in]{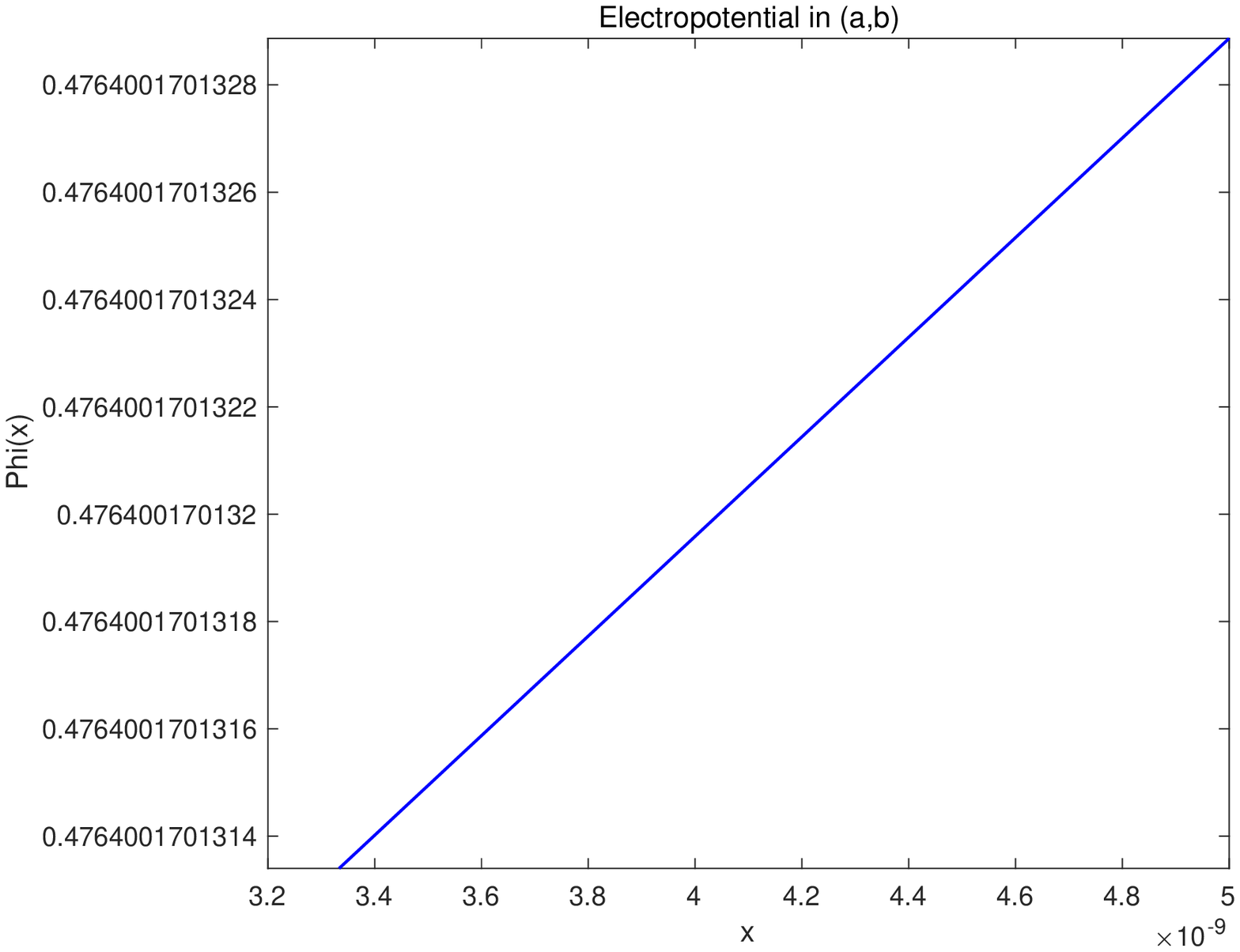}
\end{minipage}
\qquad
\begin{minipage}[t]{0.4\linewidth}
 \includegraphics[width=2.5in]{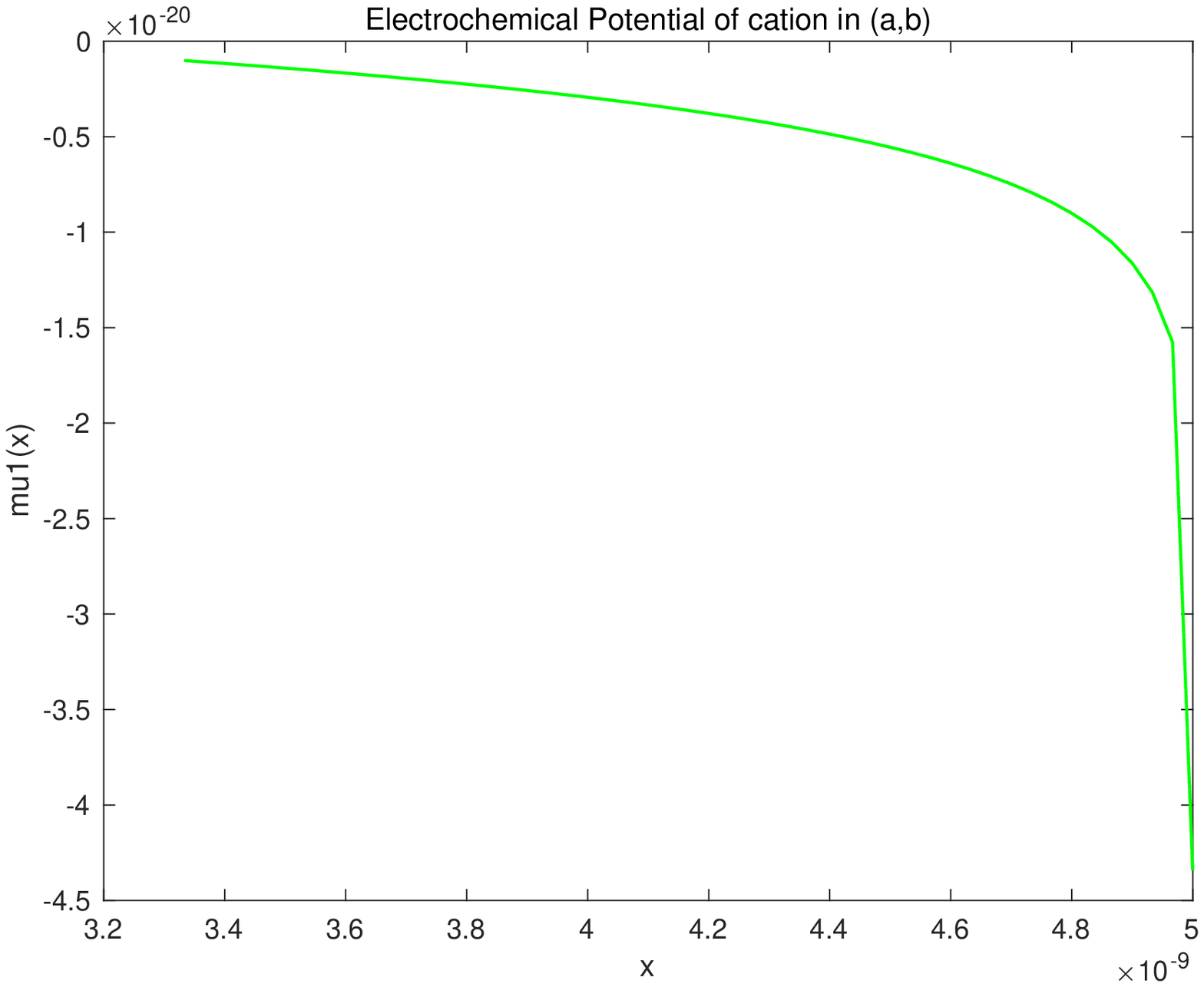}
 \end{minipage}
\caption{Profiles of  $c_1(x;\nu)$,  $\phi(x;\nu)$, $\mu_1(x;\nu)$ over interval $[a,b]$. Note that $c_{10}(x)=0$ but
$c_1(x;\nu)=c_{10}(x)+c_{11}(x)\nu+O(\nu^2)=c_{11}(x)\nu+O(\nu^2)\neq 0$ as shown in the figure. Also note the large drop of  
$\mu_1(x)$ over this interval due to that $R$ is taken to be small.}
\end{figure} 

  \subsection{Internal dynamics over the interval $(b,1)$}
   \begin{prop}\label{expProfile31} For $x\in (b,1)$,
 \begin{align*}
 \phi(x;\nu)=& \phi_0(x)+ \phi_1(x)\nu+O(\nu^2)\;\mbox{ with }\\
 \phi_0(x)=&\ln R-\ln c_{10}(x),\\
 \phi_1(x)=&\phi_1^b-\frac{1}{2}\left(B_0^2-\frac{2B_1-B_0^2}{B_0}\right)-\frac{B_0c_{11}(x)-B_1c_{10}(x)}{B_0c_{10}(x)}+\frac{2J_{11}}{J_{20}}\ln{\frac{c_{10}(x)}{B_0}};\\
 c_{10}(x)=& B_0-\frac{J_{20}}{2}(H(x)-H(b)),\quad
 c_{11}(x)= B_1-\frac{J_{11}+J_{21}}{2}(H(x)-H(b));\\
 \mu_{10}(x)=&\mu_{10}(1)=\ln{R},    \quad
 \mu_{11}(x)=\phi_1(x)+\frac{c_{11}(x)}{c_{10}(x)},
 \end{align*}
 where
 \begin{align*}
\phi_1^{b}=&\frac{\ln{\frac{B_0}{R}}}{\ln{\frac{B_0L}{A_0R}}}\left(2\Big(\frac{B_1}{B_0}-\frac{A_1}{A_0}\Big)+\frac{\beta-\alpha}{\alpha}(L-A_0)\right)-2\frac{B_1}{B_0}+B_0.
 \end{align*}
 \end{prop}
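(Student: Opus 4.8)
The plan is to determine the zeroth-order-in-$\varepsilon$ (outer) profiles on $(b,1)$, where $Q\equiv0$, by integrating the reduced slow system and then fixing the integration constants by matching to $(a,b)$ at $x=b$ and by the boundary data at $x=1$. In the limit $\varepsilon\to0$ the Poisson equation in \eqref{PNP} collapses to the electroneutrality relation $c_1-c_2=0$, so on $(b,1)$ one may write $c_1=c_2=:c$. With the dimensionless ideal potentials $\mu_1=\phi+\ln c$ and $\mu_2=-\phi+\ln c$ and with $H'(x)=1/(D(x)h(x))$, the Nernst--Planck equations $-J_k=D(x)h(x)c_k\mu_k'$ yield, upon adding the two (which eliminates $\phi$ via $\mu_1+\mu_2=2\ln c$) and keeping the cation equation alone,
\begin{align*}
\frac{dc}{dx}=-\frac{J_1+J_2}{2}H'(x),\qquad \frac{d\mu_1}{dx}=-J_1\frac{H'(x)}{c},\qquad \phi=\mu_1-\ln c.
\end{align*}
This is the system I would integrate.

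Next I would substitute $J_1=J_{11}\nu+O(\nu^2)$ (using $J_{10}=0$ from Proposition~\ref{expJs}), $J_2=J_{20}+J_{21}\nu+O(\nu^2)$, and the expansions $c=c_{10}+c_{11}\nu+\cdots$, $\mu_1=\mu_{10}+\mu_{11}\nu+\cdots$, $\phi=\phi_0+\phi_1\nu+\cdots$, and collect powers of $\nu$. At order $\nu^0$ the cation equation reads $\mu_{10}'=0$, so $\mu_{10}$ is constant; the boundary value $\mu_1(1)=\phi(1)+\ln c_1(1)=\ln R$ (equivalently $\mu_{10}(b)=\ln R$ from Corollary~\ref{claim21}) gives $\mu_{10}\equiv\ln R$. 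Integrating $c_{10}'=-\tfrac{J_{20}}{2}H'$ from $x=b$ produces $c_{10}(x)=B_0-\tfrac{J_{20}}{2}(H(x)-H(b))$ with $B_0=c_{10}(b)$, whence $\phi_0=\mu_{10}-\ln c_{10}=\ln R-\ln c_{10}$, as stated. At order $\nu^1$, $c_{11}'=-\tfrac{J_{11}+J_{21}}{2}H'$ integrates to the stated $c_{11}$ with $B_1=c_{11}(b)$; and because $\int_b^x H'/c_{10}\,ds=-\tfrac{2}{J_{20}}\ln\big(c_{10}(x)/B_0\big)$, the equation $\mu_{11}'=-J_{11}H'/c_{10}$ gives $\mu_{11}(x)=\mu_{11}(b)+\tfrac{2J_{11}}{J_{20}}\ln\big(c_{10}(x)/B_0\big)$. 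Using $\phi_1=\mu_{11}-c_{11}/c_{10}$ (the $\nu^1$ term of $\phi=\mu_1-\ln c$) then reproduces the claimed $\phi_1(x)$ provided $\mu_{11}(b)=\phi_1^b-\tfrac12\big(B_0^2-\tfrac{2B_1-B_0^2}{B_0}\big)+\tfrac{B_1}{B_0}$, and the displayed identity $\mu_{11}=\phi_1+c_{11}/c_{10}$ is just the same $\nu^1$ relation rearranged.

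The remaining and genuinely hard step is to justify the matching values $B_0$, $B_1$ and $\phi_1^b$ at $x=b$. This is the main obstacle, because the outer solutions on the two sides cannot be joined directly: over $(a,b)$ the cation concentration is only $O(\nu)$ (Proposition~\ref{expProfile21} has $c_{10}\equiv0$ there) while on $(b,1)$ it is $O(1)$, and moreover $\phi$ carries the large term $-\ln\nu$ on $(a,b)$ that is absent on $(b,1)$. One must analyze the fast dynamics of the full system in a stretched variable across the internal layer at $x=b$, where $Q$ drops from $2Q_0$ to $0$, and read off the connection (jump) relations that tie $A_0,\phi_1^a$ on the left to $B_0,B_1,\phi_1^b$ on the right. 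I would carry this out within the geometric singular perturbation framework of \cite{EL07,ZL17}, taking the flux coefficients $J_{20},J_{11},J_{21}$ from Proposition~\ref{expJs} and the already-established constants $A_0,B_0,A_1,B_1,\phi_1^a$ from Proposition~\ref{expProfile21}; the first-order potential jump across this layer, together with $\phi(1)=0$, then fixes $\phi_1^b$ in the stated form.
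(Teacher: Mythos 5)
Your outer (slow-system) computation on $(b,1)$ is correct and follows the same route as the paper's source: the paper itself offers no argument for Proposition \ref{expProfile31} beyond citing \cite{ZL17}, and the derivation there is precisely the electroneutral reduction you write down ($c_1=c_2=c$ since $Q\equiv 0$, $c'=-\tfrac{J_1+J_2}{2}H'$, $\mu_1'=-J_1H'/c$), expanded in $\nu$ using $J_{10}=0$ from Proposition \ref{expJs}. All of your intermediate identities check against the stated formulas: $\phi_0=\ln R-\ln c_{10}$ with $\mu_{10}\equiv\ln R$ fixed by the data at $x=1$, the linear-in-$H$ forms of $c_{10}$ and $c_{11}$, the integral $\int_b^x H'/c_{10}=-\tfrac{2}{J_{20}}\ln(c_{10}(x)/B_0)$ giving $\mu_{11}(x)=\mu_{11}(b)+\tfrac{2J_{11}}{J_{20}}\ln(c_{10}(x)/B_0)$, the relation $\mu_{11}=\phi_1+c_{11}/c_{10}$, and the value $\mu_{11}(b)=\phi_1^b-\tfrac12\bigl(B_0^2-\tfrac{2B_1-B_0^2}{B_0}\bigr)+\tfrac{B_1}{B_0}$ needed for your $\phi_1$ to coincide with the displayed one.

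The one substantive step you do not carry out is the determination of $\phi_1^b$ (equivalently of $\mu_{11}(b^+)$), which is the only constant in this proposition not already available from Proposition \ref{expProfile21}; $B_0$ and $B_1$ you may legitimately import from there. Your plan---match across the internal layer at $x=b$ where $Q$ drops from $2Q_0$ to $0$---is the right one, but note that it is not a routine continuity argument: the layer joins a region where $c_1=O(\nu)$, $c_2=O(1/\nu)$ and $\phi$ carries a $-\ln\nu$ term to a region where both concentrations are $O(1)$ and $\phi$ is regular, so the $\varepsilon$-layer map (and its conserved quantities $\mu_1,\mu_2$) must itself be expanded in $\nu$ and tracked at first order to extract the potential shift that yields the displayed formula for $\phi_1^b$. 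That matching computation is the real content of the proposition; it is performed in \cite{ZL17} and is not reproduced in this paper, so your proposal should be read as a correct and complete reduction of the claim to that one layer computation rather than a self-contained proof.
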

      
   Figure 3 shows the profiles of $c_1(x;\nu)$,  $\phi(x;\nu)$, and $\mu_1(x;\nu)$ over interval $[b,1]$.
   \begin{figure}[htbp]\label{c1three}
\centering
\begin{minipage}[t]{0.4\linewidth}
 \includegraphics[width=2.5in]{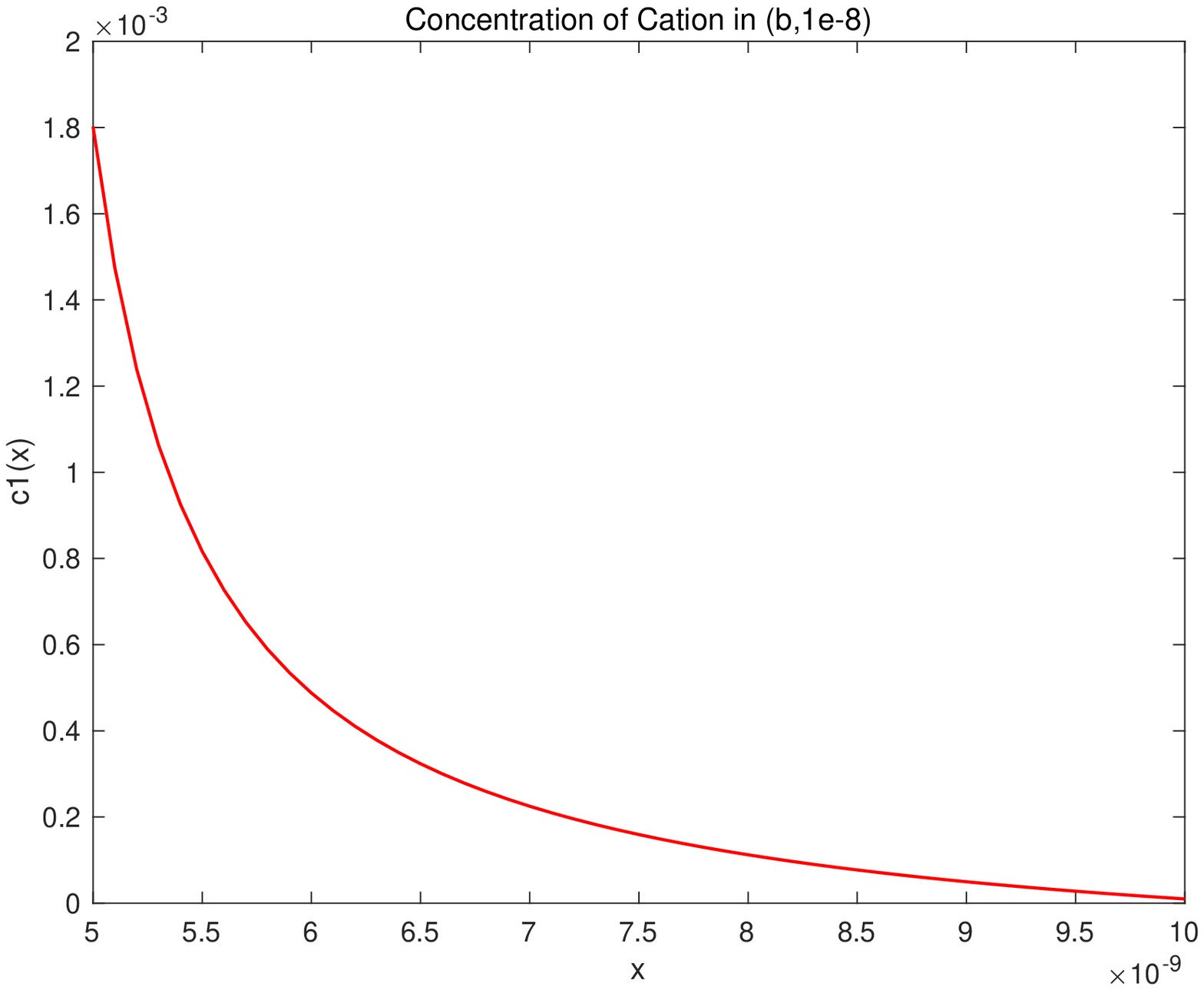}
 \end{minipage}
\qquad
 \begin{minipage}[t]{0.4\linewidth}
 \includegraphics[width=2.5in]{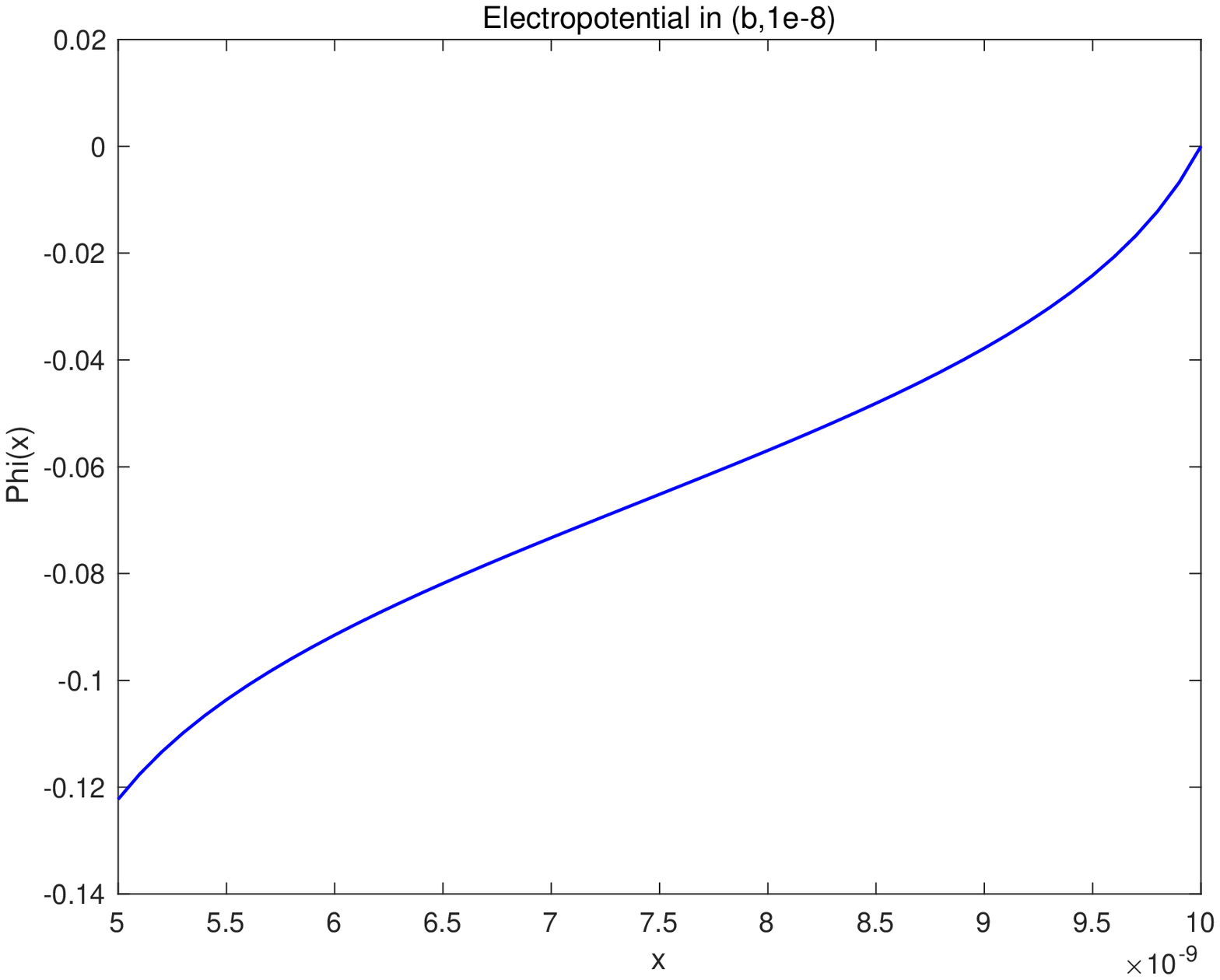}
\end{minipage}
\qquad
\begin{minipage}[t]{0.4\linewidth}
 \includegraphics[width=2.5in]{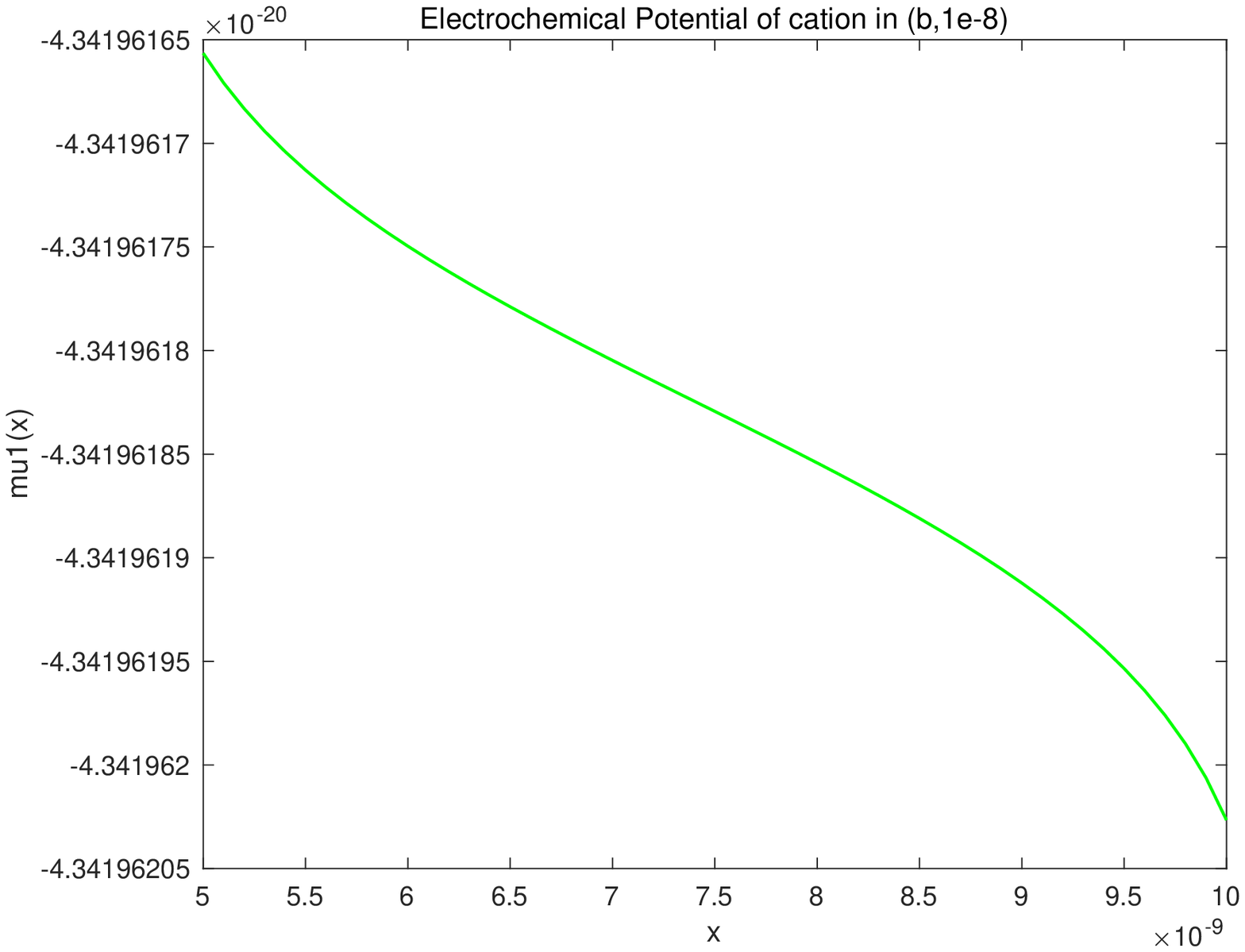}
 \end{minipage}
\caption{Profiles of  $c_1(x;\nu)$,  $\phi(x;\nu)$, and $\mu_1(x;\nu)$ on interval $[b,1]$. Note that $\mu_{10}'(x)=0$ but $\mu_1'(x;\nu)=\mu_{10}(x)+ \mu_{11}'(x)\nu+O(\nu^2)= \mu_{11}'(x)\nu+O(\nu^2)\neq 0$.}
\end{figure}

 \begin{cor}\label{claim31} Over the interval $(b,1)$,   noting $B_0=O(\sqrt{R})$,
  $c_{10}(x)=O(1)$ and $\mu_1'(x)=0$, and hence, $J_{10}=0$.
    \end{cor}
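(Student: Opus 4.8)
The plan is to obtain the corollary as a direct reading of Proposition~\ref{expProfile31}, in the same spirit as Corollaries~\ref{claim11} and~\ref{claim21}, and then to close with the leading-order form of the Nernst--Planck equation. The common engine of Section~\ref{Dyn4J10} is the second relation in (\ref{PNP}), which I write as
\[
-J_1=\frac{1}{k_BT}D(x)h(x)c_1(x)\,\mu_1'(x);
\]
substituting $J_1=J_{10}+J_{11}\nu+O(\nu^2)$, $c_1=c_{10}+c_{11}\nu+O(\nu^2)$ and $\mu_1'=\mu_{10}'+\mu_{11}'\nu+O(\nu^2)$ and collecting the $\nu^0$ terms gives
\[
-J_{10}=\frac{1}{k_BT}D(x)h(x)\,c_{10}(x)\,\mu_{10}'(x),
\]
valid at every $x\in(b,1)$. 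It is therefore enough to exhibit one factor on the right that annihilates the product; on $(b,1)$ that factor will be $\mu_{10}'$ (the statement's $\mu_1'$ being understood at leading order, as in Figure~3 and Corollary~\ref{claim11}), exactly as on $(0,a)$ and in contrast to the depletion mechanism $c_{10}\equiv0$ operating on $(a,b)$.

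First I would record the size estimates. In the defining formula from Proposition~\ref{expProfile21} the numerator of $B_0$ carries an explicit factor $\sqrt{R}$ while the denominator $(1-\beta)\sqrt{e^{V}L}+\alpha\sqrt{R}$ approaches the nonzero limit $(1-\beta)\sqrt{e^{V}L}$ as $R\to0$, so $B_0\sim\sqrt{Le^{-V}}\,\sqrt{R}=O(\sqrt{R})$; the same computation applied to $J_{20}$ in (\ref{ejJ}) gives $J_{20}=O(\sqrt{R})$, whence $c_{10}(x)=B_0-\tfrac{J_{20}}{2}(H(x)-H(b))=O(\sqrt{R})$ in the bath variable. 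Next I would read off from Proposition~\ref{expProfile31} that $\mu_{10}(x)=\ln R$ is independent of $x$, hence $\mu_{10}'(x)=0$ on $(b,1)$, and that $c_{10}$ is a genuine (nonvanishing) leading term: being affine in the increasing function $H(x)$, it is positive on all of $[b,1]$ as soon as it is positive at the two endpoints, and indeed $c_{10}(b)=B_0>0$ while $c_{10}(1)=R>0$ is the imposed datum $c_1(1)=R$ matched at order $\nu^0$. With $c_{10}(x)>0$ and $\mu_{10}'(x)=0$, the $\nu^0$ balance above forces $J_{10}=0$.

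The step that needs the most care is keeping the two asymptotic variables apart. On $(b,1)$ the coefficient $c_{10}$ is simultaneously $O(1)$ in the $\nu$-expansion, i.e. the nonzero $\nu^0$ term, and $O(\sqrt{R})$ as a function of the small right-bath concentration $R$; I must make sure the conclusion rests on the vanishing of $\mu_{10}'$ together with $c_{10}\not\equiv0$, and not on a spurious cancellation between two quantities that are both small in $R$. Concretely, the only genuine verification behind the whole corollary is the endpoint identity $c_{10}(1)=R>0$ (together with $c_{10}(b)=B_0>0$); once these are in hand, everything else is substitution from Proposition~\ref{expProfile31} and the $\nu^0$ matching described above.
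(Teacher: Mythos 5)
Your proposal is correct and follows essentially the same route as the paper, which treats the corollary as a direct reading of Proposition~\ref{expProfile31}: $\mu_{10}(x)=\ln R$ is constant so $\mu_{10}'(x)=0$, $c_{10}(x)=B_0-\tfrac{J_{20}}{2}(H(x)-H(b))$ is a nonvanishing $O(1)$-in-$\nu$ leading term (though $O(\sqrt{R})$ in $R$), and the $\nu^0$ balance of the Nernst--Planck relation then forces $J_{10}=0$. Your explicit verification that $c_{10}(1)=R$ and your care in separating the $\nu$- and $R$-asymptotics are welcome additions but do not change the argument.
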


     \subsection{Summary of mechanism for $J_{10}=0$.}
From the above discussion, we conclude that the mechanisms for $J_{10}=0$ are different over each subintervals. More precisely, 
one has that, 
\begin{itemize}
\item over the first interval $(0,a)$, $J_{10}=0$ is the result of {\em constant} electrochemical potential $\mu_{10}(x)=\mu_{10}(0)=\mu_{10}(a)$ (so that $\mu_{10}'(x)=0$)    while $c_{10}(x)\neq 0$; 
\item over the last interval $(b,1)$, similar to that over the first interval $(0,a)$,  $\mu_{10}(x)=\mu_{10}(1)$ so that $\mu_{10}'(x)=0$ while $c_{10}(x)\neq 0$, and, for $R$ small, $c_{10}(x)=O(\sqrt{R})$ over this interval; 
\item over the middle interval $(a,b)$ where permanent charge is not zero and large, the electrochemical potential $\mu_{10}(x)$ is not a constant so that $\mu_{10}'(x)\neq 0$, however, $c_{10}(x)=0$, in particular, the drop of $\mu_{10}(x)$ over this interval equals the transmembrane electrochemical potentials, that is,  $\mu_{10}(a)-\mu_{10}(b)=\mu_1(0)-\mu_1(1)$.
\end{itemize} 

Here we provide the profiles   of concentration (Fig. 4), electrical potential (Fig. 5) and electrochemical potential (Fig. 6) over the whole interval $[0,1]$.

The figures of $c_1(x)$ and $\phi(x)$ over interval $(0,x_0)$ are not continue, because we make the plots of system \eqref{ssPNP} with $\varepsilon=0$. For the limiting system at $x=a$ and $x=b$, there are two fast layers, $c_1(x)$ and $\phi(x)$ changes very fast, but $\mu_1(x)$ keeps the same value in fast layers.
  Recall  that $\mu_1^\delta=\mu_1(0)-\mu_1(1)=k_B T(V+\ln{L}-\ln{R})$, if $R=10^{-5}$, $\mu_1^\delta \sim 2.5902\times 10^{-23}$.

\begin{figure}[htbp]\label{c1}
\centering
  \begin{minipage}[t]{0.6\linewidth}
 \includegraphics[width=3in]{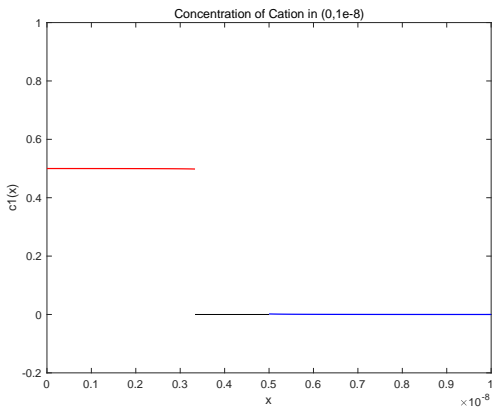}
\end{minipage}
\caption{Profile  of $c_1(x;\nu)$  over $[0,1]$. Note that $c_{10}(x)=0$ for $x\in (a,b)$  and, for $x\in (b,1)$, $c_{10}(x)=O(\sqrt{R})$  is not zero but small for   $R=10^{-5}$ chosen for the numerics.}
\end{figure}

\begin{figure}[htbp]\label{phi}
\centering
 \begin{minipage}[t]{0.6\linewidth}
 \includegraphics[width=3in]{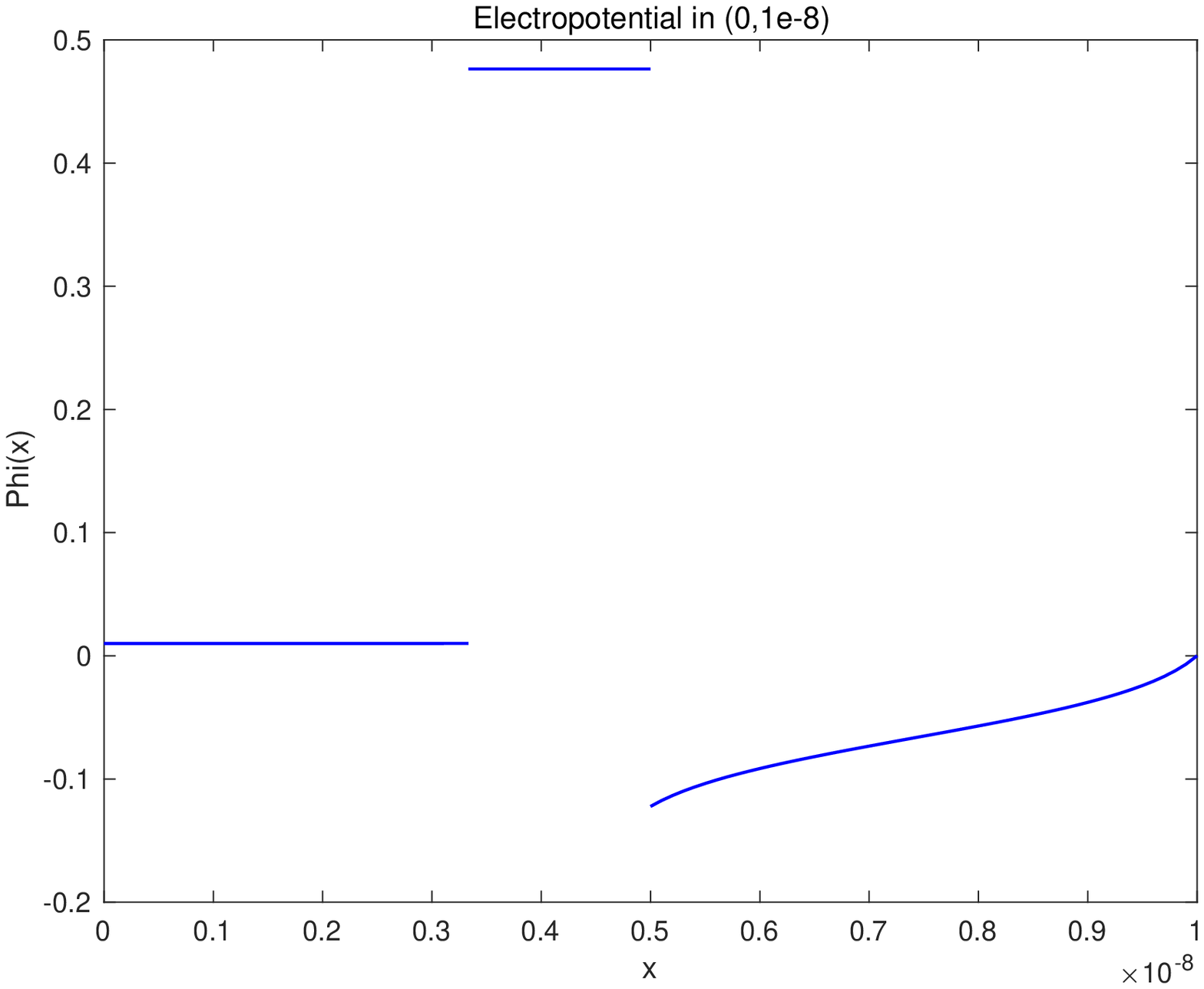}
\end{minipage}
\caption{Profile of $\phi(x;\nu)$ over $[0,1]$ }
\end{figure}

\begin{figure}[htbp]\label{mu1}
\centering
  \begin{minipage}[t]{0.6\linewidth}
 \includegraphics[width=3in]{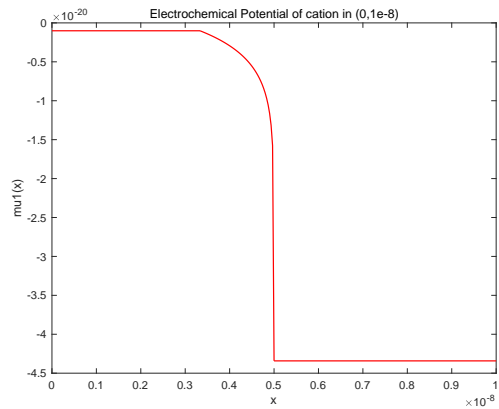}
\end{minipage}
\caption{Profile of $\mu_1(x;\nu)$ over   $[0,1]$. Note the large drop of $\mu_1$ over the interval $(a,b)=(1/3,1/2)$ and the nearly constant values over the other two subintervals.}
\end{figure}

\newpage
\section{Declining phenomenon and internal dynamics}\label{Dyn4declin}
\setcounter{equation}{0}
In this section, we will show that {\em large permanent charge is a mechanism for the   declining phenomenon}   described in the introduction. Recall that, by {\em the declining phenomenon}, we mean the following observed experimentally. 
 
{\em For fixed $V$ and $L_1=L_2=L$, as $R_1=R_2=R$ decreases to zero, the flux of counterion ($J_2$ in the setting   since $Q_0>0$) decreases monotonically to zero.}
\medskip

 This phenomenon is rather  counterintuitive since the transmembrane  electrochemical potential for the counter-ion 
\[ \mu_2(0)-\mu_2(1)=z_2V+\ln L_2-\ln R_2\to +\infty\;\mbox{ as }\; R_2=R \to 0.\] 

\subsection{Experimental phenomena are consistent with our analysis}

  The result in (\ref{ejJ}) actually justifies the observation, up to the leading order $J_{20}$ in $\nu$ for $\nu$ near $0$ (or for $Q_0\to +\infty$); that is,
  \begin{prop}\label{declining} The leading order term $J_{20}$ of the flux is monotone and concave downward as a function of $R$ and, as $R\to 0$, $J_{20}\to 0$. 
  \end{prop}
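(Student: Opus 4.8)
The plan is to treat the explicit expression for $J_{20}$ in (\ref{ejJ}) as a one‑variable function of $R$ with all other quantities $V,L,H(1),\alpha,\beta$ held fixed, and to establish the three asserted properties by elementary calculus after a rationalizing substitution. Writing $s=\sqrt R$ and introducing the positive constants $K=2\sqrt L/H(1)$, $m=\sqrt{e^{-V}L}$, $p=(1-\beta)\sqrt L$ and $q=\alpha\sqrt{e^{-V}}$ (all positive because $0<\alpha<\beta<1$ and $L,V,H(1)$ are fixed), the formula collapses to
\[
J_{20}=K\,\frac{s(m-s)}{p+qs},\qquad s=\sqrt R .
\]
From this form the limit is immediate: as $R\to 0$ we have $s\to 0$, so $J_{20}\to K\cdot 0=0$, which is the last assertion.

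For monotonicity I would differentiate in $R$ using $ds/dR=1/(2\sqrt R)$. A short computation gives
\[
\frac{dJ_{20}}{dR}=\frac{K\,\bigl(mp-2ps-qs^2\bigr)}{2\sqrt R\,(p+qs)^2}.
\]
Since the denominator is positive for $R>0$, the sign is governed by the quadratic $N(s)=-qs^2-2ps+mp$, which is strictly decreasing in $s\ge 0$ and satisfies $N(0)=mp>0$. Hence $N$ has a unique positive root $s^\ast$, and $J_{20}$ is strictly increasing on $R\in(0,R^\ast)$ with $R^\ast=(s^\ast)^2$. This is exactly the regime in which the declining phenomenon is observed: as $R$ decreases toward $0$ the flux $J_{20}$ decreases monotonically to $0$.

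For concavity I would differentiate once more and, after clearing denominators and writing $w=p+qs$, reduce the sign of $d^2J_{20}/dR^2$ to that of the cubic $P(w)=w^3-3(mpq+p^2)\,w+2p^2(mq+p)$, up to the manifestly positive factor $K/(4s^3 q\,w^3)$. Evaluating at $R=0$, i.e. $w=p$, gives $P(p)=-mp^2q<0$, so $J_{20}$ is concave down near $R=0$; moreover $P$ is nonpositive at the interior maximum $R^\ast$ (where $dJ_{20}/dR=0$ and $J_{20}$ has a local max, forcing $d^2J_{20}/dR^2\le 0$), so $P<0$ throughout $(0,R^\ast)$ and $J_{20}$ is concave down on the whole monotone regime.

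The main obstacle is this last step: controlling the sign of the second derivative. Unlike the first derivative, whose sign reduces to a quadratic with an obvious root pattern, concavity reduces to the cubic $P(w)$, which does not factor over the data and is \emph{not} of one sign on all of $(0,\infty)$ (indeed $P(w)\to+\infty$ as $w\to\infty$, reflecting the convex tail where $J_{20}\sim -K\sqrt R/q$). The real work is therefore to pin down that $P$ stays negative on the relevant interval $(0,R^\ast)$—for which I would combine $P(p)<0$ with the sign of $P$ at the critical point $R^\ast$, or equivalently locate the unique root of $P$ in $w>p$ and verify that it exceeds $w(R^\ast)$.
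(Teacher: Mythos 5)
Your treatment of the limit and of monotonicity is essentially the paper's own proof: the paper likewise substitutes $w=\sqrt{R}$, notes $J_{20}\to 0^+$ as $w\to 0^+$, and reads off the sign of $J_{20}'$ from the quadratic numerator $(1-\beta)L\sqrt{e^{-V}}-2(1-\beta)\sqrt{L}\,w-\alpha\sqrt{e^{-V}}\,w^2$, which is positive for $w$ below its unique positive root --- this is exactly your $N(s)$, so $J_{20}$ increases monotonically from $0$ on $(0,R^\ast)$. Where you genuinely add something is concavity: the paper merely asserts that ``it is not hard to show'' that $J_{20}$ is concave downward for $R$ small, and gives no computation, whereas you reduce the sign of $d^2J_{20}/dR^2$ to that of the cubic $P(w)=w^3-3(mpq+p^2)w+2p^2(mq+p)$ and verify $P(p)=-mp^2q<0$. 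The step you flag as the remaining work closes more cleanly than you suggest: $P'(w)=3\bigl(w^2-(mpq+p^2)\bigr)$, and the endpoint $w^\ast=w(R^\ast)$ of the monotone regime satisfies $(w^\ast)^2=mpq+p^2$ (this is precisely the vanishing of $N$, since $N=(mpq+p^2-w^2)/q$), so $P$ is strictly decreasing on the entire interval $(p,w^\ast)$ and hence $P(w)<P(p)<0$ there --- no separate evaluation at $R^\ast$ or root location is needed. This gives concavity of $J_{20}$ on all of $(0,R^\ast)$, slightly more than the paper's claim of concavity for $R$ below some unspecified threshold; your argument is correct once this one observation is inserted.
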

\begin{proof}   Indeed, from the expression of $J_{20}$ in (\ref{ejJ}) and treating $J_{20}$ as a function of $w$ where $R=w^2$ for convenience, one has
\[J_{20}(w)=\frac{\sqrt{L}}{2H(1)}\frac{\sqrt{e^{-V}L}w-w^2}{(1-\beta)\sqrt{L}+\alpha\sqrt{e^{-V}}w}.\]
It is clear that $J_{20}(w)\to 0^+$ as $w\to 0^+$. Note that the derivative of $J_{20}$ in $w$ is 
\[ J_{20}'(w)=\frac{1}{2H(1)}\frac{(1-\beta)L\sqrt{e^{-V}}-2(1-\beta)\sqrt{L}w-\alpha \sqrt{e^{-V}} w^2}{[(1-\beta)\sqrt{L}+\alpha\sqrt{e^{-V}}w]^2}.\]
 Thus, from the expression of the numerator, 
 \medskip
 
 \centerline{\Large \em if $w$ is smaller than some $w_0$, then $J_{20}'(w)>0$,}
 \medskip
 
 \noindent
 and hence, as $w\to 0^+$ (or equivalently, $R_2=R\to 0$) monotonically over the interval $[0,w_0]$, $J_{20}(w)\to 0$ monotonically.  
 
It is not hard to show that for $R_2=R\ge 0$ but smaller than some positive value, the graph of $J_{20}$ as a function of $R$ is  concave downward.
\end{proof}
In Figure 7,   the horizontal axis is for $R$ and the vertical for $J_{20}$. We fix $V=0.425$ and $L=0.5$ as in (\ref{parameters}),  and vary $R\in (0,10^{-3}]$. 
   The monotonicity and concave downward features of the graph are apparent. 
\medskip

\begin{figure}[htbp]
\centering
\begin{minipage}[t]{0.6\linewidth} 
 \includegraphics[width=3in]{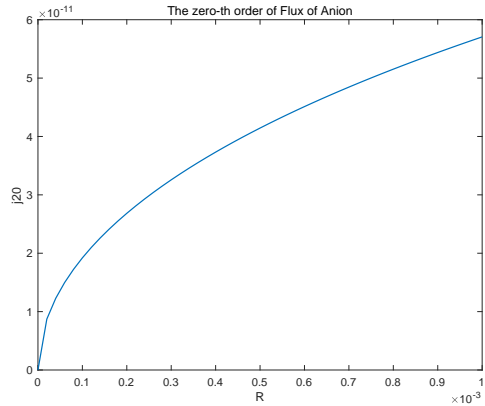}
 \end{minipage}
  \caption{Declining curve:  $J_{20}$ vs  $R$ for $R\in (0, 10^{-3}]$ with $L=0.5$ and $V=0.425$.}
\end{figure}

\begin{rem}\label{largemu}  We comment that, for large permanent charge, the declining curve phenomenon occurs when the transmembrane electrochemical potential $\mu_2(0)-\mu_2(1)$ is increasing to infinity in a particular way; that is, as $R\to 0$. If one increases  the transmembrane electrochemical potential $\mu_2(0)-\mu_2(1)$ in a different manner, for example, as $|V|\to \infty$ or as $L\to \infty$,  Corollary \ref{Saturate} shows that the declining curve phenomenon does not happen.

 It is also important to note that, when   the next order term $J_{21}\nu$ is considered, then, as $R_2=R\to 0$, $J_{21}\to \infty$.  But, if $R\to 0$ and $\nu\ln R\to 0$, then $J_{21}\nu\to0$. Thus, only when $\nu$ is very small ($Q_0$ is very large), is the term $J_{21}\nu$ not significant, and hence, the term $J_{20}$ dominates the described behavior. \qed
\end{rem}
\bigskip

 \subsection{Mechanism of declining phenomena from the profiles}
Recall, from the Nernst-Planck equation in (\ref{PNP}) that 
\[-J_2=\frac{1}{k_BT}D_2(x)h(x) c_2(x)\frac{d}{dx}\mu_2(x).\]
 Since $D_2(x)$ and $h(x)$ are fixed, we will treat them as of order $O(1)$ quantities so that they do not contribute much to the near zero flux scenario as $R\to 0$.
Thus, as far as the near zero flux mechanism is concerned, one has
\begin{align}\label{approNP}
-J_2\approx c_2(x)\frac{d}{dx}\mu_2(x).
\end{align}

One sees that the gradient of the electrochemical potential $\frac{d}{dx}\mu_2(x)$ is the main driving force for the flux $J_2$. Intuitively, large drop of (or transmembrane) electrochemical potential $\mu_2(0)-\mu_2(1)$ of $\mu_2$ produces large flux $J_2$. In this sense, the declining curve phenomenon is rather counterintuitive. A careful look  at (\ref{approNP}) reveals that there is only one possibility for the declining curve phenomenon; that is, whenever  $\frac{d}{dx}\mu_2(x)$ is  large,  $c_2(x)$ has to be much smaller in order to produce a small flux $|J_2|$. We will apply the analytical results of  the internal dynamics from (\cite{ZL17}) to show that this is indeed the case.   

For our setup,  there are three regions of permanent charge $Q(x)$: $Q(x)=0$ for $x\in [0,a)$ and $x\in (b,1]$ and $Q(x)=2Q_0$ for $x\in [a,b]$ with {\em large} $Q_0$ or {\em small} $\nu$
with ${\nu}=1/{Q_0}$.
 
 For fixed $V$ and $L$, $\mu_2(0)-\mu_2(1)=z_2V+\ln L-\ln R\approx -\ln R\gg 1$ for small $R$.
  We need to understand   
 
 (i) HOW the electrochemical potential $\mu_2$ drops an order $O(-\ln R)\gg 1$ over the interval $x\in [0,1]$;
 
  (ii) HOW
 $J_2$ can be small, for small   $\nu$ and $R$,   at every $x\in [0.1]$;
 
 (iii) Most importantly, HOW the above two things, with the constraint  (\ref{approNP}),  can happen simultaneously.
 
 There are two small parameters $\nu$ and $R$ in our considerations. The relative sizes of these two parameters is relevant for the result. We will assume $\nu\ln R\ll 1$. 
 
We now discuss what happens over each subinterval     based on the approximated (of zeroth order in $\varepsilon$) functions of profiles. To do so, let   
\[(\phi(x;\varepsilon,\nu),c_k(x;\varepsilon,\nu), J_k(\varepsilon,\nu))=(\phi(x;\nu),c_k(x;\nu) , J_k(\nu))+O(\varepsilon)\] 
be the   solution of the boundary value problem. For   $\nu>0$ small, one has the following expansions
\begin{align*}
J_2(\nu)=J_{20}+J_{21}\nu+O(\nu^2),
 \end{align*}
 where   $J_{20}$ and $J_{21}$ are given in (\ref{ejJ}). The expansions in $\nu$ for 
 $\phi(x;\nu)$,  $c_1(x;\nu)$, and  $c_2(x;\nu)$ are not {\em regular} and are qualitatively different  over the subintervals $(0,a)$, $(a,b)$ and $(b,1)$. They will be   given explicitly in each subsection below for us to understand what happens over each subinterval. 
 
     \subsubsection{Internal dynamics over the interval $(0,a)$}
   The leading order terms of $(\phi,c_2)$  are derived in \cite{ZL17}. One has
      \begin{prop}\label{expProfile12} For $x\in (0,a)$, 
  \begin{align*}
  \phi(x;\nu)=&\phi_0(x)+\phi_1(x)\nu+O(\nu^2) \mbox{ with }\\
  \phi_0(x)=& V- \ln\Big(1- \frac{J_{20}}{2L}H(x)\Big),\quad
 \phi_1(x)=-\frac{c_{21}(x)}{c_{20}(x)}+\frac{2J_{11}}{J_{20}}\ln{\frac{c_{20}(x)}{L}};\\
c_{20}(x)=&  L- \frac{J_{20}}{2}H(x),\quad
 c_{21}(x)= -\frac{J_{11}+J_{21}}{2}H(x);\\
 \mu_{20}(x)=&-V+2\ln{c_{20}(x)}-\ln{L},\quad
 \mu_{21}(x)=2\left(\frac{c_{21}(x)}{c_{20}(x)}-\frac{J_{11}}{J_{20}}\ln{\frac{c_{20}(x)}{L}}\right).
 \end{align*}
 \end{prop}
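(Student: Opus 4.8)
The plan is to integrate the reduced ($\varepsilon=0$) system directly on $(0,a)$, treating the fluxes $J_1,J_2$ as the known constants supplied by Proposition~\ref{expJs}, and then to read off the coefficients of the $\nu$-expansion. On $(0,a)$ one has $Q(x)=0$ by \eqref{Q}, so setting $\varepsilon=0$ in the Poisson equation of \eqref{PNP} forces the local electroneutrality relation $c_1(x)=c_2(x)=:c(x)$. Writing $\mu_k=z_k\phi+\ln c_k$ for the dimensionless ideal electrochemical potential (\eqref{idealComp} divided by $k_BT$), the Nernst--Planck relations in \eqref{PNP} give $-J_k=D(x)h(x)c_k\,\mu_k'(x)$, i.e.\ $\mu_k'=-J_kH'(x)/c_k$ with $H'(x)=1/(D(x)h(x))$. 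Since $\mu_1=\phi+\ln c$ and $\mu_2=-\phi+\ln c$, these become
\begin{align*}
c'(x)+c(x)\phi'(x)=-J_1H'(x),\qquad c'(x)-c(x)\phi'(x)=-J_2H'(x),
\end{align*}
and adding/subtracting decouples them into $2c'=-(J_1+J_2)H'$ and $2c\phi'=-(J_1-J_2)H'$.

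First I would integrate the concentration equation. Because $z_1L_1+z_2L_2=L-L=0$, the electroneutrality boundary condition \eqref{electroneutral} is met at $x=0$ and no boundary layer forms there, so the reduced solution inherits $c(0)=L$ and $\phi(0)=V$ from \eqref{BVO}; hence $c(x)=L-\tfrac{J_1+J_2}{2}H(x)$. For the potential, $2c\phi'=-(J_1-J_2)H'$ gives $\phi(x)=V-\int_0^x\tfrac{J_1-J_2}{2c}H'\,ds$. At leading order $J_1\to J_{10}=0$, so $J_1-J_2\to-J_{20}$ and $c\to c_{20}$, whence $\phi_0'=\tfrac{J_{20}}{2c_{20}}H'=-c_{20}'/c_{20}$, which integrates to $\phi_0(x)=V-\ln\big(c_{20}(x)/L\big)=V-\ln\big(1-\tfrac{J_{20}}{2L}H(x)\big)$.

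Next I would substitute the flux expansions from Proposition~\ref{expJs}, namely $J_1=J_{11}\nu+O(\nu^2)$ (using $J_{10}=0$) and $J_2=J_{20}+J_{21}\nu+O(\nu^2)$, into the closed forms for $c$ and $\phi$ and collect powers of $\nu$. The $\nu^0$ and $\nu^1$ terms of $c$ give exactly $c_{20}(x)=L-\tfrac{J_{20}}{2}H(x)$ and $c_{21}(x)=-\tfrac{J_{11}+J_{21}}{2}H(x)$. For $\phi_1$ the cleanest route is verification rather than integration: differentiating the claimed $\phi_1(x)=-c_{21}(x)/c_{20}(x)+\tfrac{2J_{11}}{J_{20}}\ln(c_{20}(x)/L)$ and using $c_{20}'=-\tfrac{J_{20}}{2}H'$, $c_{21}'=-\tfrac{J_{11}+J_{21}}{2}H'$ reproduces the $\nu^1$ part of $\phi'=-\tfrac{J_1-J_2}{2c}H'$, while $\phi_1(0)=0$ since $c_{21}(0)=0$ and $c_{20}(0)=L$; so it is the correct antiderivative. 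Finally $\mu_2=-\phi+\ln c$ expanded in $\nu$ yields $\mu_{20}=-\phi_0+\ln c_{20}=-V+2\ln c_{20}-\ln L$ and $\mu_{21}=-\phi_1+c_{21}/c_{20}=2\big(c_{21}/c_{20}-\tfrac{J_{11}}{J_{20}}\ln(c_{20}/L)\big)$, matching the stated formulas; note the consistency with $c_{20}=c_{10}$ of Proposition~\ref{expProfile11}, as electroneutrality demands.

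The hard part is not this elementary integration but everything it takes as input: the flux coefficients $J_{10},J_{11},J_{20},J_{21}$ of Proposition~\ref{expJs}, which are \emph{global} quantities determined by matching the outer solution on $(0,a)$ across the internal layers at $x=a$ and $x=b$ (where $Q$ jumps) to the solutions on $(a,b)$ and $(b,1)$, through the geometric singular perturbation / matched-asymptotics analysis of \cite{ZL17,EL07,Liu05,Liu09}. One must also justify the $\varepsilon\to0$ outer reduction (validity of electroneutrality away from the layers) and verify that the $\nu$-expansion is genuinely regular on compact subintervals of $[0,a)$ --- in sharp contrast to $(a,b)$, where it is singular (cf.\ the $-\ln\nu$ term in Proposition~\ref{expProfile21}). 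Once Proposition~\ref{expJs} furnishes the fluxes, however, the profiles on $(0,a)$ follow purely by the bookkeeping above.
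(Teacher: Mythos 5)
Your derivation is correct, and it is complete modulo the inputs you explicitly flag (the flux coefficients of Proposition~\ref{expJs} and the validity of the outer reduction). The paper itself gives no proof of this proposition---it simply defers to the preprint \cite{ZL17}---but your route (electroneutral outer limit $c_1=c_2=c$ on $(0,a)$ where $Q=0$, exact integration of the decoupled equations $2c'=-(J_1+J_2)H'$ and $2c\phi'=-(J_1-J_2)H'$ from the boundary values at $x=0$, then expansion in $\nu$ using $J_{10}=0$, with $\phi_1$ confirmed by differentiation and the initial condition $\phi_1(0)=0$) is precisely the standard outer-region computation in the geometric singular perturbation framework the paper cites, and every formula you obtain matches the statement.
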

  
  Figure 8 shows profiles of $c_2(x;\nu)$, $\phi(x;\nu)$ and $\mu_2(x;\nu)$ over the interval $(0,a)$.  
 \begin{cor}\label{claim12} Over the interval $(0,a)$,  
 $c_2(x;\nu)=O(1)$  BUT 
 \[\mu_2'(x;\nu)=-\frac{J_{20}}{h(x)c_{20}(x)}+O(\nu\ln R)=O(\sqrt{R},\nu\ln R).\]
  Therefore,  from (\ref{approNP}), 
     \[J_2\approx c_2(x) \frac{d}{dx}\mu_2(x;\nu)=O(1)O(\sqrt{R}, \nu\ln R)=O(\sqrt{R},\nu\ln R)\]
     and $\mu_2(x)$   drops an order of  $O(\sqrt{R},\nu\ln R)$ over the interval $(0,a)$.
    
 In particular, for zeroth order in $\nu$, $J_{2}\approx J_{20}=O(\sqrt{R})$ and $\mu_2'(x)=O(\sqrt{R})$  over the interval $(0,a)$.
  \end{cor}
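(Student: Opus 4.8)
The plan is to read every size estimate off the explicit leading profiles in Proposition~\ref{expProfile12} together with the closed-form flux coefficients in (\ref{ejJ}), organizing the whole argument around the two small parameters $R$ and $\nu$ under the standing hypothesis $\nu\ln R\ll 1$. The backbone is the exact Nernst--Planck identity $-J_2=\frac{1}{k_BT}D_2(x)h(x)c_2(x)\mu_2'(x)$; after treating the fixed factor $\frac{1}{k_BT}D_2(x)h(x)$ as a positive $O(1)$ quantity bounded away from zero, the three assertions of the corollary reduce to the order estimates $c_2=O(1)$, $\mu_2'=O(\sqrt{R},\nu\ln R)$, and the resulting product bound for $J_2$.

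First I would establish the fundamental size estimate $J_{20}=O(\sqrt{R})$. This is immediate from the form of $J_{20}$ in (\ref{ejJ}): the numerator carries the factor $2\sqrt{LR}\,(\sqrt{e^{-V}L}-\sqrt{R})$, while the denominator tends to the nonzero constant $(1-\beta)\sqrt{L}$ as $R\to 0$ (here $1-\beta>0$ since $\beta<1$), so in particular $J_{20}>0$ for $R$ small. Feeding this into $c_{20}(x)=L-\tfrac{J_{20}}{2}H(x)$ gives $c_{20}(x)=L-O(\sqrt{R})$, so $c_{20}$ is $O(1)$ and, crucially, stays bounded away from zero on $[0,a]$ for $R$ small; this is what licenses the subsequent divisions by $c_{20}$ and the logarithm $\ln(c_{20}/L)$.

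Next I would pin down the two flux coefficients at order $\nu$. A direct limit in (\ref{ejJ}) gives $J_{11}\to\tfrac{L^2}{2H(1)(\beta-\alpha)}$ as $R\to 0$ (using $\beta-\alpha>0$), so $J_{11}=O(1)$; by contrast the $\ln\frac{L}{e^VR}$ term forces $J_{21}=O(\ln\frac1R)$ (indeed $J_{21}\to+\infty$, as already flagged in Remark~\ref{largemu}). Hence $c_{21}(x)=-\tfrac{J_{11}+J_{21}}{2}H(x)=O(\ln R)$, and therefore $c_2(x;\nu)=c_{20}(x)+c_{21}(x)\nu+O(\nu^2)=O(1)+O(\nu\ln R)=O(1)$ under $\nu\ln R\ll 1$, which is the first assertion. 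For the second, differentiating $\mu_{20}(x)=-V+2\ln c_{20}(x)-\ln L$ and using $c_{20}'(x)=-\tfrac{J_{20}}{2}H'(x)=-\tfrac{J_{20}}{2D(x)h(x)}$ yields $\mu_{20}'(x)=-\frac{J_{20}}{D(x)h(x)c_{20}(x)}$, which matches the stated leading term (as $D(x)=O(1)$) and is $O(\sqrt{R})$ by the size of $J_{20}$. Differentiating $\mu_{21}$ likewise, the only potentially large contribution enters through $c_{21}'$, hence through $J_{21}$, giving $\mu_{21}'(x)=O(\ln R)$; the term $\tfrac{J_{11}}{J_{20}}\tfrac{c_{20}'}{c_{20}}$ is only $O(1)$ because the $J_{20}$ in $c_{20}'$ cancels the $J_{20}$ in the denominator. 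Thus $\mu_2'(x;\nu)=\mu_{20}'(x)+\mu_{21}'(x)\nu+O(\nu^2)=O(\sqrt{R},\nu\ln R)$. The product estimate $J_2\approx c_2\,\mu_2'=O(1)\cdot O(\sqrt{R},\nu\ln R)$ via (\ref{approNP}) and the drop $\mu_2(a)-\mu_2(0)=\int_0^a\mu_2'\,dx=O(\sqrt{R},\nu\ln R)$ (finite interval) then follow, and the final zeroth-order statement is just the $\nu^0$ part of what has been shown.

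The main obstacle is not any single differentiation but the bookkeeping of the competition between the two small parameters. The quantities $J_{21}$, $c_{21}$, and $\mu_{21}'$ all diverge logarithmically as $R\to 0$, so naively the $\nu$-corrections threaten to swamp the $O(\sqrt{R})$ leading behaviour; the argument only closes because these divergent terms enter multiplied by $\nu$, and the standing hypothesis $\nu\ln R\ll 1$ renders them $o(1)$. Keeping the $R\to 0$ and $\nu\to 0$ limits coupled in this way, rather than taking them independently, is the delicate point, and it is exactly the very-large-$Q_0$ regime singled out in Remark~\ref{largemu} in which $J_{20}$ genuinely dominates.
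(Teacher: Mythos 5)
Your proposal is correct and follows exactly the route the paper intends (the paper gives no explicit proof, presenting the corollary as an immediate consequence of Proposition~\ref{expProfile12} and the flux formulas in (\ref{ejJ})): you read $J_{20}=O(\sqrt{R})$, $J_{11}=O(1)$, $J_{21}=O(\ln R)$ off the closed forms and propagate them through the profiles, correctly identifying the two delicate points --- the cancellation of $J_{20}$ in $\frac{J_{11}}{J_{20}}\frac{c_{20}'}{c_{20}}$ and the need for $\nu\ln R\ll 1$ to tame the divergent $O(\nu)$ corrections. The only cosmetic discrepancy is the factor $D(x)$ in the denominator of the leading term of $\mu_2'$, which you rightly note is $O(1)$ and immaterial to the stated orders.
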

  
  \begin{figure}[htbp]\label{c2one}
\centering
  \begin{minipage}[t]{0.4\linewidth}
 \includegraphics[width=2.5in]{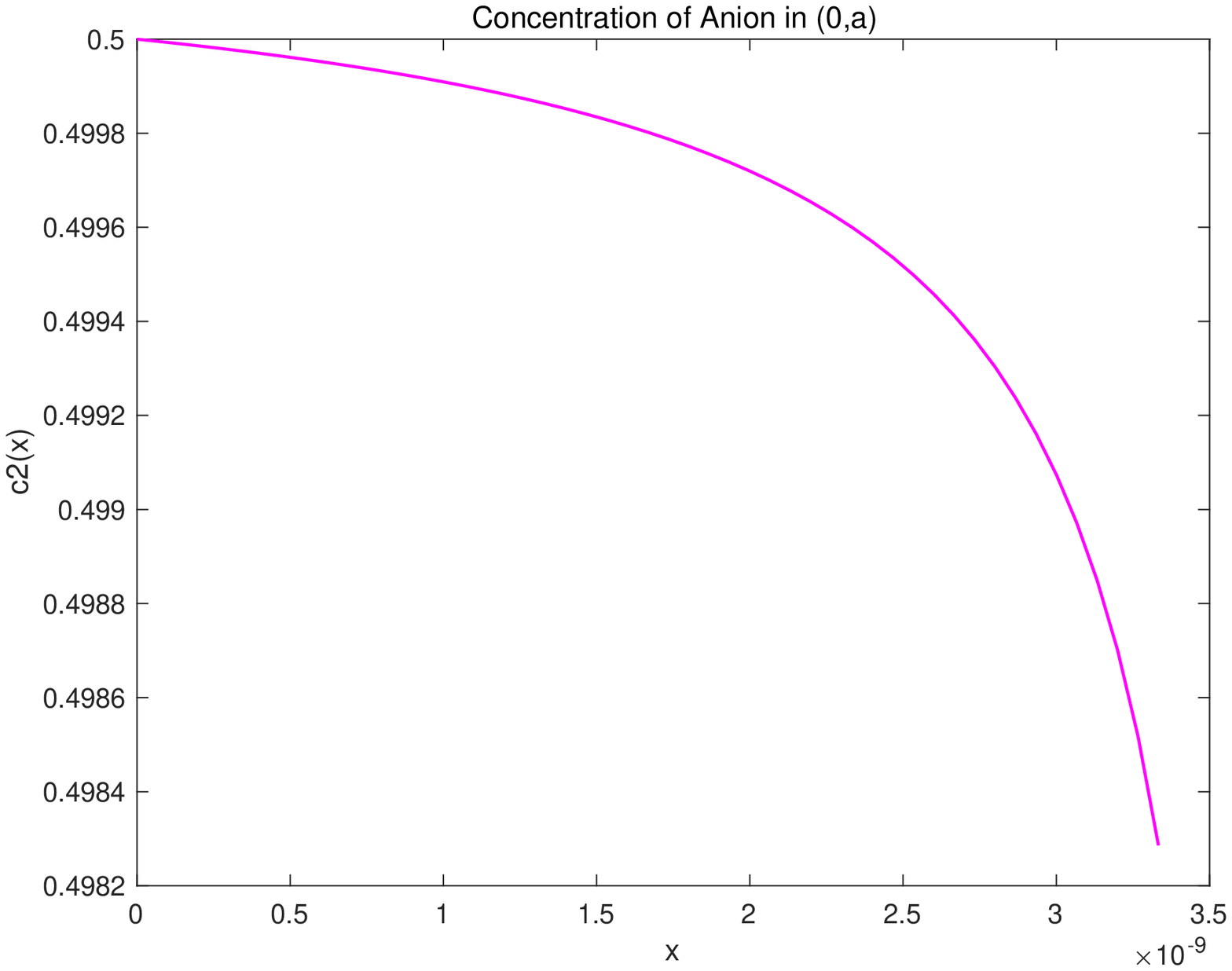}
\end{minipage}
\qquad
 \begin{minipage}[t]{0.4\linewidth}
 \includegraphics[width=2.5in]{phi_0a.eps}
\end{minipage}
 \qquad
 \begin{minipage}[t]{0.4\linewidth}
 \includegraphics[width=2.5in]{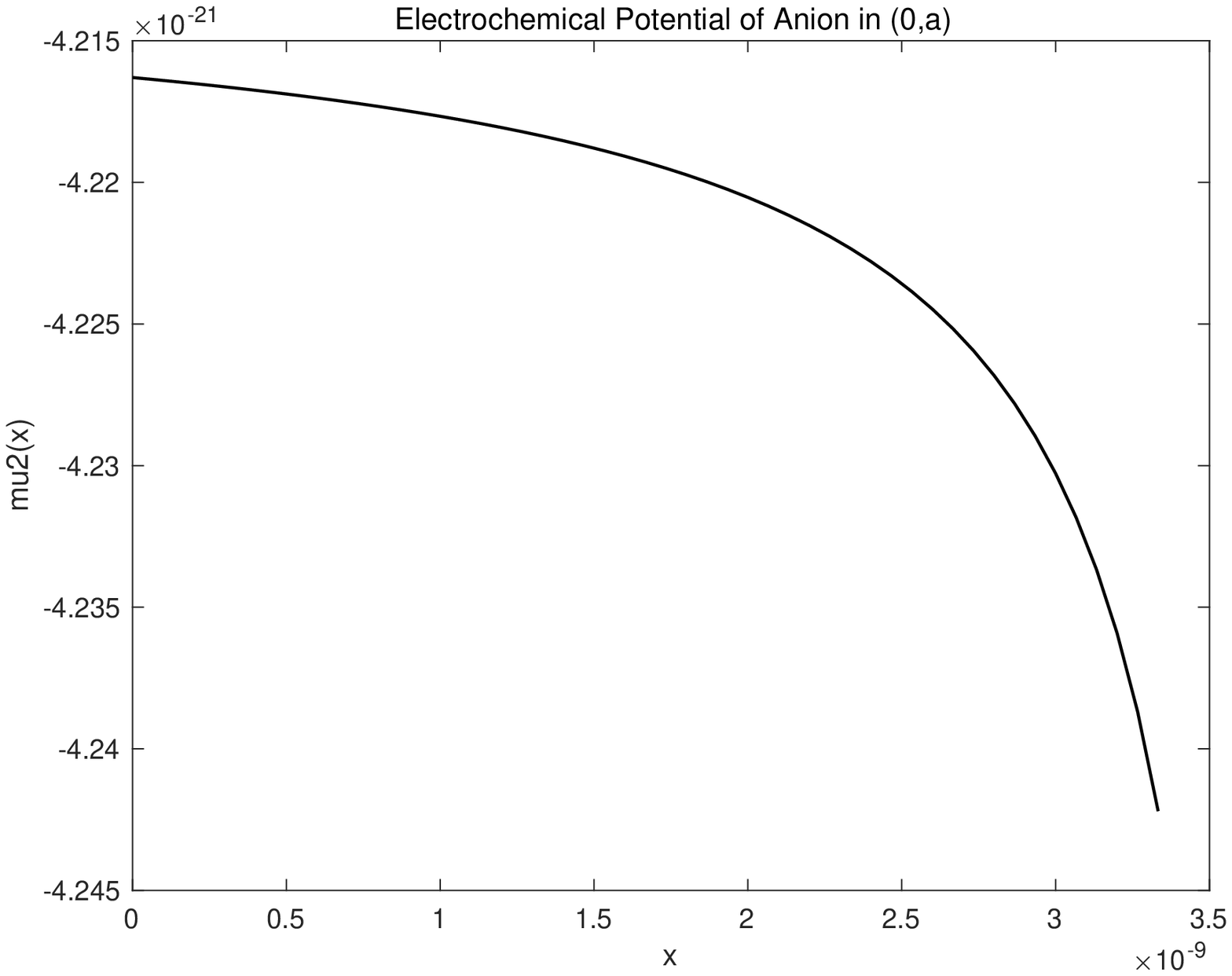}
\end{minipage}
\caption{Profiles of  $c_k(x;\nu)$,  $\phi(x;\nu)$, and  $\mu_k(x;\nu)$ over interval $[0,a]$.}
\end{figure} 

\newpage

  \subsubsection{Internal dynamics over the interval $(a,b)$}
  It follows from \cite{ZL17} that 
   \begin{prop}\label{expProfile22} For $x\in (a,b)$, 
 \begin{align*}
 \phi(x;\nu)=&-\ln \nu+\phi_0(x)+\phi_1(x)\nu +O(\nu^2)\;\mbox{ with }\\
 \phi_0(x)=&\ln{R}-2\ln{B_0}+\ln{2},\\
 \phi_1(x)=&\phi_1^a-A_0+\frac{J_{20}}{2}(H(x)-H(a));\\
 c_{2}(x;\nu)=&\frac{1}{\nu}+\left(\frac{1}{2}A_0^2-J_{11}(H(x)-H(a))\right)\nu+O(\nu^2);\\
 \mu_{20}(x)=&-\ln{R}+2\ln{B_0}-\ln{2},\\
 \mu_{21}(x)=&-\phi_1^a+A_0-\frac{J_{20}}{2}(H(x)-H(a)),\\
 \mu_{22}(x)=&\frac{1}{2}A_0^2-J_{11}(H(x)-H(a));
 \end{align*}
 where
 \begin{align*}
 A_0=&\frac{\sqrt{e^{V}L}((1-\beta)L+\alpha R)}{(1-\beta)\sqrt{e^{V}L}+\alpha\sqrt{R}},\quad
 B_0=\frac{\sqrt{R}((1-\beta)L+\alpha R)}{(1-\beta)\sqrt{e^{V}L}+\alpha\sqrt{R}},\\
 A_1=&\frac{2(\beta-\alpha)^2(L-A_0)^2-\alpha^2(A_0^2-B_0^2)\ln{\frac{B_0L}{A_0R}}}{4(\beta-\alpha)((1-\beta)L+\alpha R)(L-A_0)}A_0B_0,\\
 B_1=&-\frac{(1-\beta)\left(2(\beta-\alpha)^2(L-A_0)^2-\alpha^2(A_0^2-B_0^2)\ln{\frac{B_0L}{A_0R}}\right)}{4\alpha(\beta-\alpha)((1-\beta)L+\alpha R)(L-A_0)}A_0B_0,\\
 \phi_1^a=&\frac{\ln{\frac{B_0}{R}}}{\ln{\frac{B_0L}{A_0R}}}\left(2\Big(\frac{B_1}{B_0}-\frac{A_1}{A_0}\Big)+\frac{\beta-\alpha}{\alpha}(L-A_0)\right)-2\frac{B_1}{B_0}-\frac{\beta}{\alpha}(L-A_0)+L.
 \end{align*}
 \end{prop}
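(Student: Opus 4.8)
The plan is to derive Proposition \ref{expProfile22} from the already-established Proposition \ref{expProfile21} on the same neck interval $(a,b)$, exploiting the fact that the electric potential $\phi$ is common to both ion species. First I would import the $\nu$-expansion of $\phi(x;\nu)$ from Proposition \ref{expProfile21} verbatim: the singular leading term $-\ln\nu$, the constant $\phi_0(x)=\ln R-2\ln B_0+\ln 2$, and $\phi_1(x)=\phi_1^a-A_0+\tfrac{J_{20}}{2}(H(x)-H(a))$ are identical here, and with them the connection constants $A_0,B_0,A_1,B_1,\phi_1^a$. Nothing about $\phi$ needs to be recomputed; only $c_2$ and $\mu_2$ are genuinely new.

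Next I would obtain $c_2$ from the $\varepsilon=0$ reduction of Poisson's equation in (\ref{PNP}). Setting $\varepsilon=0$ annihilates the $\tfrac{1}{h}(h\phi')'$ term and leaves the algebraic (local electroneutrality) relation $c_2-c_1=Q(x)$ on the slow manifold. Over $(a,b)$ the large permanent charge forces the counter-ion concentration to screen it, so the leading order of $c_2$ is $\nu^{-1}$, while the relation carries no $O(\nu)$ part and hence the $O(\nu)$ correction of $c_2$ must coincide with the $O(\nu)$ term of $c_1$ supplied by Proposition \ref{expProfile21}, where $c_{10}\equiv 0$ and $c_{11}(x)=\tfrac12 A_0^2-J_{11}(H(x)-H(a))$. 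This yields $c_2(x;\nu)=\nu^{-1}+(\tfrac12 A_0^2-J_{11}(H(x)-H(a)))\nu+O(\nu^2)$ directly, recording quantitatively that the counter-ion population is enormous, of order $Q_0$.

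The electrochemical potential then follows from the ideal relation $\mu_2=-\phi+\ln c_2$ (with $z_2=-1$). I would substitute the two expansions and Taylor-expand $\ln c_2=\ln\nu^{-1}+\ln(1+(\ldots)\nu^2+\cdots)$. The decisive step is that the singular $-\ln\nu$ coming from $-\phi$ cancels exactly against the $-\ln\nu$ in $\ln c_2$, so $\mu_2$ is $O(1)$: the $\nu^0$ term gives $\mu_{20}=-\phi_0=-\ln R+2\ln B_0-\ln 2$, the $\nu^1$ term gives $\mu_{21}=-\phi_1=-\phi_1^a+A_0-\tfrac{J_{20}}{2}(H(x)-H(a))$, and the $\nu^2$ term of the logarithm delivers $\mu_{22}=\tfrac12 A_0^2-J_{11}(H(x)-H(a))$. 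Reading these off, together with the inherited $A_0,B_0,A_1,B_1,\phi_1^a$, reproduces the stated formulas.

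The genuinely hard analytical content does not lie in this proposition but upstream, in Proposition \ref{expProfile21} and the geometric singular perturbation construction of \cite{ZL17}: building the outer solution to this order and, above all, fixing the connection constants $A_0,B_0,A_1,B_1,\phi_1^a$ by matching through the two fast internal layers at $x=a$ and $x=b$ to the outer solutions on $(0,a)$ and $(b,1)$ and to the boundary data $V,L,R$. If that input were unavailable, this layer matching (a connection problem across regions where $\phi$ and $c_k$ vary rapidly while $\mu_k$ stays nearly constant) would be the main obstacle. Granting it, the only point requiring care here is the cancellation of the singular $-\ln\nu$ terms, which is exactly the mechanism behind the declining phenomenon: it leaves $c_2\sim\nu^{-1}$ huge but $\mu_2'$ of order $O(\sqrt{R},\nu\ln R)$, so that their product in (\ref{approNP}) keeps $|J_2|$ small.
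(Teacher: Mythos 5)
Your overall strategy is sound and is genuinely different from what the paper does: the paper offers no argument for Proposition \ref{expProfile22} at all (it is quoted directly from \cite{ZL17}), whereas you derive it as a corollary of Proposition \ref{expProfile21} by (i) reusing the common potential $\phi$, (ii) invoking the exact electroneutrality relation $c_1-c_2+Q=0$ satisfied by the zeroth-order-in-$\varepsilon$ (outer) solution on the slow manifold, and (iii) computing $\mu_2=-\phi+\ln c_2$ and letting the two singular $-\ln\nu$ terms cancel. This is an economical route; it correctly explains why the expansion of $c_2$ jumps from $\nu^{-1}$ to $\nu^{1}$ with no $O(1)$ term, and you are right that the real analytic content (the layer matching at $x=a$ and $x=b$ that fixes $A_0,B_0,A_1,B_1,\phi_1^a$) sits upstream in Proposition \ref{expProfile21} and \cite{ZL17}.

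Two concrete loose ends remain. First, your step (ii), taken literally with the paper's normalization $Q(x)=2Q_0$ on $(a,b)$ and $\nu=1/Q_0$, gives $c_2=c_1+2/\nu=2/\nu+c_{11}(x)\nu+O(\nu^2)$, not $1/\nu+\cdots$ as stated; correspondingly $\mu_{20}=\ln 2-\phi_0=2\ln B_0-\ln R$, which is exactly what continuity of $\mu_2$ across the fast layer at $x=b$ against Proposition \ref{expProfile32} (where $\mu_{20}(b^+)=2\ln B_0-\ln R$) requires, rather than the stated $2\ln B_0-\ln R-\ln 2$. You assert that the leading term is $\nu^{-1}$ without computing it, but your own method exposes a factor-of-two bookkeeping discrepancy (most likely a convention mismatch between this paper and \cite{ZL17}) that must be resolved, not asserted away. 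Second, your claim that the $\nu^2$ term of the logarithm ``delivers'' $\mu_{22}$ is incomplete: $\mu_{22}$ also receives $-\phi_2$, the $O(\nu^2)$ coefficient of $\phi$, which Proposition \ref{expProfile21} does not supply. Your computation actually yields $\mu_{22}=c_{11}-\phi_2$ (or $\tfrac12 c_{11}-\phi_2$ under the $2/\nu$ normalization), so the stated formula follows only after you separately establish that $\phi_2\equiv 0$ on $(a,b)$. Neither issue invalidates the approach, but both must be settled before the derivation closes.
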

 
 The profiles of $c_2(x;\nu)$,  $\phi(x;\nu)$, and $\mu_2(x;\nu)$ are shown in Figure 9.
  \begin{figure}[htbp] 
\centering
  \begin{minipage}[t]{0.4\linewidth}
 \includegraphics[width=2.5in]{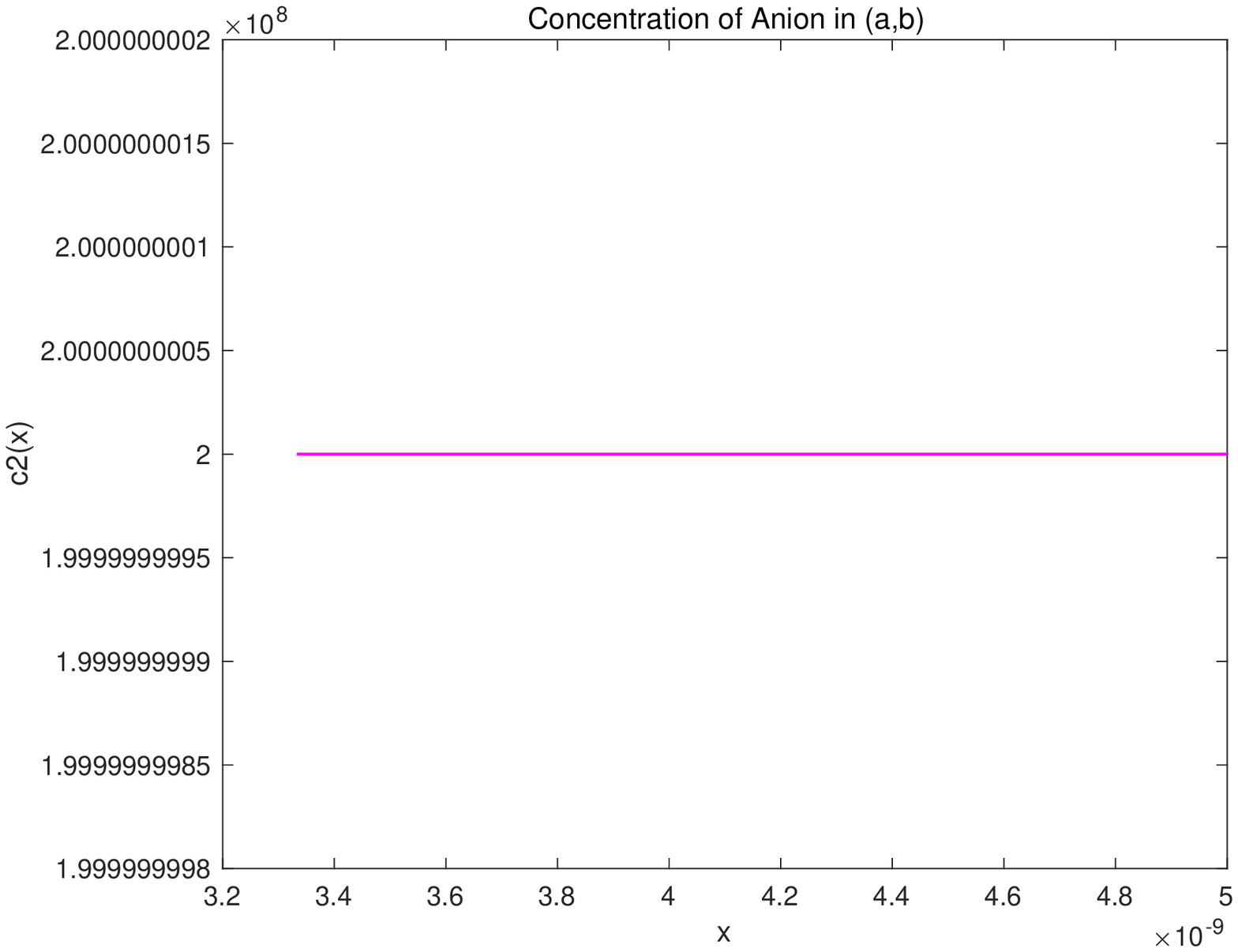}
\end{minipage}
\qquad
 \begin{minipage}[t]{0.4\linewidth}
 \includegraphics[width=2.5in]{phi_ab.eps}
\end{minipage}
\qquad
 \begin{minipage}[t]{0.4\linewidth}
 \includegraphics[width=2.5in]{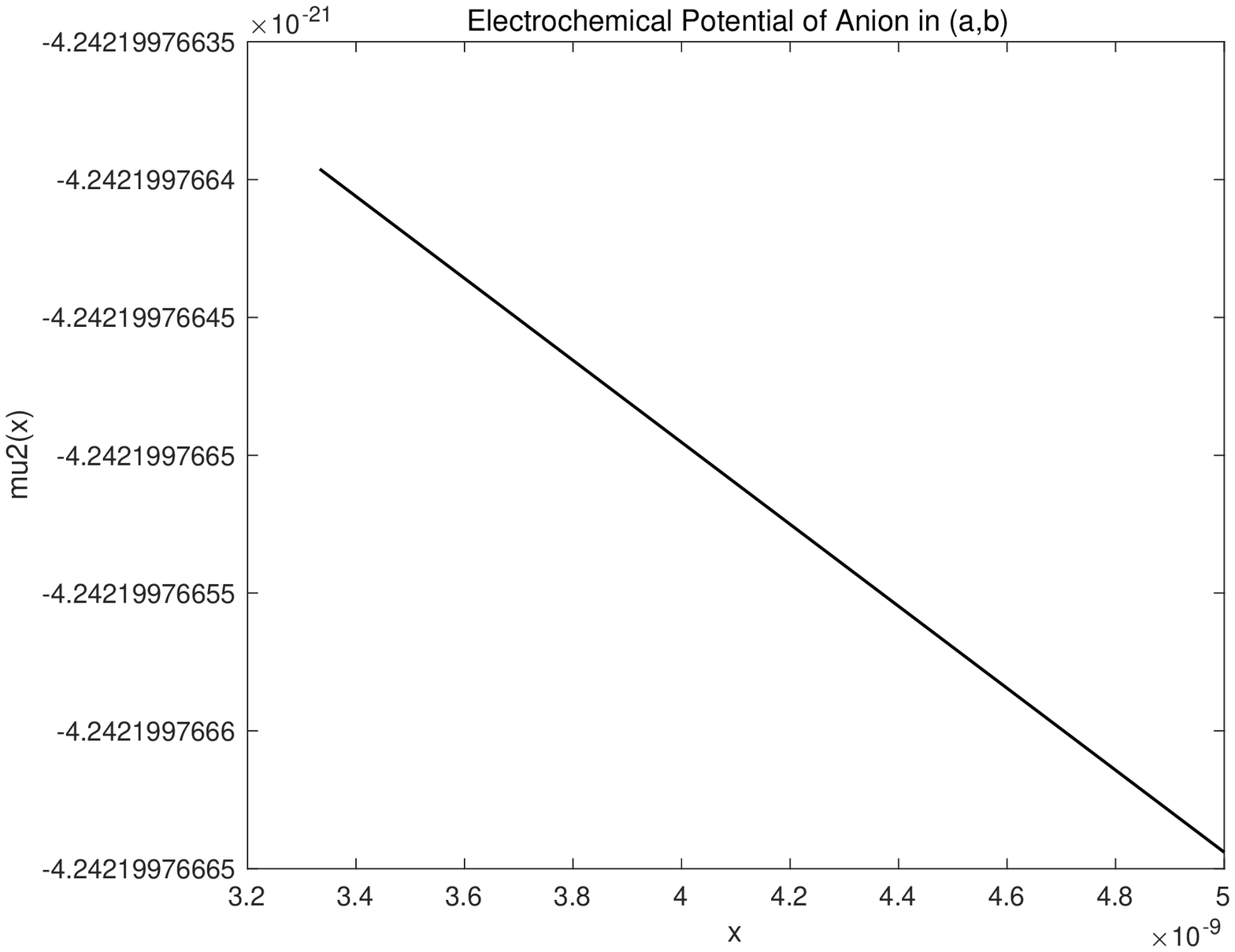}
\end{minipage}
\caption{Profiles of  $c_2(x;\nu)$,  $\phi(x;\nu)$, and $\mu_2(x;\nu)$ over interval $[a,b]$}
\end{figure} 

Note that, over this interval, $c_2(x;\nu)$ is {\em singular} in $\nu$. To understand what happens to the internal dynamics for anions over the interval $(a,b)$, we need to resolve this singularity by considering at least $O(\nu)$-terms.

 \begin{cor}\label{claim22} Over the interval $(a,b)$, 
   $c_2(x;\nu)=O(1/ \nu)\gg 1$ and $\; \mu_{2}(x;\nu)=O(\ln R)\gg 1$  BUT 
\[\mu_2'(x;\nu)\approx \mu_{21}'(x)\nu=-\frac{J_{20}}{2h(x)}\nu =O(\nu\sqrt{R}).\]
  Therefore,   from (\ref{approNP}),  
     \begin{align*} 
     J_2\approx O(1/\nu)O(\nu\sqrt{R})=O(\sqrt{R})\;\mbox{ and }\; \mu_2(x;\nu) \;\mbox{ only drops  }\; O(\nu\sqrt{R})\; \mbox{ over }\; (a,b).
      \end{align*}
         \end{cor}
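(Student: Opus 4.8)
The plan is to read every order claimed in the corollary off the $\nu$-expansions supplied by Proposition \ref{expProfile22}, to track their dependence on the \emph{second} small parameter $R$ through the explicit formulas for $A_0$, $B_0$ and $J_{20}$, and then to substitute the results into the reduced Nernst--Planck relation (\ref{approNP}). The two value estimates come first and are immediate: from $c_2(x;\nu)=\nu^{-1}+\bigl(\tfrac12 A_0^2-J_{11}(H(x)-H(a))\bigr)\nu+O(\nu^2)$ I record that the concentration is dominated by its singular leading term, so $c_2(x;\nu)=O(1/\nu)$ uniformly on $(a,b)$; here one checks $A_0\to L=O(1)$ as $R\to0$, which keeps the $O(\nu)$ correction genuinely subordinate. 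For the electrochemical potential I use $\mu_2=\mu_{20}+\mu_{21}\nu+O(\nu^2)$ with $\mu_{20}(x)=-\ln R+2\ln B_0-\ln2$; substituting $B_0=O(\sqrt R)$ and tracking the $\ln R$-order carefully then yields the magnitude of $\mu_2$ recorded in the statement.

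The decisive step is the gradient estimate, and the key structural observation is that the leading term $\mu_{20}(x)$ is \emph{independent of $x$} on $(a,b)$: it is assembled solely from $R$, $B_0$ and constants. Differentiation therefore annihilates it, and the gradient of $\mu_2$ begins only at order $\nu$,
\[\mu_2'(x;\nu)=\mu_{20}'(x)+\mu_{21}'(x)\nu+O(\nu^2)=-\frac{J_{20}}{2}H'(x)\,\nu+O(\nu^2),\]
because the $x$-dependence of $\mu_{21}(x)=-\phi_1^a+A_0-\tfrac{J_{20}}2(H(x)-H(a))$ lives only in its last term and $H'(x)=1/(D(x)h(x))=O(1)$. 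It then remains to show $J_{20}=O(\sqrt R)$ as $R\to0$; this I extract from the formula for $J_{20}$ in (\ref{ejJ}) --- or equivalently from the representation $J_{20}(w)$ in the proof of Proposition \ref{declining} with $R=w^2$ --- whose numerator carries a single factor $\sqrt R$ while the denominator tends to the nonzero constant $(1-\beta)\sqrt L$. Hence $\mu_2'(x;\nu)=O(\nu\sqrt R)$.

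Finally I combine the two estimates in (\ref{approNP}): the enormous factor $c_2=O(1/\nu)$ multiplies the tiny gradient $\mu_2'=O(\nu\sqrt R)$, the powers of $\nu$ cancel exactly, and one is left with $-J_2\approx c_2\,\mu_2'=O(\sqrt R)$, in agreement with the leading flux $J_{20}=O(\sqrt R)$; integrating the gradient shows $\mu_2$ drops only $\int_a^b\mu_2'\,dx=O(\nu\sqrt R)$ across $(a,b)$. The main obstacle is conceptual rather than computational: one must respect the \emph{singular} nature of the $\nu$-expansion of $c_2$ and deliberately pair its leading $1/\nu$ against the first \emph{nonvanishing} ($O(\nu)$) term of $\mu_2'$. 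Truncating $\mu_2$ at $\mu_{20}$ alone would wrongly predict a vanishing gradient, while keeping only the leading $1/\nu$ of $c_2$ without exploiting the $x$-independence of $\mu_{20}$ would obscure why the product remains small. This near-constancy of the electrochemical potential on $(a,b)$ --- a huge accumulated counter-ion concentration coexisting with a flat $\mu_2$ --- is exactly the mechanism that forces $J_2$ to stay $O(\sqrt R)$ there and underlies the declining phenomenon.
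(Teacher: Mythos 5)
Your proposal is correct and follows essentially the same route as the paper: Corollary \ref{claim22} is a direct read-off from Proposition \ref{expProfile22} combined with the asymptotics $A_0=O(1)$, $B_0=O(\sqrt{R})$ and $J_{20}=O(\sqrt{R})$ from (\ref{ejJ}), and your identification of the $x$-independence of $\mu_{20}$ on $(a,b)$ as the reason the gradient of $\mu_2$ first appears at order $\nu$ is exactly the intended mechanism. The one spot where you defer rather than compute is the assertion $\mu_2(x;\nu)=O(\ln R)\gg 1$: carrying out your own substitution $2\ln B_0=\ln R+\ln(Le^{-V})+o(1)$ into $\mu_{20}=-\ln R+2\ln B_0-\ln 2$ shows the logarithms cancel, so $\mu_{20}=\ln(Le^{-V}/2)+o(1)=O(1)$ --- consistent with $\mu_2(b)=O(1)$ in Corollary \ref{claim32}, and indicating that the largeness claimed here should be read as the scale of $\mu_2$ relative to the boundary value $\mu_2(1)=\ln R$ rather than of $\mu_2$ itself. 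This does not affect the substantive conclusions ($c_2=O(1/\nu)$, $\mu_2'=O(\nu\sqrt{R})$, hence $J_2=O(\sqrt{R})$ and a drop of only $O(\nu\sqrt{R})$ across $(a,b)$), all of which you establish correctly.
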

    
\begin{rem}\label{lastdrop} Note that $\mu_2(x)$ drops much less over the interval $(a,b)$ than its drop over the interval $(0,a)$. But both drops are small and contribute nearly nothing to the total drop $\mu_2(0)-\mu_2(1)=O(-\ln R)\gg 1$. The only way to realize the large total drop   
 $\mu_2(0)-\mu_2(1)$ is that   $\mu_2(x)$ drops A LOT over the subinterval $(b,1)$. Indeed, this is the case as claimed below. \qed
    \end{rem}
 
  \subsubsection{Internal dynamics over the interval $(b,1)$}
   \begin{prop}\label{expProfile32} For $x\in (b,1)$,
 \begin{align*}
 \phi(x;\nu)=& \phi_0(x)+ \phi_1(x)\nu+O(\nu^2)\;\mbox{ with }\\
 \phi_0(x)=&\ln R-\ln c_{20}(x),\\
 \phi_1(x)=&\phi_1^b-\frac{1}{2}\left(B_0^2-\frac{2B_1-B_0^2}{B_0}\right)-\frac{B_0c_{21}(x)-B_1c_{20}(x)}{B_0c_{20}(x)}+\frac{2J_{11}}{J_{20}}\ln{\frac{c_{20}(x)}{B_0}};\\
 c_{20}(x)=&B_0-\frac{J_{20}}{2}(H(x)-H(b)),\quad
 c_{21}(x)=B_1-\frac{J_{11}+J_{21}}{2}(H(x)-H(b));\\
 \mu_{20}(x)=&2\ln{c_{20}(x)}-\ln R,\quad
 \mu_{21}(x)=\frac{c_{21}(x)}{c_{20}(x)}-\phi_1(x);
 \end{align*}
 where
 \begin{align*}
\phi_1^{b}=&\frac{\ln{\frac{B_0}{R}}}{\ln{\frac{B_0L}{A_0R}}}\left(2\Big(\frac{B_1}{B_0}-\frac{A_1}{A_0}\Big)+\frac{\beta-\alpha}{\alpha}(L-A_0)\right)-2\frac{B_1}{B_0}+B_0.
 \end{align*}
 \end{prop}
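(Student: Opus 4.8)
The plan is to derive the $(b,1)$ profiles by reducing the singularly perturbed system (\ref{PNP}) to its slow dynamics on the zero--permanent--charge region and then expanding the reduced equations in $\nu=1/Q_0$. Setting $\varepsilon=0$ in the Poisson equation of (\ref{PNP}) forces electroneutrality $c_1(x)=c_2(x)=:c(x)$ on $(b,1)$, since $z_1=1$, $z_2=-1$ and $Q\equiv0$ there. Writing the ideal electrochemical potentials in dimensionless form as $\mu_1=\phi+\ln c_1$ and $\mu_2=-\phi+\ln c_2$, using $H'(x)=1/(D(x)h(x))$, and adding and subtracting the two Nernst--Planck relations $-J_k=D\,h\,c_k\,\mu_k'$, I obtain
\[
c'(x)=-\frac{J_1+J_2}{2}\,H'(x),\qquad \phi'(x)=\frac{J_1-J_2}{J_1+J_2}\,(\ln c)'(x).
\]
Integrating the first relation from $x=b$ gives $c(x)=c(b)-\tfrac12(J_1+J_2)(H(x)-H(b))$, and integrating the second gives $\phi(x)=\phi(b)+\frac{J_1-J_2}{J_1+J_2}\ln\frac{c(x)}{c(b)}$.

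First I would extract the leading order in $\nu$. By Proposition \ref{expJs}, $J_{10}=0$, so at $\nu=0$ the factor $\frac{J_1-J_2}{J_1+J_2}$ collapses to $-J_{20}/J_{20}=-1$; combined with the boundary values $\phi_0(1)=0$, $c_{20}(1)=R$ this yields $c_{20}(x)=B_0-\tfrac12 J_{20}(H(x)-H(b))$ and $\phi_0(x)=\ln R-\ln c_{20}(x)$, whence $\mu_{20}=-\phi_0+\ln c_{20}=2\ln c_{20}-\ln R$. Equivalently, $\phi_0=\ln R-\ln c_{20}$ is exactly the statement that the cation electrochemical potential is constant, $\mu_{10}\equiv\ln R$, as recorded in the companion Proposition \ref{expProfile31}. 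Substituting $J_1=J_{11}\nu+O(\nu^2)$, $J_2=J_{20}+J_{21}\nu+O(\nu^2)$, $c=c_{20}+c_{21}\nu+O(\nu^2)$ into the integrated $c$-equation and collecting the $O(\nu)$ terms produces $c_{21}(x)=B_1-\tfrac12(J_{11}+J_{21})(H(x)-H(b))$, while expanding $\phi(x)=\phi(b)+\frac{J_1-J_2}{J_1+J_2}\ln\frac{c}{c(b)}$ to first order in $\nu$ yields the displayed $\phi_1(x)$ with $\phi_1^b$ its value at $x=b$; finally $\mu_{21}=-\phi_1+c_{21}/c_{20}$ follows by expanding $\mu_2=-\phi+\ln c_2$.

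The remaining task, and the main obstacle, is to pin down the integration constants $B_0$, $B_1$ and $\phi_1^b$ carried in from the neck. These are fixed by matching the $(b,1)$ slow solution to the $(a,b)$ slow solution (Proposition \ref{expProfile22}) across the internal layer at $x=b$, together with the boundary data at $x=1$. The delicate point is that the anion concentration is singular, $c_2=O(1/\nu)$, on $(a,b)$ but only $O(1)$ on $(b,1)$, so the layer at $x=b$ is not a trivial continuation: one must analyze the fast (stretched) system there, show that $\mu_1$ is continuous across it while the anion profile reorganizes, and read off the matched values. Carrying out this layer analysis and the analogous one at $x=a$ simultaneously determines $A_0,A_1,\phi_1^a,B_0,B_1,\phi_1^b$ as the explicit functions of $(V,L,R,\alpha,\beta)$ in the statement; this is precisely the global matched-asymptotics computation established in \cite{ZL17}, which I would invoke to close the argument.
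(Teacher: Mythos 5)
Your proposal is essentially correct and follows the same route as the paper, which itself gives no proof of Proposition \ref{expProfile32} but recalls it from \cite{ZL17}: the $\varepsilon=0$ electroneutrality reduction $c_1=c_2$ on $(b,1)$, the integrated relations $c'=-\tfrac12(J_1+J_2)H'$ and $\phi'=\frac{J_1-J_2}{J_1+J_2}(\ln c)'$, and the regular expansion in $\nu$ using $J_{10}=0$ reproduce all the displayed formulas, with the constants $B_0,B_1,\phi_1^b$ supplied by the layer matching at $x=b$ that both you and the paper delegate to \cite{ZL17}. One small inaccuracy: $\phi_1^b$ is \emph{not} the value of $\phi_1$ at $x=b^+$ --- by the displayed formula $\phi_1(b^+)=\phi_1^b-\tfrac12\bigl(B_0^2-\frac{2B_1-B_0^2}{B_0}\bigr)$, the offset being the first-order jump of the potential across the fast layer at $x=b$ (the electrochemical potentials, not $\phi_1$, are what pass continuously through the layer), so this constant must also come out of the layer analysis rather than from identifying $\phi_1^b$ with the outer value at $b$.
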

 
 The profiles of $c_2(x;\nu)$,  $\phi(x;\nu)$, and $\mu_2(x;\nu)$ over $(b,1)$ are shown in Figure 10.
   \begin{figure}[htbp]\label{c2three}
\centering
  \begin{minipage}[t]{0.4\linewidth}
 \includegraphics[width=2.5in]{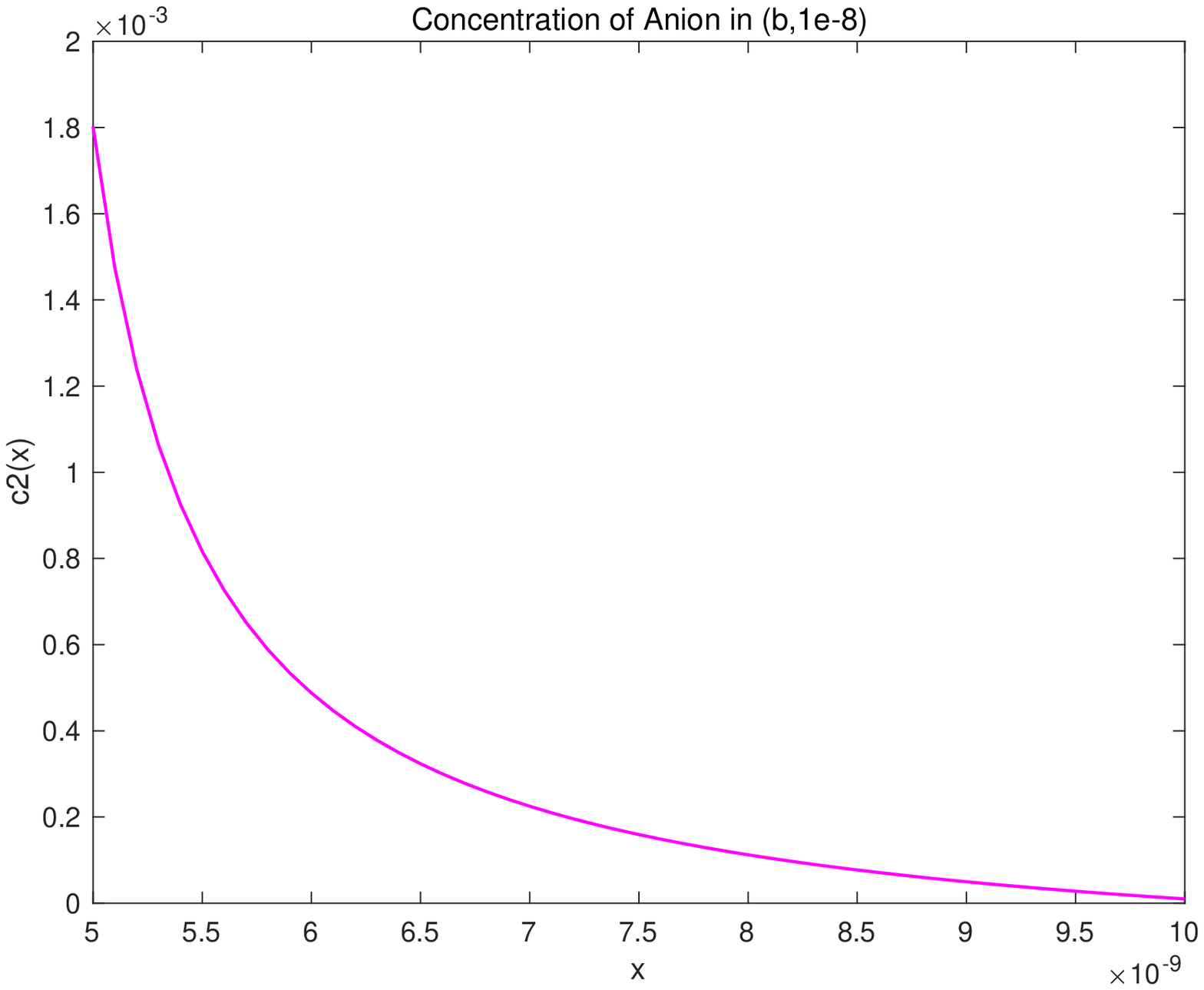}
\end{minipage}
\qquad
 \begin{minipage}[t]{0.4\linewidth}
 \includegraphics[width=2.5in]{phi_b1.eps}
\end{minipage}
 \qquad
 \begin{minipage}[t]{0.4\linewidth}
 \includegraphics[width=2.5in]{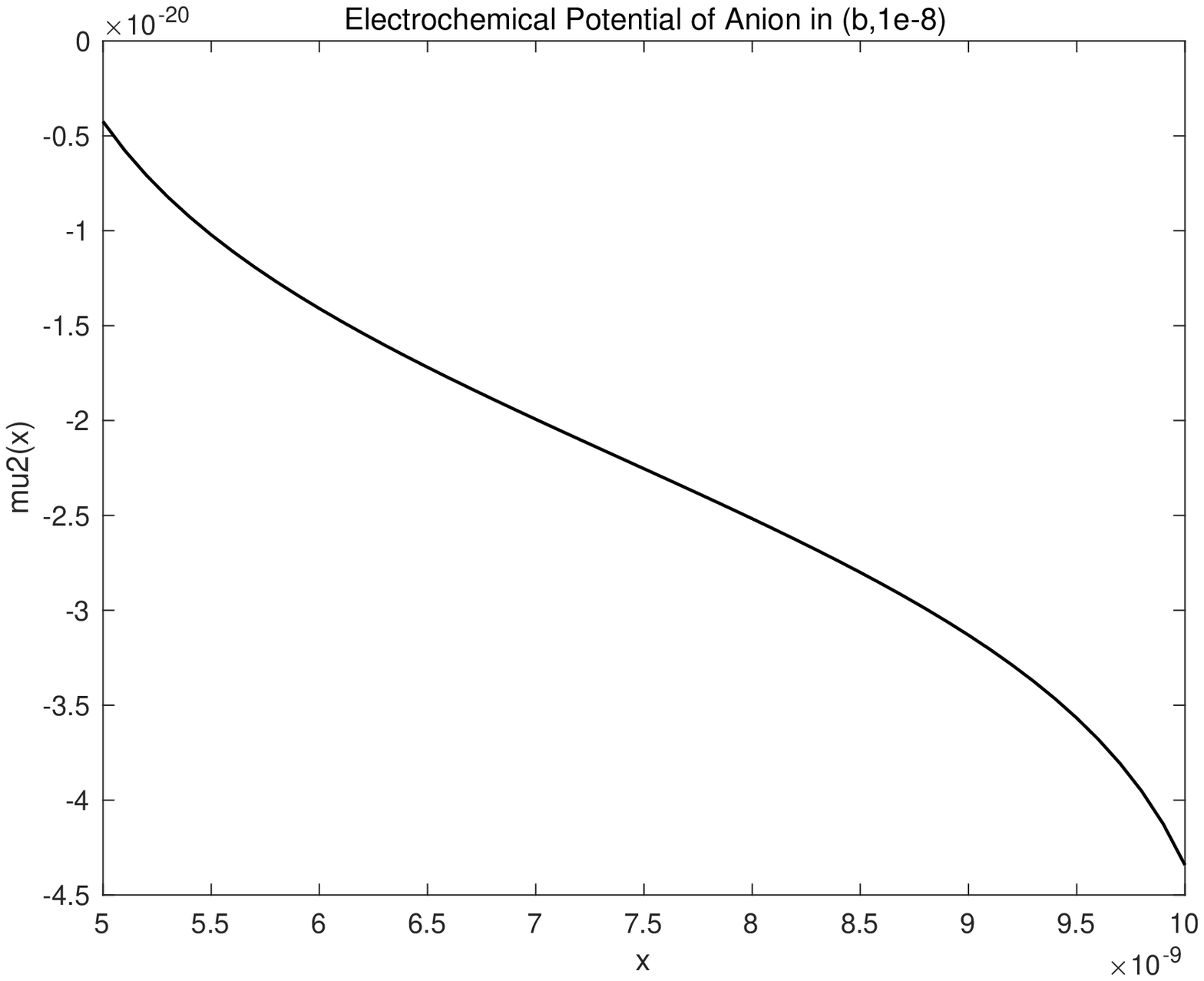}
\end{minipage}
\caption{Profiles of  $c_2(x;\nu)$, $\phi(x;\nu)$, and $\mu_2(x;\nu)$ over interval $[b,1]$}
\end{figure}

 \begin{cor}\label{claim32} Over the interval $(b,1)$,   noting $B_0=O(\sqrt{R})$,
  $c_{20}(x)$ changes from $c_{20}(b)=B_0=O(\sqrt{R})$ to $c_{20}(1)=R$ monotonically, and $\mu_2(x)$ changes from $\mu_2(b)=O(1)$ to $\mu_2(1)=\ln R$.
  
  Therefore, for $x\in (b,1)$, from (\ref{approNP}),   
     \begin{align*} 
     J_2\approx O (\sqrt{R}) \;\mbox{ and }\; \mu_2(x) \;\mbox{ drops }\; O(-\ln R)\; \mbox{ over }\; (b,1).
     \end{align*}
          \end{cor}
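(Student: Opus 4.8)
The plan is to read everything off the explicit profiles of Proposition \ref{expProfile32} together with the asymptotic size of $B_0$, and then to verify consistency with the Nernst--Planck balance (\ref{approNP}). First I would record the size of $B_0$ and of the flux. From the formula $B_0=\frac{\sqrt{R}((1-\beta)L+\alpha R)}{(1-\beta)\sqrt{e^{V}L}+\alpha\sqrt{R}}$ in Proposition \ref{expProfile32}, as $R\to 0$ the numerator is $\sqrt{R}\,((1-\beta)L+O(R))$ while the denominator tends to $(1-\beta)\sqrt{e^{V}L}=O(1)$, so $B_0=O(\sqrt{R})$ (indeed $B_0/\sqrt{R}\to\sqrt{e^{-V}L}$). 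In the same spirit the flux formula (\ref{ejJ}) gives $J_{20}=O(\sqrt{R})$: writing $R=w^2$ as in Proposition \ref{declining}, $J_{20}$ is $O(w)=O(\sqrt{R})$. Since $J_2$ is independent of $x$ by the conservation law $dJ_2/dx=0$, the estimate $J_2\approx J_{20}=O(\sqrt{R})$ holds over $(b,1)$ exactly as over every other subinterval.

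Next I would evaluate $c_{20}$ at the two endpoints. Because $c_{20}(x)=B_0-\tfrac{J_{20}}{2}(H(x)-H(b))$ is affine in $H(x)$ and $H$ is strictly increasing ($H'=1/(Dh)>0$), $c_{20}$ is monotone in $x$. At $x=b$ it equals $c_{20}(b)=B_0=O(\sqrt{R})$, and at $x=1$ the boundary condition $c_2(1)=R$ forces $c_{20}(1)=R$ (which one can also check directly from $c_{20}(1)=B_0-\tfrac{J_{20}}{2}H(1)(1-\beta)$ using $H(1)-H(b)=H(1)(1-\beta)$). Since $B_0=O(\sqrt{R})\gg R$ for small $R$, the slope is negative and $c_{20}$ decreases monotonically from $B_0$ down to $R$, as asserted.

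Then the behaviour of $\mu_2$ is immediate from $\mu_{20}(x)=2\ln c_{20}(x)-\ln R$. Using $\ln B_0=\tfrac12\ln R+O(1)$, at $x=b$ we get $\mu_{20}(b)=2\ln B_0-\ln R=\ln R+O(1)-\ln R=O(1)$, while at $x=1$, $\mu_{20}(1)=2\ln R-\ln R=\ln R$, matching the boundary value $\mu_2(1)=z_2\phi(1)+\ln R=\ln R$. The total drop over the interval is therefore $\mu_{20}(b)-\mu_{20}(1)=2\ln(B_0/R)=-\ln R+O(1)=O(-\ln R)\gg 1$, which is exactly the large drop demanded in Remark \ref{lastdrop}.

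Finally I would confirm consistency with (\ref{approNP}). Differentiating $\mu_{20}$ gives $\mu_{20}'=2c_{20}'/c_{20}=-J_{20}/(Dh\,c_{20})$, so that $c_{20}\mu_{20}'=-J_{20}/(Dh)=O(\sqrt{R})$ uniformly, in agreement with $-J_2\approx c_2\mu_2'$ and with $J_2=O(\sqrt{R})$. The only point requiring a little care is the bookkeeping of orders: confirming $B_0,J_{20}=O(\sqrt{R})$ and, crucially, that the product $c_{20}\mu_2'$ stays $O(\sqrt{R})$ even though $\mu_2'=O(1/\sqrt{R})$ blows up as $x\to 1$ where $c_{20}\to R$. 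No step is genuinely difficult; all the content lies in reading off the correct asymptotic sizes from Proposition \ref{expProfile32} and the flux formula (\ref{ejJ}), and in tracking signs so that the decrease of $\mu_2$ from $O(1)$ to $\ln R$ is correctly identified as a drop of order $O(-\ln R)$.
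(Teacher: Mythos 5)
Your proposal is correct and follows essentially the same route as the paper: the corollary is read off directly from the explicit profiles in Proposition \ref{expProfile32} (monotonicity of $c_{20}$ as an affine function of the increasing $H$, the endpoint values $B_0=O(\sqrt{R})$ and $R$, and $\mu_{20}=2\ln c_{20}-\ln R$ giving the $O(-\ln R)$ drop), combined with $J_{20}=O(\sqrt{R})$ from (\ref{ejJ}). Your extra check that $c_{20}\mu_{20}'=-J_{20}/(Dh)$ stays $O(\sqrt{R})$ even as $\mu_2'$ grows to $O(1/\sqrt{R})$ near $x=1$ is exactly the point emphasized in Remark \ref{difflast}.
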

          
 \begin{rem}\label{difflast}
    Note that, over this interval, the order of $\mu'_2(x)$ varies in $x$ from $\mu'_2(b)=O(1)$ to $\mu'_2(1)=O(1/{\sqrt{R}})$ but overall drops  is $O(-\ln R)$. This is different from what happened over the intervals $(0,a)$ and $(a,b)$. \qed
  \end{rem}

     \subsection{Summary of mechanism for declining phenomenon.}

In summary,  with the technical assumption that $\nu\ln R\le \sqrt{R}$, we have,   as $R\to 0$,   $J_2=O(\sqrt{R})$   over the whole interval $(0,1)$   but with completely DIFFERENT scenarios over different subintervals $(0,a)$, $(a,b)$ and $(b, 1)$. More precisely,  
  \begin{itemize}
  \item[(i)]  over the subinterval $(0,a)$, one has $c_2(x)=O(1)$ but $\mu_2'(x)=O(\sqrt{R})\ll 1$ so that $J_2=O(\sqrt{R})$; ({\em Note that   the drop of $\mu_2$ over the interval $(0,a)$ is of order $O(\sqrt{R})$, which has nearly no contribution to the drop of $\mu_2$ over the whole interval $(0,1)$.})
    \item[(ii)] over the subinterval $(a,b)$, $c_2(x)=O(1/{\nu})$ but $\mu_2'(x)=O(\nu\sqrt{R})\ll 1$
      so that $J_2=O(\sqrt{R})$; ({\em Note that  the drop of $\mu_2$ over the interval $(a,b)$ is of order $O(\nu\sqrt{R})$, which is even smaller than that over the subinterval $(0,a)$ and, of course, has nearly no contribution to the drop of $\mu_2$ over the whole interval $(0,1)$.})
\item[(iii)] over the subinterval $(b,1)$,  different from what happened over each of the previous two subintervals, the orders of $c_2(x)$ and $\mu_2'(x)$ are NOT uniform in $x\in (b,1)$ but 
the drop of $\mu_2(x)$ of $O(\ln R)$ is fully realized over this subinterval (see Remark \ref{difflast}).
\end{itemize}

Here we provide the profiles of concentration (Fig.~11), electric field (Fig.~12) and electrochemical potential (Fig.~13) of the anion over the whole interval $[0,1]$.
 The figures of $c_k(x)$ and $\phi(x)$ over interval $(0,x_0)$ are not continue, because we make the plots of system \eqref{ssPNP} with $\varepsilon=0$. For the limiting system at $x=a$ and $x=b$, there are two fast layers, $c_k(x)$ and $\phi(x)$ changes very fast, but $\mu_k(x)$ keeps the same value in fast layers.
  Recall that $\mu_2^\delta=\mu_2(0)-\mu_2(1)=k_B T(-V+\ln{L}-\ln{R})$, if $R=10^{-5}$,  then $\mu_2^\delta \sim 2.4780\times 10^{-31}$, that's why in the last figure, $ \mu_2(x)$ can not reach the infinity.

   \begin{figure}[htbp]\label{c2}
\centering
  \begin{minipage}[t]{0.45\linewidth}
 \includegraphics[width=3in]{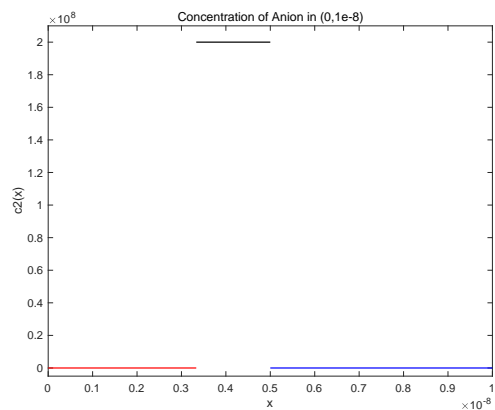}
\end{minipage}
\caption{Profile  $c_2(x;\nu)$  over $[0,1]$}
\end{figure}

\begin{figure}[htbp]\label{phi}
\centering
 \begin{minipage}[t]{0.45\linewidth}
 \includegraphics[width=3in]{phi_01.eps}
\end{minipage}
\caption{Profile of $\phi(x;\nu)$ over $[0,1]$ }
\end{figure}
 \begin{figure}[htbp]\label{mu2}
\centering
  \begin{minipage}[t]{0.45\linewidth}
 \includegraphics[width=3in]{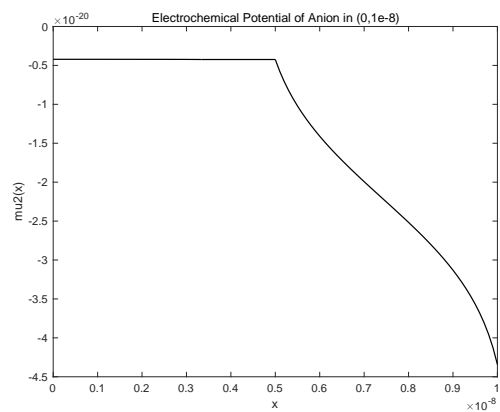}
\end{minipage}
\caption{Profile of $\mu_2(x;\nu)$ over   $[0,1]$}
\end{figure}

\newpage
\section{Concluding remarks}\label{conclude}
In this work, we examine effects of large permanent charges on ionic flow through ion channels  based on a 
quasi-one dimensional Poisson-Nernst-Planck model.  
We show that one of the defining properties of transporters, obligatory exchange, can arise in an open channel with just one structure. When the permanent charge is large, the current carried by  counter ions, majority charge carriers with the opposite sign as the permanent charge, can decline, even to zero, as the driving force (the gradient of electrochemical potential) increases. We also show that   large permanent charges essentially inhibit the flux of co-ions, regardless of the magnitude of transmembrane electrochemical potential.

  \medskip

\noindent
{\bf Acknowledgements.} It is a pleasure to thank Mordy Blaustein, Don Hilgemann and Ernie Wright for help with
 the literature of transporters and Chris Miller for help with the original formulation of the declining phenomenon  in Section \ref{Dyn4declin}.
 LZ thanks the University of Kansas for its hospitality during her visit from Oct. 2016-Oct. 2017 when this research is conducted. 
 LZ is partially supported by NNSF of China grants no. 11431008 and no. 11771282, and the Joint Ph.D. Training Program sponsored by the Chinese Scholarship Council.
   
 \bibliographystyle{plain}

\end{document}